\documentclass[12pt, conference, final,
onecolumn,technote,a4paper]{IEEEtran}
\usepackage{latexsym}
\usepackage{tcolorbox}
\usepackage{enumitem}
\usepackage{todonotes}
\usepackage[numbers,square,comma,compress]{natbib}
\usepackage{mathrsfs}
\usepackage[normalem]{ulem}
\usepackage{url}
\usepackage{amsfonts,amssymb,amsmath,amsthm,bm}
\usepackage{xspace}
\usepackage{xcolor}
\usepackage{hyperref}
\usepackage{cleveref}
\newtheorem{theorem}{Theorem}
\newtheorem{assumption}{Assumption}

\newtheorem{definition}[theorem]{Definition}

\newtheorem{corollary}{Corollary}
\newtheorem{lemma}[theorem]{Lemma}

\newtheorem{construction}{Construction}

\newcommand{\cproblem}[1]{\ensuremath{\mathsf{#1}}\xspace}
\newcommand{\LWE}{\cproblem{LWE}}

\newcommand{\rand}{\xleftarrow{\$}}

\newcommand{\ABE}{\cproblem{ABE}}
\newcommand{\lhltr}{\cproblem{LHL}\text{-}\cproblem{Trap}}
\newcommand{\EXP}{\cproblem{Exp}}

\newcommand{\NC}{\cproblem{NC}}

\newcommand{\LSSS}{\cproblem{LSSS}}

\newcommand{\gameitem}[1]{\vspace{0.5em}\textbf{\underline{#1}}\vspace{0.5em}}

\newcommand{\pk}{\cproblem{pk}}
\newcommand{\msk}{\cproblem{msk}}

\newcommand{\keygen}{\cproblem{Keygen}}
\newcommand{\sk}{\cproblem{sk}}

\newcommand{\pp}{\cproblem{pp}}
\newcommand{\setup}{\cproblem{Setup}}

\newcommand{\msg}{\cproblem{msg}}

\newcommand{\enc}{\cproblem{Enc}}
\newcommand{\ct}{\cproblem{ct}}
\newcommand{\dec}{\cproblem{Dec}}
\renewcommand{\H}{\cproblem{H}}
\newcommand{\ppt}{\cproblem{PPT}}

\newcommand{\A}{\mathcal{A}}
\newcommand{\pr}{\mathrm{Pr}}

\newcommand{\trapgen}{\cproblem{TrapGen}}
\newcommand{\samplepre}{\cproblem{SamplePre}}
\newcommand{\td}{\cproblem{td}}

\newcommand{\Adv}{\cproblem{Adv}}

\newcommand{\ver}{\cproblem{Ver^{mx}}}
\newcommand{\up}{\cproblem{up}}
\newcommand{\down}{\cproblem{down}}
\newcommand{\Com}{\cproblem{Com^{mx}}}
\newcommand{\Open}{\cproblem{Open^{mx}}}

\newcommand{\iitem}[1]{\bigskip\noindent\textbf{#1.}\ }

\newcommand{\KPABE}{\cproblem{KP}\text{-}\cproblem{ABE}}
\newcommand{\CPABE}{\cproblem{CP}\text{-}\cproblem{ABE}}

\DeclareMathOperator{\poly}{poly}
\DeclareMathOperator{\negl}{negl}

\newcommand{\matA}{\ensuremath{\mathbf{A}}}
\newcommand{\matB}{\ensuremath{\mathbf{B}}}
\newcommand{\matC}{\ensuremath{\mathbf{C}}}
\newcommand{\matD}{\ensuremath{\mathbf{D}}}

\newcommand{\matG}{\ensuremath{\mathbf{G}}}
\newcommand{\matH}{\ensuremath{\mathbf{H}}}
\newcommand{\matI}{\ensuremath{\mathbf{I}}}

\newcommand{\matM}{\ensuremath{\mathbf{M}}}
\newcommand{\matN}{\ensuremath{\mathbf{N}}}

\newcommand{\matQ}{\ensuremath{\mathbf{Q}}}
\newcommand{\matR}{\ensuremath{\mathbf{R}}}
\newcommand{\matS}{\ensuremath{\mathbf{S}}}
\newcommand{\matT}{\ensuremath{\mathbf{T}}}
\newcommand{\matU}{\ensuremath{\mathbf{U}}}
\newcommand{\matV}{\ensuremath{\mathbf{V}}}
\newcommand{\matW}{\ensuremath{\mathbf{W}}}

\newcommand{\matZ}{\ensuremath{\mathbf{Z}}}

\newcommand{\vecb}{\ensuremath{\mathbf{b}}}
\newcommand{\vecc}{\ensuremath{\mathbf{c}}}
\newcommand{\vecd}{\ensuremath{\mathbf{d}}}
\newcommand{\vece}{\ensuremath{\mathbf{e}}}

\newcommand{\vecg}{\ensuremath{\mathbf{g}}}

\newcommand{\veck}{\ensuremath{\mathbf{k}}}
\newcommand{\vecn}{\ensuremath{\mathbf{n}}}

\newcommand{\vecr}{\ensuremath{\mathbf{r}}}
\newcommand{\vecs}{\ensuremath{\mathbf{s}}}
\newcommand{\vect}{\ensuremath{\mathbf{t}}}
\newcommand{\vecu}{\ensuremath{\mathbf{u}}}
\newcommand{\vecv}{\ensuremath{\mathbf{v}}}

\newcommand{\vecx}{\ensuremath{\mathbf{x}}}
\newcommand{\vecy}{\ensuremath{\mathbf{y}}}
\newcommand{\vecz}{\ensuremath{\mathbf{z}}}
\newcommand{\veczero}{\ensuremath{\mathbf{0}}}

\newcommand{\N}{\ensuremath{\mathbb{N}}}

\newcommand{\R}{\ensuremath{\mathbb{R}}}

\newcommand{\Z}{\ensuremath{\mathbb{Z}}}

\begin{document}
\title{Ciphertext-Policy $\ABE$ for $\NC^1$ Circuits with Constant-Size Ciphertexts from Succinct $\LWE$}
\author{
  Jiaqi Liu,
  Yuanyi Zhang, 
  Fang-Wei Fu\thanks{Jiaqi Liu, Yuanyi Zhang and Fang-Wei Fu are with Chern Institute of Mathematics and LPMC, Nankai University, Tianjin 300071, P. R. China, Emails: ljqi@mail.nankai.edu.cn, yuanyiz@mail.nankai.edu.cn, fwfu@nankai.edu.cn.}}
  
\maketitle
\begin{abstract}
    We construct a lattice-based ciphertext-policy attribute-based encryption ($\CPABE$) scheme for $\NC^1$ access policies with constant-size ciphertexts.  Let $\lambda$ be the security parameter. For an $\NC^1$ circuit of depth $d$ and size $s$ on $\ell$-bit inputs, our scheme has the public-key and ciphertext sizes $O(1)$  (independent of $d$), and secret-key size $O(\ell)$, where the $O(\cdot)$ hides $\poly(\lambda)$ factors. As an application, we obtain a broadcast encryption scheme for $N$ users with ciphertext size $\poly(\lambda)$ independent of $\log N$ and key sizes $\poly(\lambda,\log N)$. Our construction is selectively secure in the standard model under the $\poly(\lambda)$-succinct \LWE\ assumption introduced by Wee (CRYPTO~2024). 
\end{abstract}

\section{Introduction}
In a traditional public-key encryption scheme, only a user who possesses the secret key corresponding to the public key used for encryption can decrypt the ciphertext. As a generalization of public-key encryption, \emph{attribute-based encryption} (\ABE) realizes fine-grained access control over encrypted data. There are two types of $\ABE$ schemes: \emph{ciphertext-policy} \ABE ($\CPABE$) and \emph{key-policy} \ABE ($\KPABE$). In a $\CPABE$ scheme, each secret key $\sk_{\vecx}$ is associated with an attribute $\vecx\in\{0,1\}^{\ell}$, and each ciphertext $\ct_f$ is associated with an access policy $f$ that specifies which attribute vectors are authorized to decrypt it. In contrast, in a $\KPABE$ scheme, each secret key $\sk_f$ is associated with an access policy $f$, and each ciphertext $\ct_{\vecx}$ is associated with an attribute $\vecx$. In both cases, decryption succeeds only if the attribute satisfies the access policy, i.e., $f(\vecx)=0$; otherwise, the decryptor learns no information about the underlying plaintext.

The concept of \ABE was first proposed independently by Sahai and Waters~\cite{SW05} and by Goyal et al.~\cite{GPSW06}. Since then, \ABE has become a fundamental cryptographic primitive, and many concrete schemes have been proposed. A large fraction of existing \ABE constructions rely on assumptions over bilinear groups. However, due to known quantum attacks on bilinear-group-based assumptions, there has been substantial interest in constructing schemes from alternative assumptions. In particular, many subsequent \ABE schemes base their security on the \LWE assumption or its variants, which are believed to be resistant to quantum attacks. In \cite{AFV11,Boy13,GVW13,BGG+14,GV15,BV16,Tsa19,AMY20,CW23}, a series of $\KPABE$ schemes were proposed, all based on the $\LWE$ assumption.  %The first $\LWE$-based $\CPABE$ scheme avoiding the generic tranformation was proposed much later than the first $\LWE$-based $\KPABE$ scheme. Schemes such as \cite{}

A generic transformation can convert a $\KPABE$ scheme into a $\CPABE$ one via a universal circuit. However, this transformation typically leads to significantly worse parameters. To optimize the parameters of $\CPABE$ scheme, Agrawal and Yamada \cite{AY20b} proposed a $\CPABE$ scheme for all $\NC^{1}$ circuits with ciphertext size independent of the circuit size $s$, relying on the $\LWE$ and generic bilinear groups. Datta et al. \cite{DKW21a} proposed a $\CPABE$ scheme for all $\NC^{1}$ circuits via linear secret sharing schemes, relying on the hardness of the $\LWE$ assumption. Subsequently, there are many $\CPABE$ schemes relying on the nonstandard $\LWE$-related assumptions. Wee \cite{Wee22} proposed a $\CPABE$ scheme based on the evasive $\LWE$ assumption and tensor $\LWE$ assumption. Hsieh et al.~\cite{HLL24} presented a $\CPABE$ scheme based on the (structured) evasive $\LWE$ assumption. Agrawal et al.~\cite{AKY24} proposed a $\CPABE$ scheme for all Turing machines based on the circular evasive $\LWE$ and tensor $\LWE$. Wee \cite{Wee24,Wee25} presents $\CPABE$ scheme for all circuits based on the weaker $\poly(\lambda,d)$-succinct $\LWE$ assumption with ciphertext size independent of the attribute length $\ell$.

\subsection{Our results}
We list our results as below:

1. In this work, we present a $\CPABE$ scheme for access policies represented by $\NC^1$ circuits. For circuits  on $\ell$-bit inputs with depth $d$ and size $s$ (hence $s=\poly(\ell)$ and $d=\log(\ell)$) where $s,d,\ell$ are fixed at setup.  We construct a $\CPABE$ scheme with parameters:
$$|\pk|=O(s),\quad |\sk|=O(\ell),\quad |\ct|= O(1),$$ where the notation $O(\cdot)$ hides $\poly(\lambda)$ factors. It proves to be selectively secure against unbounded collusions. The security relies on the $\poly(\lambda)$-succinct $\LWE$ assumptions with sub-exponential modulus-to-noise ratio.

\begin{itemize}
    \item The scheme we construct is a \emph{succinct} $\CPABE$. Namely, the ciphertext size in our scheme relies only on the security parameter $\lambda$ and independent of the size $s$, the input length $\ell$, and the depth $d$, which is asymptotically optimal up to $\poly(\lambda)$ factors.
    \item The size of the public parameters scales with the circuit size $s$. Though it is seen as not competitive compared with previous schemes (many schemes have public key size $O(\ell)$, such as \cite{BV22,Wee22,HLL24}), we observe that in the public parameters, all elements are sampled uniformly except for the public parameters used in the succinct $\LWE$ instance (i.e., $\pp=(\matB,\matW,\matT)$), which is of size $\poly(\lambda)$. Therefore, we can generate the purely uniform part with a pseudorandom generator instead. We can then compress the size of the public parameters into $O(1)$ size.

    \item The bound on the size of $\sk$ is not tight. It depends on the number of attributes associated with the secret key rather than strictly on $\ell$. In scenarios where the users possesses only $O(1)$ attributes, the secret-key size can be reduced to $O(1)$.
\end{itemize}

2. $\CPABE$ for $\NC^1$ circuits gives a broadcast encryption for $N=\ell$ users as a special case. In the broadcast encryption, we can use a circuit of size $O(N\log N)$ and depth $O(\log N)$ to check the membership of the users. We obtain a broadcast encryption scheme for $N$ users with parameters 
$$|\pk|=N\cdot \poly(\lambda,\log N),\quad |\sk|=\poly (\lambda,\log N),\quad |\ct|=\poly(\lambda).$$
By using a pseudorandom generator to compress the public parameter, we can reduce $|\pk|$ to $\poly(\lambda)$. Since $|\pk|+|\ct|+|\sk|=\poly(\lambda,\log N)$, we obtain an optimal broadcast encryption (under $\poly(\lambda)$-succinct $\LWE$ assumption).

3. In the proof of the selective security of our construction, we use the technique of using the trapdoor $\matT$ for $[\matI_{\ell}\otimes\matB\mid\matW]$ in the $\ell$-succinct $\LWE$ assumption to generate other trapdoors for the matrix of the form $[\matI_{\ell'}\otimes \matB\mid\matW']$, where $\matW'$ is statistically close to uniform.
 
\subsection{Related Works}
\iitem{Comparison with \cite{Wee25}} We use the $\{0,1\}$-$\LSSS$ technique from \cite{DKW21a} and the matrix commitment scheme from \cite{Wee25} to construct our $\CPABE$ scheme for $\NC^1$ scheme and obtain an optimal broadcast encryption scheme. Compared with \cite{Wee25}, our scheme supports a narrower class of access policies and incurs a larger secret-key size. Nevertheless, we improve the parameters in the following respects: (1) $|\pk|$, $|\sk|$, and $|\ct|$ are independent of the circuit depth $d$; (2) the resulting broadcast encryption has ciphertext size $|\ct|$ independent of $\log N$, where $N$ is the number of users in the system; (3) during encryption, we execute only a single matrix commitment on $O(\ell)$-width inputs, rather than running a circuit commitment scheme on an $\ell$-input circuit that composes a sequence of matrix commitments.

\subsection{Paper Organization}
The paper is organized as follows. Section~\ref{technique} provides a technical overview of our main construction and ideas. Section~\ref{sec:pre} introduces necessary notations, basic concepts of lattices, the \LWE assumption, and its variants. In Section~\ref{sec:CPABEnotion}, we formalize the linear secret sharing scheme ($\LSSS$) and ciphertext-policy attribute-based encryption ($\CPABE$) for access structures realized by $\LSSS$. Section~\ref{sec:scheme} presents our construction of an $\CPABE$ scheme and its security analysis. 

\section{Technical Overviews}\label{technique}
In this section, we demonstrate a high-level overview of our construction. Throughout this section, we adopt the wavy underline to indicate that a term is perturbed by some noise. For example, we use the notation $\smash{\uwave{\vecs^\top\matA}}$ to denote the term $\vecs^\top\matA+\vece$ for an unspecified noise vector $\vece$.  For a discrete set $S$, we write $\vecx\rand S$ to denote that $\vecx$ is uniformly sampled from the set $S$. The notation $\matA^{-1}(\vecy)$ means sampling a short preimage $\vecx$ from discrete Gaussian distribution such that $\matA\vecx=\vecy$. For a positive integer $n\in \N$, denote $[n]$ as the set $\{k\in\N:1\leq k\leq n\}$. We use $\overset{c}{\approx}$ to denote computational indistinguishability. We defer the formal definitions of all notations to Section~\ref{sec:pre}, and here use them informally for simplicity.

\iitem{$\ell$-succinct $\LWE$} The security of our construction relies on the $\ell$-succinct $\LWE$ assumption introduced by~\cite{Wee24}. More formally, let $n,q,m=O(n\log q)$ be lattice parameters. First, sample $\matB\rand\Z_{q}^{n\times m},\matW\rand\Z_{q}^{\ell n\times m}$, and a Gaussian $\matT\in\Z^{(\ell+1)m\times \ell m}$ such that $[\matI_{\ell}\otimes \matB\mid \matW]\cdot\matT=\matI_{\ell}\otimes \matG$. The $\ell$-succinct $\LWE$ assumption asserts that $$(\matB,\uwave{\vecs^\top\matB},\matW,\matT)\overset{c}{\approx}(\matB,\vecc,\matW,\matT),$$ where $\vecs\rand\Z_q^n,\vecc\rand\Z_q^m$. Throughout the paper, we denote $\pp:=(\matB,\matW,\matT)$. The $\ell$-succinct $\LWE$ assumption is a falsifiable assumption implied by evasive $\LWE$ introduced in~\cite{Wee22}. As the parameter $\ell$ increases, the assumption becomes stronger. We note that the $1$-succinct $\LWE$ assumption is equivalent to the standard $\LWE$ assumption. 

\subsection{Our schemes}
We begin by describing the $\CPABE$ scheme in \cite{DKW21a}, which supports access structures realized by a linear secret sharing scheme ($\LSSS$). 

\iitem{DKW21 scheme} Let $\mathbb{U}$ denote the attribute universe, let $N:=|\mathbb{U}|$, and identify $\mathbb{U}$ with the set $[N]$.
\begin{itemize}
    \item The public key is given by $(\{\matA_u\}_{u\in \mathbb{U}},\{\matQ_u\}_{u\in\mathbb{U}},\vecy)$, where $\matA_u,\matQ_u\rand\Z_q^{n\times m}$ and $\vecy\rand\Z_q^n$.
    \item The master secret key is given by $\{\td_{\matA_{u}}\}_{u\in \mathbb{U}}$, where $\td_{\matA_u}$ is the trapdoor for the matrix $\matA_u$ associated with the attribute $u$. This trapdoor allows efficient sampling of short preimages for $\matA_u$.
    \item Given an attribute set $U\subseteq\mathbb{U}$, the secret key associated with $U$ is sampled as follows: Sample $\hat{\vect}\leftarrow \chi^{m-1}$, where $\chi$ is some error distribution, set $\vect=(1,\hat{\vect}^\top)^\top$ and sample $\veck_u\leftarrow \matA_u^{-1}(-\matQ_u\vect)$. Then the secret key is given by $(\{\veck_u\}_{u\in U},\vect)$.
    \item To encrypt a message $\msg\in\{0,1\}$ under the access structure determined by matrix $\matM\in\{-1,0,1\}^{\ell\times s_{\max}}$, the ciphertext is given as follows: Assume that $\rho$ is an injective function that maps the row indices of $\matM$ to the attribute. First sample $\vecs\rand\Z_q^n$ and $\vecv_2,\ldots,\vecv_{s_{\max}}\in\Z_q^m$. The ciphertext components are computed as \begin{align*}
\vecc_{1,i}^\top&=\uwave{\vecs^\top\matA_{\rho(i)}},\\
\vecc_{2,i}^\top&=M_{i,1}(\vecs^\top\vecy,\underbrace{0,\ldots,0}_{m-1})+\sum_{j\in\{2,\ldots,s_{\max}\}}M_{i,j}\vecv_j^\top+\uwave{\vecs^\top\matQ_{\rho(i)}},\\
c_3&=\uwave{\vecs^\top\vecy}+\msg\cdot\lceil q/2\rfloor.
    \end{align*}
    Then the ciphertext is given as $(\{\vecc_{1,i}\}_{i\in [\ell]},\{\vecc_{2,i}\}_{i\in [\ell]},c_3)$.

\item Suppose that the secret key is associated with an authorized attribute set. Let $I$ be the set of row indices of $\matM$ corresponding to the attributes associated with the secret key. Let $\{w_i\}_{i\in I}\in\{0,1\}$ be the reconstruction coefficients in the $\LSSS$ scheme with respect to the matrix $\matM$. The decryption algorithm proceeds as follows: \begin{align*}
    \sum_{i\in I}w_i(\vecc_{1,i}^\top\veck_{\rho(i)}+\vecc_{2,i}^\top\vect)&\approx\sum_{i\in I}w_i(\vecs^\top\matA_{\rho(i)}\veck_{\rho(i)})+\sum_{i\in I}w_iM_{i,1}
(\vecs^\top\vecy,0,\ldots,0)\vect\\
&+\sum_{i\in I,j\in \{2,\ldots,s_{\max}\}}w_iM_{i,j}\vecv_j^\top\vect+\sum_{i\in I}w_i\vecs^\top\matQ_{\rho(i)}\vect\\
&=\sum_{i\in I}w_iM_{i,1}\vecs^\top\vecy+\sum_{i\in I,j\in \{2,\ldots,s_{\max}\}}w_iM_{i,j}\vecv_j^\top\vect
\end{align*}
Since the reconstruction coefficients $\{w_i\}_{i\in I}$ are such that $\sum_{i\in I}w_iM_{i,1}=1$ and $\sum_{i\in I}w_i M_{i,j}=0$ for $j\in \{2,\ldots,s_{\max}\}$. Hence $\sum_{i\in I}w_i(\vecc_{1,i}^\top\veck_{\rho(i)}+\vecc_{2,i}^\top\vect)\approx \vecs^\top\vecy$, and we can recover $\msg$ by subtracting it from $c_3$ and decoding.
\end{itemize}

From the above description, the ciphertext size scales with $\ell$, i.e., the number of rows of the matrix $\matM$. We expect the ciphertext size to be independent of $\ell$, and we use the matrix commitment technique in~\cite{Wee25} to compress the ciphertext. Precisely, we replace the matrices $\{\matA_{u}\}_{u\in\mathbb{U}}\rand\Z_q^{n\times m}$ with a matrix $\matB\rand\Z_q^{n\times m}$ of the same size. Instead of generating $\vecv_2,\ldots,\vecv_{s_{\max}}\rand \Z_q^m$, we select uniformly random matrices $\matB_2,\ldots,\matB_{s_{\max}}\rand\Z_q^{n\times m}$. We modify the second component of the ciphertext as follows: $$\vecc_{2,i}'=M_{i,1}(\vecs^\top\vecy,0,\ldots,0)+\sum_{j\in\{2,\ldots,s_{\max}\}}M_{i,j}\vecs^\top\matB_j+\uwave{\vecs^\top\matQ_{\rho(i)}}$$ for our commitment purposes (with this modification, the scheme remains selectively secure in the same model). To compress the $\{\vecc_{2,i}'\}$, for $i\in [\ell]$, let $$\matU_{\rho(i)}\leftarrow M_{i,1}(\vecy\mid\veczero\mid\cdots\mid\veczero)+\sum_{j\in\{2,\ldots,s_{\max}\}}M_{i,j}\matB_j+\matQ_{\rho(i)}$$ and for $u\in [N]\setminus\rho([\ell])$ we sample $\matU_{u}\rand\Z_q^{n\times m}$. For $i\in [\ell]$, we have $\vecc_{2,i}'\approx\vecs^\top\matU_{\rho(i)}$. Then we commit to $[\matU_1\mid \cdots\matU_{N}]\in \Z_q^{n\times mN}$, and denote the commitment matrix by $\matC\in \Z_q^{n\times m}$. Roughly, the matrix $\matC$ contains partial information about all $\matU_u$.

\iitem{Modification} With the idea given above, we modify the DKW21 $\CPABE$ scheme as follows:
\begin{itemize}
    \item The public key is given by $(\pp,\matA,\{\matB_i\}_{i\in[s_{\max}]},\{\matQ_u\}_{u\in\mathbb{U}},\vecy)$, where $\pp$ is the public parameter of the $2m^2$-succinct $\LWE$ problem, $\matA,\matB_i,\matQ_u\rand\Z_q^{n\times m}$ and $\vecy\rand\Z_q^n$.
    \item The master secret key is given by $\td_{\matB}$.
    
    \item Given an attribute set $U\subseteq\mathbb{U}$, the secret key associated with $U$ is sampled as follows: Sample $\hat{\vect}\leftarrow \chi^{m-1}$, where $\chi$ is some error distribution, set $\vect=(1,\hat{\vect}^\top)^\top$ and sample $$\veck_u\leftarrow \matB^{-1}((\matA\matV_u+\matQ_u)\vect),$$ where $\matV:=[\matV_1\mid\cdots\mid\matV_N]$ is the verification matrix of a commitment of a matrix of width $Nm$ (which can be computed without the knowledge of the matrix committed). Then the secret key is given by $(\{\veck_u\}_{u\in U},\vect)$.
    \item To encrypt a message $\msg\in\{0,1\}$ under the access structure determined by matrix $\matM\in\{-1,0,1\}^{\ell\times s_{\max}}$, the ciphertext is given as follows: Assume that $\rho$ is an injective function that maps the row indices of $\matM$ to the attribute. First sample $\vecs\rand\Z_q^n$. The ciphertext components are computed as \begin{align*}
\vecc_{1}^\top=\uwave{\vecs^\top\matB},\quad
\vecc_{2}^\top=\uwave{\vecs^{\top}(\matA+\matC)},\quad
c_3=\uwave{\vecs^\top\vecy}+\msg\cdot\lceil q/2\rfloor.
    \end{align*}
    Then the ciphertext is given as $(\vecc_1,\vecc_2,c_3)$.

    In the second component $\vecc_2$, the commitment matrix $\matC$ is such that $$\matC\matV=[\matU_1\mid\cdots\mid\matU_{N}]-\matB\matZ,$$ where for each $i\in [\ell]$, $$\matU_{\rho(i)}\leftarrow M_{i,1}(\vecy\mid\underbrace{\veczero\mid\cdots\mid\veczero}_{m-1})+\sum_{j\in\{2,\ldots,s_{\max}\}}M_{i,j}\matB_j+\matQ_{\rho(i)}$$ and for $u\in [N]\setminus\rho([\ell])$,  $\matU_{u}\rand\Z_q^{n\times m}$.

\item When decrypting the ciphertexts, one can first compute $\matU_{\rho(i)}$ for all $i\in I$, where $I$ is the set of the row indices related to the attributes associated with the secret keys. If we denote $\matV=[\matV_1\mid\cdots\mid\matV_{N}]$ and $\matZ=[\matZ_1\mid\cdots\mid\matZ_N]$, we obtain that $$\matC\matV_{i}=\matU_{i}-\matB\matZ_i,\qquad \forall i\in [N].$$

Therefore, we can recover $\vecc_{2,i}'$ in the DKW21 scheme by $$\vecc_2^\top\matV_{\rho(i)}\approx\vecs^\top(\matA+\matC)\matV_{\rho(i)}=\vecs^\top\matA\matV_{\rho(i)}+\underbrace{\vecs^\top\matU_{\rho(i)}}_{\approx\vecc_{2,i}'}-\underbrace{\vecs^\top\matB\matZ_{\rho(i)}}_{\approx\vecc_1^\top\matZ_{\rho(i)}}.$$
Combining it with the reconstruction coefficients $\{w_i\}_{i\in I}$, we obtain that \begin{align*}
    &\sum_{i\in I}w_i(\vecc_2^\top\matV_{\rho(i)}+\vecc_1^\top\matZ_{\rho(i)})\\
    \approx&\sum_{i\in I}w_i\vecs^\top\matA\matV_{\rho(i)}+\sum_{i\in I}w_i\vecs^\top\matU_{\rho(i)}\\
    =&\sum_{i\in I}w_i\vecs^\top(\matA\matV_{\rho(i)}+\matQ_{\rho(i)})+\sum_{i\in I}w_iM_{i,1}\vecs^\top[\vecy\mid\veczero\mid\cdots\mid\veczero]+\sum_{i\in I,j\in\{2,\ldots,s_{\max}\}}w_iM_{i,j}\vecs^\top\matB_j\\
    =&\sum_{i\in I}w_i\vecs^\top(\matA\matV_{\rho(i)}+\matQ_{\rho(i)})+\vecs^\top[\vecy\mid\veczero\mid\cdots\mid\veczero].
\end{align*}
Then using the secret key equation $\matB\veck_{\rho(i)}=(\matA\matV_{\rho(i)}+\matQ_{\rho(i)})\vect$, we can recover the message by computing $$c_3-\sum_{i\in I}w_i(\vecc_2^\top\matV_{\rho(i)}\vect+\vecc_1^\top\matZ_{\rho(i)}\vect-\vecc_1^\top\veck_{\rho(i)}).$$
\end{itemize}

\subsection{Static Security}
We prove that our scheme is selectively secure under the $2m^2$-succinct $\LWE$ assumption in \ref{subsec:security}. We achieve our goal by defining a sequence of hybrid games. The first hybrid game corresponds to the real game as defined in~\ref{def:selectsecurity}, while the last hybrid game is independent of the messages, where the advantage of the adversary is exactly 0. By arguing each adjacent pair of the hybrid games is indistinguishable, the adversary in the real game can win the game with negligible advantage. 

More precisely, in the reduction, we attempt to prove the ciphertext components $\vecc_1,\vecc_2,c_3$ are pseudorandom, since they resemble $\LWE$ samples intuitively. However, in the key query phase, the challenger uses the trapdoors $\td_{\matB}$ for $\matB$ to generate corresponding secret keys, which may result in the additional information leakage of the $\LWE$ samples. To overcome this problem, we notice that using the matrix $\matT$ in the public parameters in the $2m^2$-succinct $\LWE$ instance, we can construct trapdoors for the matrix of the form $[\matI_{s}\otimes \matB\mid \matW']$, where $\matW'$ is statistically close to uniform. Using this technique, the challenger can respond to the secret-key queries with the newly generated trapdoor when the secret-key corresponds to an unauthorized attribute set. Since the new trapdoor is efficiently constructed from the public parameters without using the trapdoor $\td_\matB$, this avoids giving the adversary additional information correlated with the $\LWE$ samples in the hybrids where the ciphertext components are replaced by uniform.   

During the reduction, suppose that the adversary submits a secret-key query with respect to the attribute set $U$. The challenger needs to sample the preimage of $(\matA\matV_{u}+\matQ_u)\vect$ for each $u\in U$ with respect to $\matB$. We first simulate the uniform matrix $\matA=\matB\matR-\matC,\matQ_u=\matB\matR_u+M_{\rho^{-1}(u),1}(\vecy\mid\veczero\mid\cdots\mid\veczero)$ with $\matR,\matR_u\rand\{-1,1\}^{m\times m}$ for all $u\in \mathbb{U}$, which is guaranteed by the leftover hash lemma under appropriate parameters. Parsing $\matB_i$ as $[\vecb_i'\mid\matB_i']$, where $\vecb_i'\in\Z_q^{n}$, then 
\begin{align*}
    (\matA\matV_{u}+\matQ_u)\vect&=((\matB\matR-\matC)\matV_u+\matQ_u)\vect\\
   & =(\matB\matR\matV_u-\matU_u+\matB\matZ_u+\matQ_u)\vect\\
   & =\matB\matR\matV_u\vect+\matB\matZ_u\vect+\matQ_u\vect-M_{\rho^{-1}(u),1}(\vecy\mid\underbrace{\veczero\mid\cdots\mid\veczero}_{m-1})\vect-\sum_{j\in\{2,\ldots,s_{\max}\}}M_{\rho^{-1}(u),j}\matB_j\vect\\
   &=\matB(\matR\matV_u\vect+\matZ_u\vect+\matR_u\vect)-\sum_{j\in\{2,\ldots,s_{\max}\}}M_{\rho^{-1}(u),j}\vecb_j'-\sum_{j\in\{2,\ldots,s_{\max}\}}M_{\rho^{-1}(u),j}\matB_j'\hat{\vect}.
\end{align*}
Then the challenger can instead first sample short $\hat{\vect}$ and $\tilde{\veck}_u$ as: \begin{align}\label{tech:trapdoor}
    \left[\matB\Bigg| \sum_{j\in\{2,\ldots,s_{\max}\}}M_{\rho^{-1}(u),j}\matB_j'\right]\left(\begin{array}{c}
     \tilde{\veck}_u \\
     \hat{\vect}
\end{array}\right)=-\sum_{j\in\{2,\ldots,s_{\max}\}}M_{\rho^{-1}(u),j}\vecb_j'
\end{align} for each $u\in \rho([\ell])\cap U$. Then the challenger sets $$\veck_u\leftarrow \tilde{\veck}_u+\matR\matV_u\vect+\matZ_u\vect+\matR_u\vect$$ with $\vect=(1,\hat{\vect}^\top)^\top$, and hence $\{\veck_u\}$ are distributed the same as in the real game up to negligible statistical distance by the noise-smudging technique. By simulating $\matB_i:=\matB\matR_i+\matC_i$ with $\matR_i\rand\{-1,1\}$ and $\matC_i$ the commitment matrix of some matrix of specific form, we can use $\matT$ to generate a trapdoor $\td'$ for the matrix 
\begin{align}\label{eq:construct trap}
    \left[\matI_{s_{\max}-1}\otimes \matB\Bigg|\begin{array}{c}
     \matB_2'  \\
     \vdots\\
     \matB_{s_{\max}}'
\end{array}\right].
\end{align}
 Moreover, we can prove that as long as $U$ is unauthorized to decrypt the ciphertext, $\td'$ can then be used to generate the trapdoor for the linear combination of the row block matrices of \eqref{eq:construct trap}, i.e., $$(\matM_U\otimes \matI_n)\left[\matI_{s_{\max}-1}\otimes \matB\Bigg|\begin{array}{c}
     \matB_2'  \\
     \vdots\\
     \matB_{s_{\max}}'
\end{array}\right],$$ where $\matM_{U}$ are the row vectors of $\matM$ corresponding to the attributes in $U$. Therefore, without using the trapdoor $\td_{\matB}$, the challenger can sample $\{\tilde{\veck}_u\}_{u\in U}$ in \eqref{tech:trapdoor}. For details, please refer to~\ref{subsec:security}.

\section{Preliminaries}\label{sec:pre}
\iitem{Notations}
Let $\lambda\in\N$ denote the security parameter used throughout this paper. For a positive integer $n\in \N$, denote $[n]$ as the set $\{k\in\N:1\leq k\leq n\}$. For a positive integer $q\in\N$, let $\Z_q$ denote the ring of integers modulo $q$. Throughout this paper, vectors are assumed to be column vectors by default. We use bold lower-case letters (e.g., $\vecu,\vecv$) for vectors and bold upper-case letters (e.g., $\matA,\matB$) for matrices. Let $\vecv[i]$ denote the $i$-th entry of the vector $\vecv$, and let $\matU[i,j]$ denote the $(i,j)$-entry of the matrix $\matU$. 
 We write $\veczero_n$ for the all-zero vector of length $n$, and $\veczero_{n\times m}$ for the all-zero matrix of dimension $n\times m$. The \emph{infinity norm} of a vector $\vecv$ and the corresponding operator norm of a matrix $\matU$ are defined as: $$\|\vecv\|=\max_i|\vecv[i]|,\ \|\matU\|=\max_{i,j}|\matU[i,j]|.$$ For two matrices
$\matA,\matB$ of dimensions $n_1\times m_1$ and $n_2\times m_2$, respectively, their \emph{Kronecker product} is an $n_1n_2\times m_1m_2$ matrix given by $$\matA\otimes\matB=\left[\begin{array}{ccc}
     \matA[1,1]\matB&\cdots&\matA[1,m_1]\matB  \\
     \vdots&\ddots&\vdots \\
     \matA[n_1,1]\matB&\cdots&\matA[n_1,m_1]\matB
\end{array}\right].$$ 
We have $(\matA\otimes \matB)(\matC\otimes\matD)=(\matA\matC)\otimes(\matB\matD)$ if the multiplication is compatible.

\iitem{Discrete Gaussians} Let $D_{\Z,\sigma}$ represent the centered \emph{discrete Gaussian distribution} over $\Z$ with standard deviation $\sigma\in\R^{+}$ (e.g.,  \cite{banaszczyk1993new}). For a matrix $\matA\in\Z_q^{n\times m}$ and a vector $\vecv\in\Z_q^n$, let $\matA_\sigma^{-1}(\vecv)$ denote the distribution of a random variable $\vecu\leftarrow D_{\Z,\sigma}^m$ conditioned on $\matA\vecu=\vecv\mod q$. 
When $\matA_{\sigma}^{-1}$ is applied to a matrix, it is understood as being applied independently to each column of the matrix.

The following lemma (e.g., \cite{micciancio2007worst}) shows that for a given Gaussian width parameter $\sigma=\sigma(\lambda)$, the probability of a vector drawn from the discrete Gaussian distribution having norm greater than $\sqrt{n}\sigma$ is negligible.

\begin{lemma}\label{Gaussianbound}
    Let $\lambda$ be a security parameter and $\sigma=\sigma(\lambda)$ be a Gaussian width parameter. Then for all polynomials $n=n(\lambda)$, there exists a negligible function $\negl(\lambda)$ such that for all $\lambda\in\N$,
$$\Pr\left[\|\vecv\|>\sqrt{n}\sigma:\vecv\leftarrow D_{\Z,\sigma}^n\right]=\negl(\lambda). $$
\end{lemma}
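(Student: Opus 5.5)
The plan is to reduce the statement to a one-dimensional tail bound for $D_{\Z,\sigma}$ and then take a union bound over the $n$ coordinates. This works because $\|\cdot\|$ is the infinity norm and the coordinates of $\vecv\leftarrow D_{\Z,\sigma}^n$ are independent. Concretely,
$$\Pr\left[\|\vecv\|>\sqrt{n}\sigma\right]\le \sum_{i=1}^n \Pr_{x\leftarrow D_{\Z,\sigma}}\left[|x|>\sqrt{n}\sigma\right]= n\cdot \Pr_{x\leftarrow D_{\Z,\sigma}}\left[|x|>\sqrt{n}\sigma\right].$$

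For the one-dimensional bound I will invoke Banaszczyk's tail inequality \cite{banaszczyk1993new}, in the form used in \cite{micciancio2007worst}. For the discrete Gaussian over $\Z$ with parameter $\sigma$ (probability proportional to $\exp(-\pi x^2/\sigma^2)$) and any $t>0$, it gives $\Pr[|x|>t\sigma]\le 2\exp(-\pi t^2)$. If $\sigma$ is read as a standard deviation instead, the parameters differ only by the constant factor $\sqrt{2\pi}$, which changes only the constant in the exponent. Taking $t=\sqrt{n}$ gives the total bound $2n\exp(-\pi n)$, or $2n\exp(-cn)$ for an absolute constant $c>0$ under the other normalization.

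The main obstacle is checking that $2n\exp(-cn)$ is actually negligible in $\lambda$. This holds exactly when $n=\omega(\log\lambda)$, which is the regime in which the lemma is applied in the paper: $n$ is the lattice dimension, polynomial in $\lambda$ and at least $\lambda$. Then $n\,e^{-cn}\le \poly(\lambda)\cdot\lambda^{-\omega(1)}=\negl(\lambda)$.

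For constant $n$ the claimed bound is not negligible as written; for example, with $n=1$ there is a constant probability that $|x|>\sigma$. In that case I would either state the lemma with the hypothesis $n\ge\lambda$ (or $n=\omega(\log\lambda)$), or replace the threshold $\sqrt{n}\sigma$ by $\sigma\cdot\omega(\sqrt{\log\lambda})$. With the latter threshold, the same union bound gives $2n\exp(-\omega(\log\lambda))=\negl(\lambda)$ for every polynomial $n$. In the write-up I would make this dimension assumption explicit and then carry out the two displayed steps.
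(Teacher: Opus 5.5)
Your argument is correct, including your observation that the lemma as stated is false for constant $n$. It also takes a different and, for the infinity norm, more direct route than the one the paper relies on.

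The paper gives no proof of this lemma; it only points to \cite{micciancio2007worst}. The relevant result there is an $\ell_2$ bound in dimension $n$: for $\sigma\geq\eta_\epsilon(\Z^n)$, $\Pr[\|\vecv\|_2>\sqrt{n}\sigma]\leq\frac{1+\epsilon}{1-\epsilon}\cdot 2^{-n}$, derived from Banaszczyk's $n$-dimensional lemma. The paper's infinity-norm statement then follows because $\|\vecv\|\leq\|\vecv\|_2$. Your coordinatewise union bound with the one-dimensional Banaszczyk inequality differs in three ways:
\begin{itemize}
\item it avoids the $n$-dimensional lemma;
\item it needs no smoothing hypothesis when $\sigma$ is the Gaussian parameter;
\item it shows that the factor $\sqrt{n}$ is wasted for the infinity norm, since a threshold $\sigma\cdot\omega(\sqrt{\log\lambda})$ already gives a negligible bound for every polynomial $n$.
\end{itemize}
The $\ell_2$ route gives a stronger conclusion, but the paper never uses it. Both routes produce a bound $e^{-\Theta(n)}$, so the cited source does not support ``for all polynomials $n$'' either, and your diagnosis that one needs $n=\omega(\log\lambda)$ is right.

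That diagnosis matters more than you assumed. The lemma's $n$ is the length of the Gaussian vector, not the lattice dimension. The paper applies it with length $m$, to $\vece_1,\vece_2,\hat{\vect},\veck_u$; since $m$ grows polynomially in $\lambda$, your argument covers these. But the paper also applies it with length $1$: the correctness proof derives $|e_3|\leq\chi_s$ with overwhelming probability from $e_3\leftarrow D_{\Z,\chi_s}$ and this lemma, even though that event fails with constant probability. So your guess that the lemma is only used with $n\geq\lambda$ is not accurate. Your second repair, a threshold such as $\sigma\cdot\omega(\sqrt{\log\lambda})$ or simply $\lambda\sigma$, is the one the paper actually needs. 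It only costs a $\poly(\lambda)$ factor, which the correctness condition on $q$ already absorbs.

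One minor point on your write-up: converting a standard deviation to a Gaussian parameter by the factor $\sqrt{2\pi}$ is only accurate once the width exceeds the smoothing parameter of $\Z$, which is a constant. This is harmless here, since every width in the paper is polynomially large.
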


\iitem{Smudging Lemma} In the following, we present the standard smudging lemma, which formalizes the intuition that sufficiently large standard deviation can ``smudge out'' small perturbations, making the resulting distributions  statistically indistinguishable.

\begin{lemma}[\cite{BDE18}]\label{smudge}
    Let $\lambda$ be a security parameter, and let $e\in\Z$ satisfy $|e|\leq B$. Suppose that $\sigma\geq B\cdot\lambda^{\omega(1)}$. Then, the following two distributions are statistically indistinguishable:
    $$ \{z: z\leftarrow D_{\Z,\sigma}\}\, \text{ and }\,
    \{z+e: z\leftarrow D_{\Z,\sigma}\}.  $$
\end{lemma}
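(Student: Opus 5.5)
Since $e$ is a fixed integer with $|e|\le B$ and $z+e$ is a shift of $z$, I would bound the statistical distance between $D_{\Z,\sigma}$ and $D_{\Z,\sigma}+e$ directly. Write $\rho_\sigma(x)=\exp(-\pi x^2/\sigma^2)$, so both distributions have the same normalizer $\rho_\sigma(\Z)$. The shifted distribution assigns to $y$ the mass $\rho_\sigma(y-e)/\rho_\sigma(\Z)$. Hence the statistical distance is
\[
\Delta=\frac{1}{2\rho_\sigma(\Z)}\sum_{y\in\Z}\bigl|\rho_\sigma(y)-\rho_\sigma(y-e)\bigr|
=\frac12\,\mathbb{E}_{y\leftarrow D_{\Z,\sigma}}\Bigl|1-\frac{\rho_\sigma(y-e)}{\rho_\sigma(y)}\Bigr|.
\]
The ratio equals $\exp\bigl(\pi(2ye-e^2)/\sigma^2\bigr)$.

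I would split the expectation into two ranges. On the typical set $|y|\le \sigma\lambda$, the exponent has absolute value at most $\pi(2\lambda\sigma B+B^2)/\sigma^2\le 2\pi\lambda B/\sigma+\pi B^2/\sigma^2$. Because $\sigma\ge B\lambda^{\omega(1)}$, this is $\lambda\cdot\lambda^{-\omega(1)}=\negl(\lambda)$. Using $|1-e^x|\le 2|x|$ for small $|x|$, this range contributes a negligible amount.

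On the tail $|y|>\sigma\lambda$, I would avoid bounding the ratio, which can be large. Instead I would return to the sum form and bound it by $\sum_{|y|>\sigma\lambda}(\rho_\sigma(y)+\rho_\sigma(y-e))/\rho_\sigma(\Z)$. This is at most the probability that $|y|>\sigma\lambda$ under $D_{\Z,\sigma}$, plus the probability that $|y'|>\sigma\lambda-B\ge\sigma\lambda/2$ under the same distribution. Both are $\negl(\lambda)$ by the standard Gaussian tail bound, in the spirit of Lemma~\ref{Gaussianbound}, which in one dimension gives a tail of $e^{-\Omega(\lambda^2)}$. Summing the two ranges gives $\Delta=\negl(\lambda)$, which establishes statistical indistinguishability.

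The main obstacle is the tail handling, because the likelihood ratio is unbounded for large $|y|$. Doing the split at the level of the unnormalized sum, rather than the expectation, resolves this. I also need $\rho_\sigma(\Z)\ge\Omega(\sigma)$ so that the normalizer is harmless, which holds since $\sigma\ge 1$.
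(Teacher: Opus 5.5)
Your argument is correct. The paper itself gives no proof of this lemma: it imports it as a black box from \cite{BDE18}. Your direct computation is therefore a genuinely self-contained route rather than a reconstruction of the paper's.

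Because the two distributions share the normalizer $\rho_\sigma(\Z)$, the likelihood ratio is explicitly $\exp(\pi(2ye-e^2)/\sigma^2)$. Your bulk/tail split then gives $\Delta\le O(\lambda B/\sigma)+e^{-\Omega(\lambda^2)}$, which is negligible under $\sigma\ge B\lambda^{\omega(1)}$.

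Two small corrections.
\begin{itemize}
\item For the tail, cite the one-dimensional subgaussian bound $\Pr_{z\leftarrow D_{\Z,\sigma}}[|z|\ge t\sigma]\le 2e^{-\pi t^2}$, which holds for every $\sigma>0$. Lemma~\ref{Gaussianbound} as stated in the paper uses threshold $\sqrt{n}\sigma$. At $n=1$ that threshold is just $\sigma$, and the event $|z|>\sigma$ has constant probability, so it is not the right tool.
\item Your closing remark about $\rho_\sigma(\Z)\ge\Omega(\sigma)$ is unnecessary. Both pieces of your bound are already normalized probabilities.
\end{itemize}

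There is also a shorter route that avoids the tail case entirely. Since $\mathbb{E}[y]=0$ for the centered discrete Gaussian, $\mathrm{KL}(D_{\Z,\sigma}\,\|\,D_{\Z,\sigma}+e)=\mathbb{E}[\pi(e^2-2ye)/\sigma^2]=\pi e^2/\sigma^2$ exactly. Pinsker's inequality then gives $\Delta\le\sqrt{\pi/2}\,B/\sigma$. This saves the factor $\lambda$ and needs no concentration bound, while your version has the advantage of using nothing beyond the explicit ratio and a tail estimate.
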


\iitem{The Gadget Matrix}
Here we recall the definition of the \emph{gadget matrix} introduced in \cite{MP12}. For positive integers $n,q\in\N$, let $\vecg^{\top}=(1,2,\ldots,2^{\lceil\log q\rceil-1})$ denote the \emph{gadget vector}. Define the \emph{gadget matrix} $\matG_n=\matI_n\otimes \vecg^{\top}\in\Z_q^{n\times m}$, where $m=n\lceil \log q\rceil$.

\iitem{Trapdoors for lattices}
 In this work, we adopt the trapdoor frameworks outlined in~\cite{Ajt96,GPV08,MP12}, following the formalization in \cite{BTVW17}.
\begin{lemma}[\cite{GPV08,MP12}]\label{trapdoor}
    Let $n,m,q$ be lattice parameters. Then there exist two efficient algorithms (\trapgen, \samplepre) with the following syntax:
    \begin{itemize}
        \item $\trapgen(1^n,1^m,q)\rightarrow (\matA,\matT)$: On input the lattice dimension $n$,  number of samples $m$, modulus $q$, this randomized algorithm outputs a matrix $\matA\rand\Z_q^{n\times m}$ together with a trapdoor $\matT\in \Z_q^{m\times m}$.
        \item $\samplepre(\matA,\matT,\vecy,\sigma)\rightarrow \vecx$: On input ($\matA,\matT$) from \trapgen, a target vector $\vecy\in\Z_q^n$ and a Gaussian width parameter $\sigma$, this randomized algorithm outputs a vector $\vecx\in\Z^m$.
    \end{itemize}
    
    Moreover, there exists a polynomial $m_0(n,q)=O(n\log q)$ such that for all $m\geq m_0$, the above algorithms satisfy the following properties:
    \begin{itemize}
        \item \textbf{\emph{Trapdoor distribution:}} The matrix $\matA$ output by $\trapgen(1^n,1^m,q)$ is statistically close to a uniformly random matrix. Specifically, if $(\matA,\td_{\matA})\leftarrow \trapgen(1^n,1^m,q)$ and $\matA'\rand \Z_q^{n\times m}$, then the statistical distance between the distributions of $\matA$ and $\matA'$ is at most $2^{-n}$.  The trapdoor $\td_{\matA}$ serves as $\tau$-trapdoor where $\tau=O(\sqrt{n\log q})$. Denote $\chi_0=\tau\cdot\omega(\sqrt{\log n})$.
        
        \item \textbf{\emph{Preimage sampling:}} Suppse $\td_\matA$ is a $\tau$-trapdoor for $\matA$. For all $\sigma\geq \chi_0$ and all target vectors $\vecy\in\Z_q^n$, the statistical distance between the following two distributions is at most $2^{-n}$:
        $$\{\vecx \leftarrow \samplepre(\matA,\matT,\vecy,\sigma) \}\, \text{ and }\, 
       \{\vecx \leftarrow \matA_\sigma^{-1}(\vecy)\}.   $$
    \end{itemize}
\end{lemma}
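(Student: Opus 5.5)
This lemma is imported from \cite{GPV08,MP12}, so the plan is to reconstruct the gadget-based construction of \cite{MP12} and check that its guarantees give the stated bounds. For \trapgen, I will write $m=\bar m+n\lceil\log q\rceil$ with $\bar m=O(n\log q)$. The algorithm samples $\bar\matA\rand\Z_q^{n\times\bar m}$ and a short $\matR\in\{-1,0,1\}^{\bar m\times n\lceil\log q\rceil}$, and outputs
\[
\matA=[\bar\matA\mid \matG_n-\bar\matA\matR].
\]
The first step is the trapdoor distribution property. For $\bar m\geq (n\log q)+\omega(\log n)$, the leftover hash lemma makes $(\bar\matA,\bar\matA\matR)$ statistically close to uniform, and taking $\bar m$ a constant factor larger pushes the distance below $2^{-n}$. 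Hence $\matG_n-\bar\matA\matR$, and therefore $\matA$, is within $2^{-n}$ of uniform. The trapdoor is $\matR$. Since $\matA\binom{\matR}{\matI}=\matG_n$ and $\matG_n$ has a known short basis $\matS_{\matG}$, the standard conversion gives a full-rank short basis $\matT\in\Z^{m\times m}$ of $\Lambda_q^\perp(\matA)$. This $\matT$ has Gram--Schmidt norm $O(s_1(\matR))=O(\sqrt{\bar m}+\sqrt{n\log q})=O(\sqrt{n\log q})$, using the standard singular-value bound for subgaussian matrices, and that gives the claimed $\tau$.

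The second step is \samplepre. Given $\vecy$ and $\sigma\geq\chi_0=\tau\cdot\omega(\sqrt{\log n})$, I will use the GPV approach. First compute an arbitrary integer solution $\vecx_0$ of $\matA\vecx_0=\vecy$ by linear algebra, then run the randomized nearest-plane (Klein) sampler with basis $\matT$ on the coset $\vecx_0+\Lambda_q^\perp(\matA)$, centred so the output is Gaussian over $\{\vecx:\matA\vecx=\vecy\}$.

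The GPV analysis shows that when $\sigma\geq\|\widetilde{\matT}\|\cdot\omega(\sqrt{\log n})$, the output is statistically close to $D_{\Lambda_q^\perp(\matA)+\vecx_0,\sigma}$, which is exactly $\matA_\sigma^{-1}(\vecy)$. The closeness holds because this width exceeds the smoothing parameter $\eta_\epsilon(\Lambda_q^\perp(\matA))$.

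Alternatively, I could use the MP12 perturbation approach. One samples $\vecp$ with covariance $\sigma^2\matI-\binom{\matR}{\matI}\binom{\matR}{\matI}^{\!\top}$, then samples $\vecz\leftarrow\matG_n^{-1}(\vecy-\matA\vecp)$ with the efficient gadget sampler, and outputs $\vecp+\binom{\matR}{\matI}\vecz$. The convolution lemma shows the sum is spherical Gaussian on the right coset.

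The main obstacle is turning ``statistically close'' into the explicit $2^{-n}$ bound. This requires taking the smoothing slack $\epsilon$ exponentially small, which only costs an $\omega(\sqrt{\log n})$ versus $O(\sqrt{n})$ factor in the width. I would handle this by absorbing it into the choice of $\omega(\sqrt{\log n})$, or if necessary into the constant in $m_0$, and by invoking the smoothing-parameter bound $\eta_\epsilon(\Lambda)\leq\|\widetilde{\matT}\|\sqrt{\ln(2m(1+1/\epsilon))/\pi}$ with $\epsilon=2^{-n}/m$. The remaining steps are citations of the stated theorems in \cite{GPV08,MP12}.
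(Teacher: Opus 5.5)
Your reconstruction follows the same route as the paper. The paper gives no proof of its own and simply imports this lemma from \cite{GPV08,MP12}, in the formalization of \cite{BTVW17}. The MP12 gadget-based $\mathsf{TrapGen}$ together with GPV/MP12 Gaussian preimage sampling is exactly the content of those citations.

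The one point to tighten is your final paragraph. The standard Klein-sampler analysis needs $\sigma/\|\tilde{\mathbf{t}}_i\|\geq\eta_\varepsilon(\mathbb{Z})$ for each Gram--Schmidt vector, not $\sigma\geq\eta_\varepsilon(\Lambda_q^\perp(\mathbf{A}))$. At width $\tau\cdot\omega(\sqrt{\log n})$ with a slowly growing factor, that analysis only gives negligible distance. So the stated $2^{-n}$ bound forces the factor to be $\Theta(\sqrt{n})$, via $\varepsilon\approx 2^{-n}/m$. Enlarging the constant in $m_0$ only shrinks the leftover-hash distance in $\mathsf{TrapGen}$ and does nothing for the sampler.
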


\begin{corollary}[\cite{WWW22}, adapted]\label{tensortrap}
    Let $\matH\in\Z_q^{k\times t}$ be a full row rank matrix ($k\leq t$). Let $\matA\in\Z_q^{kn\times m'}$ and $\matR\in\Z_q^{m'\times mt}$. Suppose $\matA\matR=\matH\otimes \matG$. Then $(\matR,\matH)$ can be used as a $\tau$-trapdoor for $\matA$, where $\tau\leq\sqrt{kmm'}\cdot mt\|\matR\|$. 
\end{corollary}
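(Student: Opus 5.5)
We follow the standard gadget-trapdoor argument of \cite{MP12,WWW22}. The key point is that a matrix $\matR$ with $\matA\matR=\matH\otimes\matG$ and $\matH$ of full row rank yields a short basis of the lattice $\Lambda_q^\perp(\matA)$, whose Gram--Schmidt norm is bounded by a polynomial in $\|\matR\|$. We then invoke the preimage-sampling guarantee of Lemma~\ref{trapdoor}, which only needs a short basis or an equivalent $\tau$-trapdoor.

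First, we reduce to the case $\matH=[\matI_k\mid\veczero]$ up to a change of coordinates. Since $\matH$ has full row rank over $\Z_q$, there is an invertible $\matU\in\Z_q^{t\times t}$ with $\matH\matU=[\matI_k\mid\veczero]$. Then
\[
\matA\matR(\matU\otimes\matI_m)=(\matH\matU)\otimes\matG=[\matI_k\otimes\matG\mid\veczero].
\]
This step needs care, because $\matU$ need not be short. We therefore do not absorb $\matU$ into the shortness bound. Instead we use it only to select $k$ block-columns: choose $k$ linearly independent columns of $\matH$, forming an invertible $\matH_0$. The corresponding $k$ column-blocks $\matR_0$ of $\matR$ (width $km$) satisfy $\matA\matR_0=\matH_0\otimes\matG$. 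Preimage sampling for a target $\vecy$ is then done as follows. Compute $\vecy'=(\matH_0^{-1}\otimes\matI_n)\vecy$. Sample $\vecz\leftarrow\matG^{-1}$ blockwise so that $(\matI_k\otimes\matG)\vecz=\vecy'$, giving $(\matH_0\otimes\matG)\vecz=\vecy$. Output $\vecx=\matR_0\vecz$. This is exactly the MP12 procedure applied with tag $\matH_0$. Inversion of $\matH_0$ happens only on the target side, so shortness is governed by $\|\matR\|$ alone.

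Second, to obtain correctly distributed, Gaussian-independent-of-$\matR$ preimages, we use the MP12 perturbation technique. This requires a short basis of $\Lambda^\perp(\matA)$. We construct it by combining $\matR_0$ applied to a short basis $\matS_{\matG}$ of $\Lambda^\perp(\matI_k\otimes\matG)$ with vectors of the form $\vece_j-\matR_0\vecz_j$ for the standard basis vectors $\vece_j\in\Z^{m'}$. Here $\vecz_j$ is a short $(\matH_0\otimes\matG)$-preimage of $\matA\vece_j$. A norm computation then bounds every basis vector by roughly $1+km\cdot\|\matR\|\cdot\|\vecz_j\|$, where each entry of $\vecz_j$ is at most $1$ in gadget decomposition.

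Finally, we pass from infinity norms to Euclidean norms, losing $\sqrt{kmm'}$ from vector length, and account for the width $mt$ of $\matR$ (at most $mt$ terms in each inner product). This gives $\tau\le\sqrt{kmm'}\cdot mt\|\matR\|$, with ample slack.

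The main obstacle is the first step. We must ensure that the non-short transformation induced by $\matH$ never multiplies $\matR$. Restricting to an invertible $k\times k$ submatrix of $\matH$, and inverting it only on targets, resolves this. The remaining estimates are routine and loose enough to meet the stated bound.
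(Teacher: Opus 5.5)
The paper gives no proof of this corollary; it is imported from \cite{WWW22}, building on \cite{MP12}, so there is no in-paper argument to compare with. Judged on its own, your route is the standard argument and is sound: restrict to an invertible $k\times k$ block $\matH_0$, treat the matching blocks $\matR_0$ as an MP12 gadget trapdoor with tag $\matH_0\otimes\matI_n$, build short vectors of $\Lambda^\perp(\matA)$, and sample with them. Three points need tightening.

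First, the vectors $\vece_j-\matR_0\vecz_j$ together with the columns of $\matR_0\matS_{\matG}$ form a generating set of $m'+km$ vectors, not a basis. You also never check that they generate $\Lambda^\perp(\matA)$. The check is short: for $\vecx\in\Lambda^\perp(\matA)$,
\[
\vecx=\sum_j x_j(\vece_j-\matR_0\vecz_j)+\matR_0\Big(\sum_j x_j\vecz_j\Big),\qquad (\matH_0\otimes\matG)\sum_j x_j\vecz_j=\matA\vecx=\mathbf{0}.
\]
Because $\matH_0\otimes\matI_n$ is invertible, $\sum_j x_j\vecz_j\in\Lambda^\perp(\matI_k\otimes\matG)$, which is spanned by the columns of $\matS_{\matG}$. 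Since $\Lambda^\perp(\matA)\supseteq q\Z^{m'}$ has full rank, you can extract $m'$ independent vectors from your set. The Micciancio--Goldwasser conversion then gives a basis with no larger Gram--Schmidt norm, and you sample with the GPV sampler \cite{GPV08}. MP12 perturbations are a different mechanism that needs no basis, so that label is a misnomer.

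Second, an invertible $k\times k$ submatrix of a full-row-rank $\matH$ is guaranteed only over a field, i.e.\ for prime $q$. This is the setting of Lemmas~\ref{lem1} and~\ref{lem4}, but you should say so. A variant needing only surjectivity of $\matH$ keeps all of $\matR$:
\begin{itemize}
\item pick any, possibly long, $\vecu$ with $(\matH\otimes\matI_n)\vecu=\vecy$;
\item take the binary $\vecz$ with $(\matI_t\otimes\matG)\vecz=\vecu$;
\item output $\matR\vecz$.
\end{itemize}
This is presumably the source of the width factor $mt$ in the stated bound. Your $\matR_0$ has the smaller width $km$.

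Third, your last paragraph's accounting does not match your construction. Your vectors lie in $\Z^{m'}$, so passing from $\ell_\infty$ to $\ell_2$ costs $\sqrt{m'}$, not $\sqrt{kmm'}$, and the relevant width is $km$, not $mt$. You also need to bound the columns of $\matR_0\matS_{\matG}$; the entries of $\matS_{\matG}$ are at most $2$ in absolute value. With that, you actually get $\tau\le\sqrt{m'}\,(1+2km\|\matR\|)$. Since $\matR\neq\mathbf{0}$ (so $\|\matR\|\ge 1$) and $k\le t$, this is at most $\sqrt{kmm'}\cdot mt\|\matR\|$ once $km\ge 9$. So the stated bound holds, but not by the computation you gave.
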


\iitem{Succinct $\LWE$ assumption}
Our proposed $\CPABE$ scheme relies on the succinct $\LWE$ assumption \cite{Wee24}.
\begin{assumption}[$\ell$-succinct $\LWE$~\cite{Wee24}]
    Let $\lambda$ be a security parameter. Let $n,m$ be such that $m\geq 2n\log q$. The $(\ell,\sigma)$-succinct $\LWE$ says 
    $$(\matB,\vecs^\top\matB+\vece^\top,\matW,\matT)\overset{c}{\approx}(\matB,\vecc,\matW,\matT)$$ where \begin{align*}
        \matB\rand\Z_q^{n\times m},\vecs\rand\Z_q^n,\vece\leftarrow D_{\Z,\chi}^m,\vecc\rand\Z_q^m\\
        \matW\rand\Z_q^{\ell n\times m},\matT\leftarrow[\matI_{\ell}\otimes \matB\mid \matW]^{-1}_{\sigma}(\matI_{\ell}\otimes \matG)
    \end{align*}
\end{assumption}

\begin{lemma}[Leftover hash lemma,~\cite{ABB10}]\label{left}
    Let $\lambda$ be a security parameter. Let $n=n(\lambda),q=q(\lambda)$ be some lattice parameters. Let $m>(n+1)\log q+\omega(\log n)$, and $k=k(n)$ be some polynomial. Then the following two distributions are statistically indistinguishable:
    \begin{align*}
        \{(\matA,\matA\matR): \matA\rand\Z_q^{n\times m},\matR\rand\{-1,1\}^{m\times k}\}\text{ and } \{(\matA,\matS):\matA\rand\Z_q^{n\times m},\matS\rand\Z_q^{n\times k}\}.
    \end{align*}
\end{lemma}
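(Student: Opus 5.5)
The plan is to reduce to the classical leftover hash lemma applied one column at a time, followed by a hybrid argument over the $k$ columns of $\matR$. Throughout I will take $q$ odd (e.g.\ prime), as is implicit in the parameter settings.

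\textbf{Step 1 (universal hash family).} Consider the family $\mathcal{H}=\{h_{\matA}:\{-1,1\}^m\to\Z_q^n,\ h_{\matA}(\vecr)=\matA\vecr\}$ indexed by $\matA\rand\Z_q^{n\times m}$. I will show it is universal. Fix distinct $\vecr\neq\vecr'$. The difference $\vecd=\vecr-\vecr'$ has entries in $\{0,\pm2\}$, and at least one entry is nonzero. Since $q$ is odd, $\pm 2$ is a unit in $\Z_q$. Conditioning on all columns of $\matA$ except the one multiplying that unit entry, $\matA\vecd$ is then uniform over $\Z_q^n$. Hence $\Pr_{\matA}[h_{\matA}(\vecr)=h_{\matA}(\vecr')]=q^{-n}$.

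\textbf{Step 2 (one column).} A uniform $\vecr\rand\{-1,1\}^m$ has min-entropy exactly $m$. The classical leftover hash lemma then gives
\[
\Delta\big((\matA,\matA\vecr),(\matA,\vecu)\big)\le \tfrac12\sqrt{q^n\cdot 2^{-m}},
\]
where $\vecu\rand\Z_q^n$. With $m>(n+1)\log q+\omega(\log n)$ this is at most $\tfrac12\, 2^{-(\log q+\omega(\log n))/2}$, which is negligible in $n$ and hence in $\lambda$.

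\textbf{Step 3 (hybrid over columns).} For $i=0,\dots,k$, let $\mathsf{H}_i$ be the distribution of $(\matA,\matS_i)$, where the first $i$ columns of $\matS_i$ are uniform in $\Z_q^n$ and the remaining columns are $\matA\vecr_j$ for independent $\vecr_j\rand\{-1,1\}^m$. Then $\mathsf{H}_0$ is the left distribution of the statement and $\mathsf{H}_k$ is the right one.

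Compare $\mathsf{H}_{i-1}$ with $\mathsf{H}_i$. Every column other than column $i$ is obtained from $\matA$ together with randomness independent of $\vecr_i$. So there is a single randomized map that sends $(\matA,\matA\vecr_i)$ to a sample of $\mathsf{H}_{i-1}$ and sends $(\matA,\vecu)$ to a sample of $\mathsf{H}_i$. Statistical distance does not increase under such a map, so each step costs at most the bound from Step 2. Summing over the steps gives a total distance of at most $k\cdot\tfrac12\sqrt{q^n2^{-m}}$. This is still negligible because $k=k(n)$ is polynomial.

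The only step requiring care is Step 1. There one must ensure that the difference vector has an entry that is invertible modulo $q$, which is exactly where oddness of $q$ enters. Everything else is a direct invocation of the standard leftover hash lemma and a routine hybrid argument.
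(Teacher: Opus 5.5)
Your proof is correct and is essentially the standard argument of \cite{ABB10}, which the paper cites without proof: universality of $\vecr\mapsto\matA\vecr$ on $\{-1,1\}^m$, then the classical leftover hash lemma for one column, then a hybrid over the $k$ columns. ABB10's version additionally keeps $\matR^\top\mathbf{w}$ in both distributions via the generalized leftover hash lemma, which is where the $(n+1)\log q$ comes from, whereas your argument only needs $m>n\log q+\omega(\log n)$. Your oddness assumption on $q$ is genuinely necessary rather than a convenience: for even $q$, every $\vecr\in\{-1,1\}^m$ satisfies $\matA\vecr\equiv\matA\mathbf{1}\pmod 2$, so the statement as written, which drops ABB10's hypothesis that $q>2$ is prime, is false, and you were right to add the assumption explicitly.
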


\begin{lemma}[Leftover hash lemma with trapdoor]\label{lefttrap}
Let $\lambda$ be a security parameter. Let $n=n(\lambda),q=q(\lambda)$ be some lattice parameters. Let $m>2(n+1)\log q+\omega(\log n)$. Let $k=k(n)$ be some polyomial. Let $m>m_0,\chi,\sigma>\chi_0$, where $m_0,\chi_0$ are polynomials given in Lemma \ref{trapdoor}. Then for any adversary $\mathcal{A}$, there exists a negligible function $\negl(\lambda)$, such that for all $n\in \N$,
$$\left|\pr\left[\EXP_{\mathcal{A}}^{\lhltr,q,\sigma,\chi}(\lambda)=1\right]-1/2\right|\leq\negl(\lambda),$$ where the experiment is defined below.
\end{lemma}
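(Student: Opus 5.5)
Since the experiment $\EXP_{\mathcal{A}}^{\lhltr,q,\sigma,\chi}$ is defined right after the statement, I will prove the lemma for what I take to be its content; if the definition differs, the argument below has to be adapted. My reading is this. The challenger runs $(\matA,\td_{\matA})\leftarrow\trapgen(1^n,1^m,q)$ and flips a bit $b$. If $b=0$ it sets $\matS:=\matA\matR$ with $\matR\rand\{-1,1\}^{m\times k}$; if $b=1$ it samples $\matS\rand\Z_q^{n\times k}$. The adversary receives $(\matA,\matS)$ and gets polynomially many preimage queries, each answered by a short Gaussian preimage of a target $\vecy$ under $[\matA\mid\matS]$ (width $\sigma$), computed from $\td_{\matA}$ by extending the trapdoor. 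It must output a guess for $b$.

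\textbf{Step 1: idealize the oracle.} By the preimage-sampling property of Lemma~\ref{trapdoor}, each answer is within statistical distance $2^{-n}$ of a sample from the ideal distribution $[\matA\mid\matS]^{-1}_{\sigma}(\vecy)$. This holds in both branches, because $\td_{\matA}$ extends to a $\tau$-trapdoor for $[\matA\mid\matS]$ for any $\matS$. With polynomially many queries, a hybrid argument replaces every answer by an ideal sample, at a total cost of $\poly(\lambda)\cdot 2^{-n}$. In the same way, the trapdoor-distribution property lets us replace $\matA$ by a uniform matrix, at a cost of $2^{-n}$.

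\textbf{Step 2: reduce to the public matrices.} After Step~1, the adversary's whole view is a randomized function of $(\matA,\matS)$ alone: the ideal sampler $[\matA\mid\matS]^{-1}_\sigma(\cdot)$ depends only on that matrix, not on $\td_{\matA}$ and not on $\matR$. The adversary is computationally bounded but the ideal sampler may not be efficient. This is harmless, because the remaining claim is purely statistical: applying the same (possibly inefficient) randomized procedure to two statistically close inputs can only decrease statistical distance.

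\textbf{Step 3: apply the leftover hash lemma.} Lemma~\ref{left} says $(\matA,\matA\matR)$ and $(\matA,\matS)$ with $\matS$ uniform are statistically close. Combining this with Steps~1 and~2, the views in the branches $b=0$ and $b=1$ are $\negl(\lambda)$-close, so $\Pr[\EXP=1]\le 1/2+\negl(\lambda)$.

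The main obstacle is the order of steps and the parameters. The leftover hash lemma needs $\matA$ to be (close to) uniform and independent of $\matR$, so the step that discards $\td_{\matA}$ must come before it is invoked. If instead $\trapgen$ outputs a structured matrix $[\matA_0\mid\matA_0\matR_0+\matG]$ whose uniform block is only about half the width, that would explain the hypothesis $m>2(n+1)\log q$: I would then apply Lemma~\ref{left} to the uniform part of width $m/2>(n+1)\log q+\omega(\log n)$ and check that $\matA\matR$ still inherits uniformity. I would also check that the width of the extended trapdoor is still below $\sigma$ (i.e.\ $\sigma,\chi>\chi_0$ as assumed), so that Lemma~\ref{trapdoor} applies uniformly across both branches.
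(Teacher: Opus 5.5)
\textbf{Gap.} You proved the lemma for an experiment other than $\EXP_{\mathcal{A}}^{\lhltr,q,\sigma,\chi}$. The difference is exactly what makes this lemma more than the plain leftover hash lemma. In the paper's experiment the adversary receives $(\pp,\matS_b)$ with $\pp=(\matB,\matW,\matT)$. Here $\matW\rand\Z_q^{2m^2n\times m}$, and $\matT\leftarrow\samplepre([\matI_{2m^2}\otimes\matB\mid\matW],\begin{bmatrix}\matI_{2m^2}\otimes\td_{\matB}\\ \mathbf{0}\end{bmatrix},\matI_{2m^2}\otimes\matG,\sigma)$ is a \emph{public} short matrix computed from $\td_{\matB}$. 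Each query $\vecz$ is answered by $\samplepre(\matB,\td_{\matB},\vecz,\chi)$, a preimage under $\matB$ alone, not under $[\matB\mid\matS]$. Your argument never accounts for $\matT$. It is a second trapdoor-dependent object in the view, and it is precisely why the lemma is needed: in the reductions of Lemmas~\ref{lem1} and~\ref{lem4}, $\pp$ is part of $\pk$. As written, the proposal does not establish the statement.

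\textbf{Repair and comparison.} Your idealization idea covers $\matT$ with one extra step. The zero-padded matrix is a trapdoor of the same quality for $[\matI_{2m^2}\otimes\matB\mid\matW]$. Apply the preimage-sampling property of Lemma~\ref{trapdoor} (with $\sigma>\chi_0$) to each of the $2m^3$ columns. Then $\matT$ is within $2m^3\cdot 2^{-n}$ of $[\matI_{2m^2}\otimes\matB\mid\matW]^{-1}_{\sigma}(\matI_{2m^2}\otimes\matG)$, which depends only on $(\matB,\matW)$. Likewise, each answer is within $2^{-n}$ of $\matB^{-1}_{\chi}(\vecz)$ (with $\chi>\chi_0$). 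After this, $\td_{\matB}$ is never used, so the view is a randomized function of $(\matB,\matW,\matS_b)$ with $\matW,\matR$ independent of $\matB$. Your Steps~2--3 then go through unchanged.

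This is a genuinely different and simpler proof than the paper's. The paper keeps every hybrid efficiently samplable:
\begin{itemize}
\item It splits $\matB=[\matB_1\mid\matB_2]$ into two independently trapdoored halves of width $m/2$.
\item It regenerates $\matT$ and the query answers so that only $\td_{\matB_2}$ is used; the $\matB_1$-coordinates and the $\matW$-block are sampled as plain Gaussians.
\item It replaces $\matB_1$ by a uniform matrix, applies Lemma~\ref{left} to $\matB_1\matR_1$, and notes that $\matS'+\matB_2\matR_2$ is uniform.
\end{itemize}
That is the real source of $m>2(n+1)\log q+\omega(\log n)$: the leftover hash lemma is applied at width $m/2$. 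It does not come from any structure of the $\trapgen$ output, as you guessed. Your route needs the leftover hash lemma only at width $m$. It also avoids the paper's assertion in Lemma~\ref{leftlem0} that the two-stage sampling reproduces $\matB^{-1}_{\chi}(\vecz)$ ``exactly''. That holds only statistically, when $\chi$ exceeds the smoothing parameter of $\Lambda^{\perp}(\matB_2)$. The price of your route is inefficient intermediate hybrids. That is harmless, since both the claim and Lemma~\ref{left} are purely statistical.
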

    \begin{center}
    \framebox{%
\begin{minipage}[t]{0.5\linewidth}
\raggedright
\gameitem{Setup phase:}
The adversary $\mathcal{A}$ sends $1^n, 1^m$ to the challenger. The challenger flips a random bit $b\rand\{0,1\}$, and proceeds as follows:
\begin{enumerate}[label=\arabic*., itemsep=0.5em]
    \item It samples $(\matB,\td_{\matB})\leftarrow\trapgen(1^n,1^m,q)$, $\matW\rand\Z_q^{2m^2n\times m}$, $\matT\leftarrow \samplepre([\matI_{2m^2}\otimes \matB\mid \matW],\begin{bmatrix} \matI_{2m^{2}}\otimes \td_{\matB} \\ \mathbf{0} \end{bmatrix}, \matI_{2m^2}\otimes \matG,\sigma)$ and sets $\pp:=(\matB,\matW,\matT)$. It then samples $\matR\rand \{-1,1\}^{m\times k}$, and $\matS_0\leftarrow \matB\matR$. It then samples $\matS_1\rand\Z_q^{n\times k}$.
    \item It sends $(\pp,\matS_b)$ to the adversary $\mathcal{A}$.
\end{enumerate}
\end{minipage}

\begin{minipage}[t]{0.5\textwidth}
\gameitem{Query Phase:}
The adversary $\mathcal{A}$ makes polynomially many pre-image queries of the form $\vecz\in \Z_q^n$. The challenger responds to each query by sampling $\vecs\leftarrow \samplepre(\matB,\td_{\matB},\vecz,\chi)$ and sends $\vecs$ to $\mathcal{A}$.

\gameitem{Response Phase:}
The adversary outputs its guess $b'\in\{0,1\}$ for $b$. The experiments outputs 1 if and only if $b=b'$.
\end{minipage}
}
\end{center}
\begin{center}
    Experiment $\EXP_{\mathcal{A}}^{\lhltr,q,\sigma,\chi}$
\end{center}

\begin{proof}
    The proof proceeds via a sequence of hybrid experiments.
We define games $\H_0,\ldots,\H_6$ and show that each
consecutive pair of hybrids is statistically indistinguishable.
The detailed arguments are given in Lemmas~\ref{leftlem1}--\ref{leftlem6}
below.
    
    \gameitem{Game $\H_0$} This corresponds to the original game with $b=0$.
    \begin{enumerate}
        \item \textbf{Setup Phase:} The challenger receives $1^n$ and $1^m$. 
        \begin{enumerate}
            \item It samples $(\matB,\td_{\matB})\leftarrow\trapgen(1^n,1^m,q)$.
            \item It samples $\matW\rand\Z_q^{2m^2n\times m}$.
            \item $\matT\leftarrow \samplepre([\matI_{2m^2}\otimes \matB\mid \matW],\begin{bmatrix} \matI_{2m^{2}}\otimes \td_{\matB} \\ \mathbf{0} \end{bmatrix}, \matI_{2m^2}\otimes \matG,\sigma)$ and sets $\pp:=(\matB,\matW,\matT)$.
            \item It then samples $\matR\rand \{-1,1\}^{m\times k}$, and $\matS\leftarrow \matB\matR$. 
    \item It sends $(\pp,\matS)$ to the adversary $\mathcal{A}$.
        \end{enumerate}
        \item \textbf{Query Phase:} The adversary makes polynomially many queries $\vecz\in\Z_q^n$. The challenger responds to each query as follows:
        \begin{enumerate}
            \item It samples $\vecs\leftarrow \samplepre(\matB,\td_{\matB},\vecz,\chi)$ and sends $\vecs$ to $\mathcal{A}$.
        \end{enumerate}
    \item The adversary outputs a bit $b'$.
    \end{enumerate}

    \gameitem{Game $\H_1$} This game is defined identically to Game $\H_0$ except for the way to generate
the matrices $\matB$, $\matT$ and the responses to the pre-image queries.
    \begin{enumerate}
        \item \textbf{Setup Phase:} The challenger receives $1^n$ and $1^m$. 
        \begin{enumerate}
        \item It samples $\boxed{(\matB_1,\td_{\matB_1})\leftarrow\trapgen(1^n,1^{m/2},q), (\matB_2,\td_{\matB_2})\leftarrow\trapgen(1^n,1^{m/2},q)}$ and sets $\matB=[\matB_1\mid \matB_2]$.
\item It samples $\matW=\left[\begin{smallmatrix}
\matW_1\\
\vdots\\
\matW_{2m^2}
\end{smallmatrix}\right]\rand\Z_q^{2m^2n\times m}$, where $\matW_i\in \Z_q^{n\times m}$ for each $i\in [2m^2]$.
\item  It samples $\matT_{2m^2+1}\leftarrow D_{\Z,\sigma}^{m\times 2m^3}$. For each $i\in [2m^2]$, it samples $\matT_i^{\up}\leftarrow D_{\Z,\sigma}^{m/2\times 2m^3}$, and $\matT_i^{\down}\leftarrow \samplepre(\matB_2,\td_{\matB_2},\vece_i^\top\otimes\matG-\matW_i\matT_{2m^2+1}-\matB_1\matT_i^{\up},\sigma)$, where $\vece_i\in \{0,1\}^{2m^2}$ represents the vector with the $i$-th entry 1 and the other entries 0. It sets $\matT_i=\left[\begin{smallmatrix}
                \matT_i^{\up}\\
                \matT_i^{\down}
            \end{smallmatrix}\right]$ and $\boxed{\matT=\left[\begin{smallmatrix}
                \matT_1\\
                \vdots\\
                \matT_{2m^2}\\
                \matT_{2m^2+1}
            \end{smallmatrix}\right]}$, and sets $\pp:=(\matB,\matW,\matT)$.
            \item It then samples $\matR_1,\matR_2\rand \{-1,1\}^{m/2\times k}$, and $\boxed{\matS\leftarrow \matB_1\matR_1+\matB_2\matR_2}$. 
             \item It sends $(\pp,\matS)$ to the adversary.
        \end{enumerate}
        \item \textbf{Query Phase:} The adversary makes polynomially many queries $\vecz\in\Z_q^n$. The challenger responds to each query as follows:
        \begin{enumerate}
        \item It samples $\vecs_1\leftarrow \chi^{m/2}$, and samples $\vecs_2\leftarrow \samplepre(\matB_2,\td_{\matB_2},\vecz-\matB_1\vecs_1,\chi)$ and sends $\boxed{\vecs=\left[\begin{smallmatrix}
            \vecs_1\\
            \vecs_2
        \end{smallmatrix}\right]}$ to $\mathcal{A}$.
        \end{enumerate}
        \item The adversary outputs a bit $b'$.
    \end{enumerate}

     \gameitem{Game $\H_2$} This game is defined identically to Game $\H_1$ except that the challenger 
samples the matrix $\matB_1$ uniformly at random instead of using $\trapgen$.
     \begin{enumerate}
        \item \textbf{Setup Phase:} The challenger receives $1^n$ and $1^m$. 
        \begin{enumerate}
            \item It samples $\boxed{\matB_1\rand\Z_q^{n\times m/2}}$, $(\matB_2,\td_{\matB_2})\leftarrow\trapgen(1^n,1^{m/2},q)$ and sets $\matB=[\matB_1\mid \matB_2]$.
    \end{enumerate}
    \end{enumerate}

     \gameitem{Game $\H_3$} This game is defined identically to Game $\H_2$ except that we replace the term $\matB_1\matR_1$ with a uniformly random matrix $\matS'$.
     \begin{enumerate}
        \item \textbf{Setup Phase:} The challenger receives $1^n$ and $1^m$. 
        \begin{enumerate}
            \item [d)] It then samples $\matS'\rand \Z_q^{n\times k}$ and $\matR_2\rand \{-1,1\}^{m/2\times k}$, and $\boxed{\matS\leftarrow \matS'+\matB_2\matR_2}$. 
        \end{enumerate}
    \end{enumerate}

\gameitem{Game $\H_4$} This game is defined identically to Game $\H_3$ except that we
do not add the term $\matB_2\matR_2$ to $\matS$, but instead sample $\matS$
uniformly at random.
     \begin{enumerate}
        \item \textbf{Setup Phase:} The challenger receives $1^n$ and $1^m$. 
        \begin{enumerate}
            \item [d)] It then samples $\boxed{\matS\rand \Z_q^{n\times k}}$. 
\end{enumerate}
    \end{enumerate}

\gameitem{Game $\H_5$} This game is defined identically to Game $\H_4$ except that the challenger samples $\matB_1$ using the algorithm $\trapgen$ instead of sampling it uniformly at random.
     \begin{enumerate}
        \item \textbf{Setup Phase:} The challenger receives $1^n$ and $1^m$. 
        \begin{enumerate}
            \item It samples $\boxed{(\matB_1,\td_{\matB_1})\leftarrow\trapgen(1^n,1^{m/2},q)}$, $(\matB_2,\td_{\matB_2})\leftarrow\trapgen(1^n,1^{m/2},q)$ and sets $\matB=[\matB_1\mid \matB_2]$.
        \end{enumerate}
    \end{enumerate}

     \gameitem{Game $\H_6$} This game is defined identically to Game $\H_5$ except that we
sample $(\matB,\td_{\matB})$ and $\matT$ as in Game $\H_0$, and
answer preimage queries directly using $\td_{\matB}$. This game corresponds to the original game with $b=1$.
     \begin{enumerate}
        \item \textbf{Setup Phase:} The challenger receives $1^n$ and $1^m$. 
        \begin{enumerate}
            \item [a)] It samples $\boxed{(\matB,\td_{\matB})\leftarrow\trapgen(1^n,1^{m},q)}$.
           \item [c)] $\boxed{\matT\leftarrow \samplepre([\matI_{2m^2}\otimes \matB\mid \matW],\begin{bmatrix} \matI_{2m^{2}}\otimes \td_{\matB} \\ \mathbf{0} \end{bmatrix}, \matI_{2m^2}\otimes \matG,\sigma)}$ and sets $\pp:=(\matB,\matW,\matT)$.
        \end{enumerate}
        \item \textbf{Query Phase:} The adversary makes polynomially many queries $\vecz\in\Z_q^n$. The challenger responds to each query as follows:
        \begin{enumerate}
            \item It samples $\boxed{\vecs\leftarrow \samplepre(\matB,\td_{\matB},\vecz,\chi)}$ and sends $\vecs$ to $\mathcal{A}$.
        \end{enumerate}
    \end{enumerate}

\begin{lemma}\label{leftlem0}
    Suppose that $\chi,\sigma\geq \chi_0$, where $\chi_0$ is the polynomial given in Lemma \ref{trapdoor}. We have $\H_0\overset{s}{\approx}\H_1$.
\end{lemma}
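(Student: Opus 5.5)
We compare the adversary's full view in $\H_0$ and $\H_1$, which consists of $(\matB,\matW,\matT,\matS)$ together with all answers to preimage queries. The plan is to show that each component in $\H_1$ is statistically close to its $\H_0$ counterpart, conditioned on the components before it. We proceed in four steps: first $\matB$, then $\matS$, then the query answers $\vecs$, and finally $\matT$.

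\textbf{Step 1: the matrix $\matB$.} In $\H_0$, $\matB$ is $2^{-n}$-close to uniform over $\Z_q^{n\times m}$ by Lemma~\ref{trapdoor}. In $\H_1$, $\matB=[\matB_1\mid\matB_2]$ with both halves produced by $\trapgen(1^n,1^{m/2},q)$, so each half is $2^{-n}$-close to uniform. This requires $m/2\geq m_0$, which holds since $m>2(n+1)\log q+\omega(\log n)$.

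\textbf{Step 2: the matrix $\matS$.} Writing $\matR=\left[\begin{smallmatrix}\matR_1\\ \matR_2\end{smallmatrix}\right]$ with $\matR_1,\matR_2\rand\{-1,1\}^{m/2\times k}$, we have $\matB\matR=\matB_1\matR_1+\matB_2\matR_2$. Hence $\matS$ is generated identically in both games, given $\matB$.

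\textbf{Step 3: the query answers.} Fix $\matB$ and suppose $\matB_2$ carries a valid trapdoor. We must show that the two-stage sampler (draw $\vecs_1\leftarrow D_{\Z,\chi}^{m/2}$, then $\vecs_2\leftarrow(\matB_2)^{-1}_\chi(\vecz-\matB_1\vecs_1)$) outputs $\matB^{-1}_\chi(\vecz)$ up to negligible statistical distance. We argue this as follows.
\begin{itemize}
\item The target distribution $D_{\Z,\chi}^m$ conditioned on $\matB\vecs=\vecz$ factors. Its marginal on $\vecs_1$ is proportional to $\rho_\chi(\vecs_1)\cdot\rho_\chi\big(\Lambda^{\perp}_{\vecz-\matB_1\vecs_1}(\matB_2)\big)$. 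Given $\vecs_1$, its conditional on $\vecs_2$ is exactly $(\matB_2)^{-1}_\chi(\vecz-\matB_1\vecs_1)$.
\item Since $\chi\geq\chi_0$ exceeds the smoothing parameter of $\Lambda^\perp(\matB_2)$, the coset mass $\rho_\chi(\Lambda^\perp_{\vecu}(\matB_2))$ is the same for every $\vecu$ up to a factor $1\pm\negl$. Therefore the $\vecs_1$-marginal is statistically close to $D_{\Z,\chi}^{m/2}$.
\item Lemma~\ref{trapdoor} ensures that $\samplepre$ with $\td_{\matB_2}$ realizes the conditional up to distance $2^{-n}$.
\item In $\H_0$, $\samplepre$ with $\td_{\matB}$ realizes $\matB^{-1}_\chi(\vecz)$ up to $2^{-n}$.
\item A union bound over the polynomially many queries completes the step.
\end{itemize}

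\textbf{Step 4: the trapdoor $\matT$.} In $\H_0$, by Lemma~\ref{trapdoor} applied to the block trapdoor, $\matT$ is close to $[\matI_{2m^2}\otimes\matB\mid\matW]^{-1}_\sigma(\matI_{2m^2}\otimes\matG)$, sampled column by column. The lattice is
\[
\{(\matT_1,\ldots,\matT_{2m^2+1}) : \matB\matT_i+\matW_i\matT_{2m^2+1}=\vece_i^\top\otimes\matG \ \ \forall i\}.
\]
The same factorization argument as in Step 3 applies here. First, $\matT_{2m^2+1}$ has marginal close to $D_{\Z,\sigma}$, because for each $i$ the coset mass of $\Lambda^\perp(\matB)$ is nearly constant. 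Second, given $\matT_{2m^2+1}$, the blocks $\matT_i$ are independent Gaussian preimages under $\matB$ of $\vece_i^\top\otimes\matG-\matW_i\matT_{2m^2+1}$. Third, each such preimage splits into $\matT_i^{\up}\leftarrow D_{\Z,\sigma}$ followed by $\matT_i^{\down}$, a preimage under $\matB_2$, exactly as in Step 3. This is precisely the $\H_1$ sampling procedure.

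\textbf{Main obstacle.} The crux of the proof is the smoothing-parameter claim: $\chi,\sigma\geq\chi_0=\tau\cdot\omega(\sqrt{\log n})$ must exceed $\eta_\epsilon(\Lambda^\perp(\matB_2))$ and $\eta_\epsilon(\Lambda^\perp(\matB))$ with negligible $\epsilon$. This follows from the existence of a short basis of quality $\tau$ (the GPV bound $\eta_\epsilon\leq\|\tilde{\mathbf{S}}\|\omega(\sqrt{\log n})$). Near-uniformity of the coset masses then gives near-uniform marginals. Summing all errors yields a polynomial number of $2^{-n}$ and $\negl$ terms, so $\H_0\overset{s}{\approx}\H_1$.
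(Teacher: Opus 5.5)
Your proposal is correct and follows the same component-by-component comparison as the paper ($\matB$, $\matS$, the preimage answers, and $\matT$). Your coset-mass/smoothing argument for the marginals of $\vecs_1$, $\matT_{2m^2+1}$ and $\matT_i^{\up}$ supplies a justification the paper omits: the paper simply asserts that the two-stage sampler gives exactly the same distribution, whereas it is only statistically close.

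One small correction: $\trapgen(1^n,1^{m/2},q)$ needs $m/2\geq m_0$. For an unspecified $m_0=O(n\log q)$ this does not follow from $m>2(n+1)\log q+\omega(\log n)$, so it should be stated as an extra hypothesis, which the paper also assumes implicitly.
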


\begin{proof}
First we consider the matrix $\matB$ in two games. In Game~$\H_0$, the challenger samples
$(\matB,\td_{\matB})\leftarrow\trapgen(1^n,1^m,q)$.
In Game~$\H_1$, it samples
$(\matB_1,\td_{\matB_1}),(\matB_2,\td_{\matB_2})\leftarrow\trapgen(1^n,1^{m/2},q)$
and sets $\matB=[\matB_1\mid\matB_2]$ instead.
Since $\trapgen$ outputs a matrix that is statistically close to a uniform one together with a
trapdoor, the resulting matrix $\matB$ in both games is statistically indistinguishable from each other.

Next we consider the matrix $\matT$. It suffices to consider each block $\matT_i$. In Game $\H_0$, the block $\matT_i$ is distributed as a discrete Gaussian distribution with parameter $\sigma$ conditioned on \begin{align}\label{eq1}
    \matB\matT_i+\matW_i\matT_{2m^2+1}=\vece_i\otimes \matG.
\end{align} In Game $\H_1$, the matrices $\matT_i^{\up}$ and $\matT_i^{\down}$ are sampled from discrete Gaussian distribution with parameter $\sigma$ satisfying $\matB_2\matT_i^{\down}=\vece_i\otimes \matG-\matW_i\matT_{2m^2+1}-\matB_1\matT_i^{\up}$, which aligns with \eqref{eq1} if we write $\matB=[\matB_1\mid\matB_2]$ and $\matT_i=\left[\begin{smallmatrix}
                \matT_i^{\up}\\
                \matT_i^{\down}
            \end{smallmatrix}\right]$. 

We now consider the matrix $\matS$ in two games. In Game~$\H_0$, the challenger samples
$\matR\leftarrow\{-1,1\}^{m\times k}$ and sets $\matS=\matB\matR$.
In Game~$\H_1$, it samples
$\matR_1,\matR_2\leftarrow\{-1,1\}^{m/2\times k}$ independently and sets
$\matS=\matB_1\matR_1+\matB_2\matR_2$.
If we write
$\matR=\begin{bmatrix}\matR_1\\\matR_2\end{bmatrix}\in\{-1,1\}^{m\times k}$,
then the distribution of $\matS=\matB\matR$ in Game~$\H_1$ is identical to
that in Game~$\H_0$.

Finally, it suffices to consider the answers to preimage queries. In Game~$\H_0$, on input $\vecz\in\Z_q^n$ the challenger returns
$\vecs\leftarrow\samplepre(\matB,\td_{\matB},\vecz,\chi)$, which is a
discrete Gaussian preimage of $\vecz$ under $\matB$. 
In Game~$\H_1$, on input the same $\vecz$ it samples
$\vecs_1\leftarrow\chi^{m/2}$ and then
\[
  \vecs_2\leftarrow
  \samplepre(\matB_2,\td_{\matB_2},\vecz-\matB_1\vecs_1,\chi),
\]
and outputs $\vecs=\begin{bmatrix}\vecs_1\\\vecs_2\end{bmatrix}$.
By the preimage-sampling process and the well-sampleness of the preimage with respect to $\samplepre$, this procedure yields exactly
the same discrete Gaussian distribution over all solutions
$\vecs\in\Z_q^m$ to $\matB\vecs=\vecz\bmod q$, which is identical to that in $\H_0$.

Putting everything together, we see that the joint distribution of
$(\pp,\matS)$ and all oracle answers in Game~$\H_1$ is
statistically indistinguishable from that in Game~$\H_0$.
\end{proof}

\begin{lemma}\label{leftlem1}
  Suppose that $m\geq m_0$, where $m_0$ is the  polynomials given in Lemma \ref{trapdoor}. We have $\H_1\overset{s}{\approx}\H_2$.
\end{lemma}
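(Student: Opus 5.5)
The plan is to show that the only change between Game~$\H_1$ and Game~$\H_2$, namely how $\matB_1$ is produced, cannot be detected. We argue that the whole view of the adversary in $\H_1$ is a fixed randomized function of $\matB_1$ alone, and that this function never uses the trapdoor $\td_{\matB_1}$. The claim then follows from the trapdoor-distribution property of Lemma~\ref{trapdoor}.

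We will go through every step of $\H_1$ and confirm that $\td_{\matB_1}$ never appears.
\begin{itemize}
\item In setup, $\matW$ and $\matT_{2m^2+1}$ are sampled independently of $\matB_1$.
\item Each $\matT_i^{\up}$ is a plain discrete Gaussian sample.
\item Each $\matT_i^{\down}$ is obtained from $\samplepre$ using only $\td_{\matB_2}$, with target $\vece_i^\top\otimes\matG-\matW_i\matT_{2m^2+1}-\matB_1\matT_i^{\up}$; this target depends only on the matrix $\matB_1$.
\item $\matS=\matB_1\matR_1+\matB_2\matR_2$ uses $\matB_1$ only as a matrix.
\item In the query phase, $\vecs_1$ is drawn directly from $\chi^{m/2}$, and $\vecs_2$ is obtained from $\samplepre$ under $\td_{\matB_2}$ with target $\vecz-\matB_1\vecs_1$.
\end{itemize}
Hence we can write a simulator $\mathcal{S}$ that, on input an arbitrary $\matB_1\in\Z_q^{n\times m/2}$, generates $(\matB_2,\td_{\matB_2})$ on its own and runs all remaining steps of $\H_1$ while interacting with $\mathcal{A}$. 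When $\matB_1$ comes from $\trapgen(1^n,1^{m/2},q)$ with the trapdoor discarded, $\mathcal{S}$ reproduces $\H_1$ exactly; when $\matB_1$ is uniform, it reproduces $\H_2$ exactly.

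Next we apply Lemma~\ref{trapdoor} in dimension $m/2$, which requires $m/2\geq m_0$. This is implied by the stated hypothesis $m\ge m_0$ once $m_0$ is read as accounting for the halving (if needed, we assume $m\geq 2m_0$). The lemma gives statistical distance at most $2^{-n}$ between the trapgen matrix and a uniform one. Since $\mathcal{S}$ together with $\mathcal{A}$ is a (possibly randomized) function of $\matB_1$, the data-processing inequality shows that the joint views in $\H_1$ and $\H_2$ are within $2^{-n}=\negl(\lambda)$, even against unbounded adversaries making polynomially many queries.

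The main obstacle, though a mild one, is to check that no hidden step of $\H_1$ uses $\td_{\matB_1}$. The query answers are the place to be careful: a naive implementation would sample preimages under the full matrix $\matB$ with a combined trapdoor. The design of $\H_1$, which samples $\vecs_1$ directly and delegates the rest to $\td_{\matB_2}$, is exactly what removes this dependence.
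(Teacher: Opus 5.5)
Your proposal is correct and matches the paper's argument: $\td_{\matB_1}$ is never used in $\H_1$, so the adversary's view is a randomized function of $\matB_1$ alone. The trapdoor-distribution property of Lemma~\ref{trapdoor} then gives statistical closeness. Your observation that the lemma is applied in dimension $m/2$, so one really needs $m/2\geq m_0$ rather than the stated $m\geq m_0$, is a fair point that the paper's proof glosses over.
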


\begin{proof}
This follows directly from the well-sampleness of $\trapgen$: in Game
  $\H_1$ the matrix $\matB_1$ output by $\trapgen$ is statistically
  indistinguishable from a uniform matrix in $\Z_q^{n\times m/2}$, and
  its trapdoor $\td_{\matB_1}$ is neither used nor revealed. All other
  values are generated identically in the two games.
 Moreover, all subsequent steps depend on $\matB_1$ itself or $(\matB_2,\td_{\matB_2})$ in the same way in both games. Hence the lemma holds. 
\end{proof}

\begin{lemma}\label{leftlem2}
    Suppose that $m\geq 2n\log q+\omega(\log n)$. Let $k=k(n)$ be some polynomial. We have $\H_2\overset{s}{\approx}\H_3$.
\end{lemma}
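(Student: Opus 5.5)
The plan is to reduce $\H_2\overset{s}{\approx}\H_3$ to the plain leftover hash lemma (Lemma~\ref{left}), applied to $\matB_1\in\Z_q^{n\times m/2}$ and $\matR_1\in\{-1,1\}^{m/2\times k}$. The only difference between the two games is that $\matB_1\matR_1$ is replaced by a uniform $\matS'$. First I check which quantities in $\H_2$ depend on $\matR_1$. The answer is only $\matS=\matB_1\matR_1+\matB_2\matR_2$. Everything else depends only on $\matB_1$, $(\matB_2,\td_{\matB_2})$, $\matR_2$ and fresh randomness, and $\matR_1$ enters nowhere else. This covers $\matW$, the blocks $\matT_{2m^2+1},\matT_i^{\up},\matT_i^{\down}$, and the answers $\vecs=(\vecs_1,\vecs_2)$ to preimage queries, which use only $\td_{\matB_2}$ and the public $\matB_1$. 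The point of passing through $\H_1,\H_2$ was exactly this: $\matB_1$ is now uniform, and no trapdoor for $\matB_1$ or for the whole of $\matB$ is in use. So $\matB_1$ is a public uniform matrix and $\matR_1$ is a hidden uniform sign matrix.

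Next I build a simulator $\mathcal{S}$ that takes a pair $(\matB_1,\matS^*)$, drawn either as $(\matB_1,\matB_1\matR_1)$ or as $(\matB_1,\matS')$ with $\matS'$ uniform. $\mathcal{S}$ itself generates $(\matB_2,\td_{\matB_2})\leftarrow\trapgen(1^n,1^{m/2},q)$, $\matW$, all blocks of $\matT$ as in $\H_2$, and $\matR_2$. It sets $\matS=\matS^*+\matB_2\matR_2$ and answers queries exactly as in $\H_2$, with $\vecs_1\leftarrow\chi^{m/2}$ and $\vecs_2\leftarrow\samplepre(\matB_2,\td_{\matB_2},\vecz-\matB_1\vecs_1,\chi)$. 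On the first input distribution the adversary's view is exactly $\H_2$; on the second it is exactly $\H_3$. The distinguishing advantage is therefore at most the statistical distance in Lemma~\ref{left}. Since $\mathcal{S}$ is a possibly randomized function of its input and post-processing does not increase statistical distance, unboundedly many (polynomially many) queries pose no issue.

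The main obstacle is the parameter bookkeeping: Lemma~\ref{left} needs the width to exceed $(n+1)\log q+\omega(\log n)$, and here the width is $m/2$, not $m$. The hypothesis $m\geq 2n\log q+\omega(\log n)$ gives $m/2\geq n\log q+\omega(\log n)$, which is marginally short of $(n+1)\log q$. I would handle this by appealing to the statement of Lemma~\ref{lefttrap}, whose hypothesis $m>2(n+1)\log q+\omega(\log n)$ gives $m/2>(n+1)\log q+\omega(\log n)$. Alternatively, I would note that the generalized leftover hash lemma only needs $m/2\geq n\log q+\omega(\log n)$ for statistical distance $\negl$. Finally, $k=k(n)$ is polynomial and Lemma~\ref{left} holds for any polynomial $k$, which completes the argument.
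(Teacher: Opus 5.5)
Your proposal is correct and follows essentially the same route as the paper: you embed the leftover-hash-lemma challenge $(\matB_1,\matS^*)$, set $\matS=\matS^*+\matB_2\matR_2$, and simulate $\matW$, $\matT$ and all preimage answers using only $\matB_1$ and $\td_{\matB_2}$, so the view is exactly $\H_2$ or $\H_3$. Your parameter bookkeeping is more careful than the paper's. The paper invokes $m/2\ge 2n\log q+\omega(\log n)$, which does not follow from the lemma's stated hypothesis, whereas appealing to the hypothesis $m>2(n+1)\log q+\omega(\log n)$ of Lemma~\ref{lefttrap}, as you do, closes that gap.
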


\begin{proof}
    Suppose there exists an adversary $\mathcal{A}$ that distinguishes between Hybrid $\H_2$ and $\H_3$ with non-negligible probability. We construct an adversary $\mathcal{B}$ that violates the leftover hash lemma (Lemma \ref{left}). The algorithm $\mathcal{B}$ proceeds as follows:
    \begin{itemize}
        \item The algorithm $\mathcal{B}$ receives a challenge $(\matB_1,\matS')$, where $\matB_1\rand\Z_q^{n\times m/2}$ and $\matS'$ is either sampled from $\matS'=\matB_1\matR_1$ with $\matR_1\rand\{-1,1\}^{m/2\times k}$ or $\matS'\rand\Z_q^{n\times k}$.  
      \item It samples $(\matB_2,\td_{\matB_2})\leftarrow\trapgen(1^n,1^{m/2},q)$ and sets $\matB=[\matB_1\mid \matB_2]$. 
         \item It samples $\matW=\left[\begin{smallmatrix}
                \matW_1\\
                \vdots\\
                \matW_{2m^2}
            \end{smallmatrix}\right]\rand\Z_q^{2m^2n\times m}$, where $\matW_i\in \Z_q^{n\times m}$ for each $i\in [2m^2]$.
            \item  It samples $\matT_{2m^2+1}\leftarrow D_{\Z,\sigma}^{m\times 2m^3}$. For each $i\in [2m^2]$, it samples $\matT_i^{\up}\leftarrow D_{\Z,\sigma}^{m/2\times 2m^3}$, and $\matT_i^{\down}\leftarrow \samplepre(\matB_2,\td_{\matB_2},\vece_i\otimes\matG-\matW_i\matT_{2m^2+1}-\matB_1\matT_i^{\up},\sigma)$, where $\vece_i\in \{0,1\}^{2m^2}$ represents the vector with the $i$-th entry 1 and the other entries 0. It sets $\matT_i=\left[\begin{smallmatrix}
                \matT_i^{\up}\\
                \matT_i^{\down}
            \end{smallmatrix}\right]$ and $\matT=\left[\begin{smallmatrix}
                \matT_1\\
                \vdots\\
                \matT_{2m^2}
            \end{smallmatrix}\right]$, and sets $\pp:=(\matB,\matW,\matT)$.
            \item It then samples $\matR_2\rand \{-1,1\}^{m/2\times k}$, and $\matS\leftarrow \matS'+\matB_2\matR_2$. 
             \item It sends $(\pp,\matB,\matS)$ to the adversary $\mathcal{A}$.
        \item It responds to each query $\vecz\in \Z_q^n$ given by the adversary $\mathcal{A}$ as follows:
        \begin{itemize}
            \item It samples $\vecs_1\leftarrow\chi^{m/2}$, and samples $\vecs_2\leftarrow\samplepre(\matB_2,\td_{\matB_2},\vecz-\matB_1\vecs_1,\chi)$ and answers with $\vecs=\left[\begin{smallmatrix}
                \vecs_1\\
                \vecs_2
            \end{smallmatrix}\right]$.
        \end{itemize}
        \item It outputs whatever the adversary $\mathcal{A}$ outputs. 
    \end{itemize}
    From the definition of the algorithm $\mathcal{B}$, we notice that when $\matS'=\matB_1\matR_1$ with $\matR_1\rand\{-1,1\}^{m/2\times k}$, the adversary $\mathcal{B}$ simulates the challenger in Game $\H_2$ perfectly, when $\matS'\rand\Z_q^{n\times k}$, the  adversary $\mathcal{B}$ simulates the challenger in Game $\H_3$ perfectly. Hence the adversary $\matB$ in the leftover hash lemma game has the same advantage as that of the adversary $\mathcal{A}$ of distinguish between Game $\H_2$ and $\H_3$. By the leftover hash lemma (Lemma~\ref{left}) and the assumption
  $m/2 \ge 2n\log q + \omega(\log n)$, this advantage must be
  negligible, contradicting our assumption on~$\A$.
  Hence $\H_2 \overset{s}{\approx} \H_3$.
\end{proof}

\begin{lemma}
    We have $\H_3\equiv \H_4$. 
\end{lemma}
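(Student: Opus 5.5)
The plan is to show that the joint distribution of everything the adversary sees is literally the same in $\H_3$ and $\H_4$. The two games differ only in step (d) of the setup. In $\H_3$ the challenger outputs $\matS=\matS'+\matB_2\matR_2$ with $\matS'\rand\Z_q^{n\times k}$ independent of everything else. In $\H_4$ it outputs $\matS\rand\Z_q^{n\times k}$ directly. The identity will come from the fact that adding any independent quantity to a uniform element of the group $\Z_q^{n\times k}$ leaves it uniform.

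First, I would make the dependency structure explicit. In $\H_3$, the variables $\matS'$ and $\matR_2$ appear only in the definition of $\matS$. They are not used to build $\matB_1$, $(\matB_2,\td_{\matB_2})$, $\matW$ or $\matT$, nor in answering preimage queries, which use only $\matB_1$, $\td_{\matB_2}$ and fresh randomness. Write $X$ for the collection $(\matB_1,\matB_2,\td_{\matB_2},\matW,\matT)$ together with all the challenger's internal coins for the query phase. Then $X$, $\matR_2$ and $\matS'$ are mutually independent, and $\matS'$ is uniform on $\Z_q^{n\times k}$.

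Second, I would condition on any fixed values of $X$ and $\matR_2$. The map $\matS'\mapsto\matS'+\matB_2\matR_2$ is then a translation, hence a bijection of $\Z_q^{n\times k}$. So conditioned on $(X,\matR_2)$, the matrix $\matS$ is uniform and independent of $X$. The same holds in $\H_4$ by definition. Therefore the pair $(X,\matS)$ has identical distribution in both games.

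Finally, the adversary's view, namely $(\pp,\matS)$ and all query answers, is computed by the same deterministic function of $(X,\matS)$ and the adversary's adaptive queries in both games. Hence the views are identically distributed and $\H_3\equiv\H_4$. There is no real obstacle here. The only point needing care is checking that $\matR_2$ and $\matS'$ are not reused elsewhere, for example in the query responses; by the definitions of $\H_1$–$\H_3$ they are not.
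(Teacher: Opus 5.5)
Your proposal is correct and follows the paper's argument: $\matS'$ is uniform and independent of $\matB_2\matR_2$, so the translate $\matS=\matS'+\matB_2\matR_2$ is uniform, and nothing else in the game changes. Your conditioning on $(X,\matR_2)$ is slightly more careful than the paper, which only states that $\matS$ is marginally uniform, whereas you check that $\matS$ is uniform jointly with $\pp$, $\td_{\matB_2}$ and the query-phase coins, which is what identical views actually require.
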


\begin{proof}
    The difference between Hybrid $\H_3$ and Hybrid $\H_4$ is merely syntactic. In Game $\H_3$, $\matS' \rand \Z_q^{n\times k}$ is uniform and
  $\matB_2\matR_2$ is independent of $S'$, so
  $\matS = \matS' + \matB_2\matR_2$ is also uniform over $\Z_q^{n\times k}$.
  In Game $\H_4$, $\matS$ is sampled directly and uniformly from
  $\Z_q^{n\times k}$. Thus $\matS$ has exactly the same distribution in
  both games, and all other variables are generated identically.
  Hence $\H_3 \equiv \H_4$.
\end{proof}

\begin{lemma}
    Suppose that $m\geq m_0$, where $m_0$ is the  polynomials given in Lemma \ref{trapdoor}. We have $\H_4\overset{s}{\approx}\H_5$.
\end{lemma}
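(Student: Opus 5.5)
This is the mirror image of Lemma~\ref{leftlem1}, and I would argue it the same way. The only difference between Game~$\H_4$ and Game~$\H_5$ is how $\matB_1\in\Z_q^{n\times m/2}$ is produced. In $\H_4$ it is sampled uniformly at random. In $\H_5$ it is sampled as $(\matB_1,\td_{\matB_1})\leftarrow\trapgen(1^n,1^{m/2},q)$.

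The first step is to check where $\matB_1$ and $\td_{\matB_1}$ actually enter the two games.
\begin{itemize}
\item The trapdoor $\td_{\matB_1}$ is never used. Every $\samplepre$ call in both games is made with respect to $(\matB_2,\td_{\matB_2})$:
  \begin{itemize}
  \item when generating $\matT_i^{\down}$ with target $\vece_i\otimes\matG-\matW_i\matT_{2m^2+1}-\matB_1\matT_i^{\up}$;
  \item when answering a preimage query with target $\vecz-\matB_1\vecs_1$.
  \end{itemize}
\item Since $\H_4$ samples $\matS\rand\Z_q^{n\times k}$ directly, $\matS$ is independent of $\matB_1$.
\item Hence the whole view of $\mathcal{A}$ is a randomized function $F(\matB_1)$. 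This function uses only $\matB_1$ itself, together with randomness and values independent of $\matB_1$: $\matB_2$, $\td_{\matB_2}$, $\matW$, $\matT_{2m^2+1}$, $\matT_i^{\up}$, $\vecs_1$, $\matS$. Crucially, $F$ is the same function in both games.
\end{itemize}

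The second step applies the trapdoor-distribution property of Lemma~\ref{trapdoor}, which holds because $m/2\geq m_0$ (I would read the hypothesis $m\geq m_0$ as covering the half-width matrices, exactly as in Lemma~\ref{leftlem1}). Under this property, the marginal distribution of $\matB_1$ output by $\trapgen$ is within statistical distance $2^{-n}$ of uniform on $\Z_q^{n\times m/2}$.

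The third step is data processing. Applying the same randomized map $F$ cannot increase statistical distance. This holds even though $\mathcal{A}$ makes polynomially many adaptive queries, since the interactive challenger is just a randomized function of $\matB_1$ and $\mathcal{A}$'s messages. So the views in $\H_4$ and $\H_5$ differ by at most $2^{-n}=\negl(\lambda)$.

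The only step needing care is the first: confirming that nothing in $\H_5$ depends on $\td_{\matB_1}$ beyond its marginal effect on $\matB_1$. In particular, $\matT$ and the query answers must not depend jointly on $(\matB_1,\td_{\matB_1})$. Inspecting the boxed descriptions of $\H_1$ through $\H_5$ shows they do not, and given that, the argument is immediate.
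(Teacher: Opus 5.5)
Your proof is correct and follows the paper's approach. The paper justifies this step by pointing back to Lemma~\ref{leftlem1}: $\matB_1$ output by $\trapgen$ is statistically close to uniform, $\td_{\matB_1}$ is never used or revealed, and everything else is generated identically, with your data-processing phrasing and your remark that the hypothesis should really read $m/2\geq m_0$ (since $\trapgen$ is run with $1^{m/2}$) being welcome precisions of that argument.
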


\begin{proof}
    The lemma follows the same reason as Lemma \ref{leftlem1}.
\end{proof}

\begin{lemma}\label{leftlem6}
    Suppose that $\chi,\sigma\geq \chi_0$, where $\chi_0$ is the polynomial given in Lemma \ref{trapdoor}. We have $\H_5\overset{s}{\approx}\H_6$.
\end{lemma}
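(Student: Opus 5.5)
The plan is to run the argument of Lemma~\ref{leftlem0} in reverse, since the only differences between $\H_5$ and $\H_6$ are the way $(\matB,\td_{\matB})$ and $\matT$ are produced and the way preimage queries are answered. In both games $\matS$ is uniform in $\Z_q^{n\times k}$ and independent of everything else, and $\matW$ is sampled identically. We may therefore drop $\matS$ and compare the joint distribution of $(\matB,\matW,\matT)$ together with the query answers.

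The first step concerns $\matB$. In $\H_5$ we have $\matB=[\matB_1\mid\matB_2]$ with both halves output by $\trapgen(1^n,1^{m/2},q)$. By the trapdoor-distribution property of Lemma~\ref{trapdoor} (with $m/2\geq m_0$), each half is within statistical distance $2^{-n}$ of uniform. Hence $\matB$ is $2^{1-n}$-close to uniform. The same lemma shows that $\matB$ output by $\trapgen(1^n,1^m,q)$ in $\H_6$ is $2^{-n}$-close to uniform, so the two marginals of $\matB$ are statistically close.

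The second step, which I expect to be the main point, is to show that conditioned on $(\matB,\matW)$, the remaining outputs in both games are statistically close to the \emph{ideal} conditional discrete Gaussians. These outputs are $\matT$ and the answers $\vecs$, and the ideal distributions do not depend on which trapdoor is held.

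\textbf{For $\matT$.} In $\H_6$, by the preimage-sampling property of Lemma~\ref{trapdoor} applied to $[\matI_{2m^2}\otimes\matB\mid\matW]$, the trapdoor $\left[\begin{smallmatrix}\matI\otimes\td_{\matB}\\ \mathbf 0\end{smallmatrix}\right]$ yields $\matT$ close to $[\matI_{2m^2}\otimes\matB\mid\matW]^{-1}_\sigma(\matI_{2m^2}\otimes\matG)$. This requires $\sigma\geq\chi_0$, adjusted for the block structure. In $\H_5$, sampling $\matT_{2m^2+1}$ and $\matT_i^{\up}$ as spherical Gaussians and then $\matT_i^{\down}$ with $\td_{\matB_2}$ is the standard two-stage sampler. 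As argued in Lemma~\ref{leftlem0}, its output is statistically close to the same conditional Gaussian on solutions of $\matB\matT_i+\matW_i\matT_{2m^2+1}=\vece_i^\top\otimes\matG$. To justify this I would invoke the fact that, for $\sigma$ above the smoothing parameter of $\Lambda^\perp(\matB_2)$, the coset mass of $D_{\Z,\sigma}$ is nearly uniform over targets. This makes the marginal of the first stage close to that of the conditional Gaussian.

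\textbf{For queries.} The answers $\vecs=(\vecs_1,\vecs_2)$ are handled identically, using $\chi\geq\chi_0$. They are close to $\matB^{-1}_\chi(\vecz)$, which is also what $\samplepre(\matB,\td_{\matB},\vecz,\chi)$ gives in $\H_6$.

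The final step is a union bound. Summing the per-sample statistical distances (each $2^{-n}$ or negligible) over $2m^2$ blocks and polynomially many queries gives negligible total distance, so $\H_5\overset{s}{\approx}\H_6$. The delicate part is the second step: checking that the smoothing condition for $\Lambda^\perp(\matB_2)$ holds at width $\sigma$ and $\chi$. Here I would rely on $\chi_0=\tau\cdot\omega(\sqrt{\log n})$ exceeding the smoothing parameter, by the bound implied by the $\tau$-trapdoor.
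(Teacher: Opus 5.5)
Your proposal is correct and follows essentially the paper's argument. The paper's proof of this lemma just says it follows for the same reason as Lemma~\ref{leftlem0}: $\matB$ is statistically close to uniform in both games, $\matS$ is uniform in both, and $\matT$ and the query answers are distributed as the same conditional discrete Gaussians given $(\matB,\matW)$. Your smoothing-parameter justification for the two-stage sampler is a useful addition, because it turns the paper's claim of ``exactly the same'' distribution in Lemma~\ref{leftlem0} into the correct statement that the distributions are statistically close.
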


\begin{proof}
    The lemma follows the same reason as Lemma \ref{leftlem0}.
\end{proof}

By combining Lemmas \ref{leftlem0}--\ref{leftlem6}, we can conclude that $\H_0\overset{s}{\approx}\H_6$, which completes the proof of the theorem.
\end{proof}

\subsection{Matrix commitment}
Our scheme relies on the matrix commitment presented by \cite{Wee25}. We refer to the formal description of the algorithm to \cite{Wee25}. Roughly, the matrix commitment scheme first gives the public parameters $\pp:=(\matB,\matW,\matT)$ which are sampled according to the same procedure as in the succinct $\LWE$ assumption. It then commits to a matrix $\matM\in\Z_q^{n\times L}$ and the commitment is a matrix $\matC\in \Z_q^{n\times m}$. The verification matrix $\matV_{L}\in \Z_q^{n\times L}$ only depends on the width of the matrix $\matM$. The opening matrix $\matZ\in \Z_q^{n\times L}$ is such that $$\matC\cdot\matV_{L}=\matM-\matB\cdot\matZ.$$
\begin{lemma}[Matrix Commitment \cite{Wee25}]\label{matcommit} Let $n,m,q$ be lattice parameters with $m\geq 2n\log q$. There exists a tuple of deterministic efficient algorithms $(\Com,\ver,\Open)$ with the following syntax:
\begin{itemize}
    \item $\Com(\pp,\matM)\rightarrow\matC$: On input the public parameter $\pp$, and a matrix $\matM\in \Z_q^{n\times L}$, the algorithm outputs a matrix $\matC\in\Z_q^{n\times m}$.
    \item $\ver(\pp,1^L)\rightarrow\matV_L$: On input the public parameters $\pp$ and the length parameter $L$, the algorithm outputs a matrix $\matV_{L}\in\Z_q^{m\times L}$.
    \item $\Open(\pp,\matM)\rightarrow\matZ$: On input the public parameters $\pp$, and a matrix $\matM\in \Z_q^{n\times L}$, the algorithm outputs a matrix $\matZ\in \Z_q^{m\times L}$.
\end{itemize}
For all $\pp=(\matB,\matW,\matT)$, where $\matB\in\Z_q^{n\times m},\matW\in\Z_q^{2m^2n\times m},\matT\in \Z_q^{(2m^2+1)m\times 2m^3}$ are such that $[\matI_{2m^2}\otimes \matB\mid \matW]\cdot\matT=\matI_{2m^2}\otimes \matG$, all $L\in\mathbb{N}$, all matrices $\matM\in \Z_q^{n\times L}$, the matrices $\matC\leftarrow\Com(\pp,\matM),\matV_L\leftarrow\ver(\pp,1^L),\matZ\leftarrow\Open(\pp,\matM)$ satisfy 
\begin{align*}
    &\matC\cdot\matV_L=\matM-\matB\cdot\matZ,\\
    &\|\matV_{L}\|\leq O(\|\matT\|^4\cdot m^4\log q), \quad\|\matZ\|\leq O(\|\matT\|\cdot m^7\log q\log L).
\end{align*}    
\end{lemma}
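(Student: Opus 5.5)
The matrix commitment of Lemma~\ref{matcommit} is taken from \cite{Wee25}, so my plan is to recall Wee's construction and verify the two claimed properties: the commitment equation and the norm bounds. I would first reduce to the case $L = 2m^2$. For general $L$, I would pad $\matM$ with zero columns to a width that is a power of the base block size, and then compose the basic commitment recursively in a tree of depth $O(\log L)$. Each level compresses a block of $2m^2$ matrices of width $m$ into one matrix of width $m$. This tree structure is what produces the $\log L$ factor in $\|\matZ\|$ but no $L$-dependence in $\|\matV_L\|$.

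For the basic step I would use the defining relation $[\matI_{2m^2}\otimes\matB\mid\matW]\matT=\matI_{2m^2}\otimes\matG$. Write $\matT=[\matT_1;\dots;\matT_{2m^2};\hat{\matT}]$, so that $\matB\matT_i+\matW_i\hat{\matT}=\vece_i^\top\otimes\matG$ for each block. Given $\matM=[\matM_1\mid\cdots\mid\matM_{2m^2}]$, I would define
\[
\matC:=\sum_i \matM_i\,\matG^{-1}(\matW_i)\cdot(\text{fixed public short matrix}),
\]
where the fixed matrix is chosen so that multiplying by $\hat{\matT}$ and by a public gadget-decomposition matrix recovers each $\matM_i$. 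The residual terms are then exactly of the form $\matB\cdot(\text{short})$.

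Concretely, the computation runs as follows. Multiplying $\matB\matT_i+\matW_i\hat{\matT}=\vece_i^\top\otimes\matG$ on the right by $\matG^{-1}$-decompositions turns $\matM_i$ into $\matM_i\matG^{-1}(\cdot)$ times these quantities. This yields $\matC\matV_i=\matM_i-\matB\matZ_i$, with $\matZ_i$ a product of $\matG^{-1}(\matM_i)$-type bits and $\matT_i$.

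Since $\matG^{-1}$ outputs entries in $\{0,1\}$ and $\matT$ has entries bounded by $\|\matT\|$, I would bound each product entrywise. Every matrix product contributes a factor equal to its inner dimension, which is $O(m)$ or $O(m^2)$. The verification matrix involves a constant number (four) of $\matT$-factors, giving $\|\matV\|\le O(\|\matT\|^4 m^4\log q)$. The opening $\matZ$ involves a single $\matT$ factor per level and accumulates additively over the $O(\log L)$ levels, giving the $\|\matT\|\,m^7\log q\log L$ bound.

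Determinism of all three algorithms is immediate, since $\matG^{-1}$ is deterministic binary decomposition and no sampling occurs.

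I expect the main obstacle to be arranging the tree composition so that $\matV_L$ does not grow with depth. Naively, the product of verification matrices across levels would multiply norms at each level. The fix I would try first is Wee's trick: re-express each level's verification as $\matG^{-1}$ of the previous one. This keeps entries binary between levels, so the norm is paid only once, while the opening errors $\matB\matZ$ add up linearly across levels. If that does not work cleanly, I would simply cite \cite{Wee25} for the precise algorithms and check only that the parameters there match the stated bounds.
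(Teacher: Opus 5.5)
The paper gives no proof of this lemma; it imports it from \cite{Wee25}. The only part of your plan that matches the paper is your fallback: cite Wee25 and check that its parameters (width $2m^2$, the stated norms) agree. The reconstruction you sketch before that does not prove the statement, and its basic step already fails. The only available relation is $\matB\matT_i+\matW_i\hat{\matT}=\vece_i^\top\otimes\matG$. It applies only when $\matW_i$ itself is followed by $\hat{\matT}$, so $\matC$ must be a \emph{left} combination of the $\matW_i$; placing $\matG^{-1}(\matW_i)$ to the right of $\matM_i$ never uses it. Data-dependent matrix coefficients $\mathbf{K}_i\in\Z_q^{n\times n}$ on the left do not help either: the residual $\sum_i\mathbf{K}_i\matB\matT_i$ is not of the form $\matB\cdot(\text{short})$, because $\mathbf{K}_i$ and $\matB$ do not commute. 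What works is short \emph{scalar} coefficients. For $\vecx\in\{0,1\}^{2m^2}$ one has $(\vecx^\top\otimes\matI_n)(\matI_{2m^2}\otimes\matB)=\matB(\vecx^\top\otimes\matI_m)$, the mixed-product identity used in the proof of Lemma~\ref{lem5}. Hence $\matC:=(\vecx^\top\otimes\matI_n)\matW$ satisfies $\matC\hat{\matT}=\vecx^\top\otimes\matG-\matB(\vecx^\top\otimes\matI_m)\matT'$, where $\matT'$ denotes the first $2m^3$ rows of $\matT$. So one step absorbs only $2m^2$ short coefficients (e.g.\ the bits of $\matG^{-1}$ of two $n\times m$ matrices), not ``$2m^2$ matrices of width $m$''.

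The more serious gap is that the actual content of the lemma is asserted rather than derived: $\|\matV_L\|$ independent of $L$, and $\|\matZ\|$ growing only like $\log L$ with a single factor $\|\matT\|$. Compose level relations $\matC\matV^{(1)}=[\matC_0\mid\matC_1]-\matB\matZ^{(1)}$ and $\matC_b\matV^{(2)}=\matM_b-\matB\matZ^{(2)}_b$. This gives $\matV_L=\matV^{(1)}(\matI_2\otimes\matV^{(2)})\cdots$ and opening $[\matZ^{(2)}_0\mid\matZ^{(2)}_1]+\matZ^{(1)}(\matI_2\otimes\matV^{(2)})$. In general both norms then grow like $(\poly(m)\|\matT\|)^{\log L}$, and additive accumulation of $\matZ$ presupposes the very bounded-$\matV$ property at issue. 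Your remedy (``$\matG^{-1}$ of the previous level's verification'') does not resolve this. Rewriting a short integer matrix through a gadget decomposition changes only its representation, not the matrix $\matV_L$ whose norm must be bounded. The decompositions that genuinely let a parent absorb its children are those of the child commitments, which depend on $\matM$ and so must live in $\matC$ and $\matZ$, not in $\matV_L$. Public decompositions such as $\matG^{-1}(\matW_i)$ could appear in $\matV_L$, but showing how they stop the compounding is exactly the missing idea. Likewise, the exponents in $\|\matT\|^4m^4\log q$ and $\|\matT\|m^7\log q\log L$ are read off the target rather than obtained from a computation. As a self-contained proof the proposal does not go through: either give Wee's depth-independent verification mechanism explicitly or, as the paper does, invoke \cite{Wee25}.
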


\section{$\LSSS$ and $\CPABE$ for $\LSSS$-realizable Access Structures}\label{sec:CPABEnotion}
In this section, we first introduce the notion of Linear Secret Sharing Scheme ($\LSSS$) which we use to describe the access structure in our construction, and the ciphertext-policy attribute-based encryption ($\CPABE$).

\begin{definition}[Access Structure,~\cite{Bei96}]
Let $S$ be a set and $2^S$ be the power set of $S$, i.e., the collection of all subsets of $S$. An \emph{access structure} on $S$ is a set $\mathbb{A}\subseteq 2^S\setminus\{\varnothing\}$, consisting of some non-empty subsets of $S$. A subset $A\in 2^{S}$ is called \emph{authorized} if $A\in\mathbb{A}$, and \emph{unauthorized} otherwise. 

An access structure is called \emph{monotone} if it satisfies the following condition: for all subsets $B,C\in 2^{S}$, if $B\in\mathbb{A}$ and $B\subseteq C$, then $C\in\mathbb{A}$. In other words, adding more elements to an authorized subset does not invalidate its authorization.
\end{definition}

In the following, we present a result given by \cite{DKW21a}, which shows that there exists a non-monotone $\{0,1\}$-$\LSSS$ for access structure represented by a Boolean formula.

\begin{lemma}[\cite{DKW21a}]
     For any access structure $\mathbb{A}$ which is described by a Boolean formula, there is a deterministic polynomial time algorithm for generating $(\matM,\rho)$, where $\matM\in\{-1,0,1\}^{\ell\times d}$ and $\rho:[\ell]\to[2n]$ satisfying that  
    \begin{enumerate}
        \item $(\matM,\rho)$ yields a non-monotone $\{0,1\}$-$\LSSS$ for $\mathbb{A}$, namely
        \begin{itemize}
            \item For $S\subseteq [n]$, let $\hat{S}=S\cup\{i\in [n+1,\ldots,2n]\mid i-n\notin S\}\subseteq [2n]$. Let $\matM_{\hat{S}}$ be the submatrix that consists of all the rows of $\matM$ with row indices in $\rho^{-1}(\hat{S})$. For any authorized set of parties $S\subseteq [n]$, there is a linear combination of the rows of $\matM_{\hat{S}}$ that gives $(1,0,\ldots,0)\in \Z_q^d$. Moreover, the coefficients in this linear combination are from $\{0,1\}$.
            \item For any unauthorized set of parties $S\subseteq [n]$, there is no linear combination of the rows of $\matM_{\hat{S}}$ that gives $(1,0,\ldots,0)\in \Z_q^d$. Moreover, there exists a vector $\vecd\in\Z_q^d$, such that its first component $d_1=1$ and $\matM_{\hat{S}}\vecd=\veczero$.
        \end{itemize}
        \item For any unauthorized set of parties $S\subseteq [n]$, all of the rows of $\matM_{\hat{S}}$ are linearly independent.
    \end{enumerate}
\end{lemma}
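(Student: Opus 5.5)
The plan is to construct $(\matM,\rho)$ by structural recursion on the Boolean formula, in the style of the Lewko--Waters / DKW21 \{0,1\}-$\LSSS$ transformation. First, push all negations to the leaves with De Morgan's laws, so the formula becomes a monotone formula over the $2n$ literals $x_1,\ldots,x_n,\neg x_1,\ldots,\neg x_n$. We identify literal $x_i$ with party $i$ and $\neg x_i$ with party $n+i$. With this identification, $S$ satisfies the original formula if and only if $\hat S$ satisfies the monotone formula. Non-monotonicity is therefore handled entirely by the map $\rho:[\ell]\to[2n]$, and it remains to build a monotone $\{0,1\}$-$\LSSS$ for a monotone formula with one row per leaf. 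Under this convention, distinct leaves reading the same literal receive distinct rows.

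The matrix is built by the standard labelling procedure. Label the root with the vector $(1)$ and maintain a counter $c$ of the number of columns used. At an OR gate with label $\vecv$, give both children the label $\vecv$. At an AND gate with label $\vecv$, pad $\vecv$ to length $c$, give one child $(\vecv,1)$ and the other $(0,\ldots,0,-1)$, and increment $c$. At the end, pad all leaf labels to length $d=c$; these padded labels are the rows of $\matM$, whose entries lie in $\{-1,0,1\}$. The construction clearly runs in polynomial time.

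Correctness is proved by induction on the formula. Call a gate with label $\vecv$ satisfied if $\vecv$ is a \{0,1\}-combination of the rows of the leaves below it that are available in $\hat S$.
\begin{itemize}
\item At an OR gate, reuse the coefficients from one satisfied child.
\item At an AND gate, add the coefficients from both children. This gives $(\vecv,1)+(0,\ldots,0,-1)=\vecv$.
\end{itemize}
The coefficients stay in $\{0,1\}$ because the two subtrees have disjoint leaves, since each leaf is a distinct row.

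For soundness and property 2, argue by induction on the formula that for an unsatisfied subformula, the rows of its available leaves are linearly independent, and that there is a vector $\vecd$ with $\langle \vecd, \text{label}\rangle=1$ and $\vecd$ orthogonal to all available rows below. The vector $\vecd$ is assembled gate by gate, using freedom in the coordinate introduced at each AND gate.
\begin{itemize}
\item At an unsatisfied AND gate, at least one child, say the one labelled $(\vecv,1)$, is unsatisfied. Take that child's $\vecd$ to handle that child. For the other child, set the new coordinate so the $-1$ label is orthogonal. This requires fixing coordinates owned by the other subtree; the existence argument there needs care, but it works because the coordinates introduced inside disjoint subtrees are disjoint.
\item At an unsatisfied OR gate, both children are unsatisfied with the same label. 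Their $\vecd$'s act on disjoint fresh coordinates and agree on the inherited prefix, so they can be merged.
\end{itemize}
Taking $d_1=1$ at the root gives the required vector, and the absence of any linear combination equal to $(1,0,\ldots,0)$ follows immediately from it. Linear independence follows by a similar induction: each AND gate's new coordinate separates its two children's row sets, and an unsatisfied OR gate contributes disjoint fresh coordinates.

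The main obstacle is property 2. Linear independence can fail when some leaf of a satisfied sub-branch under an unsatisfied gate is available. To handle this, I would follow DKW21 and restrict to formulas in which each literal appears once (read-once after duplication into rows), and then verify independence carefully via the disjoint-fresh-coordinate argument.
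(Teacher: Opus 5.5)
The reconstruction argument and the construction of $\vecd$ are the standard Lewko--Waters arguments and can be completed. For $\vecd$, fix the new coordinate of each unsatisfied AND gate \emph{before} recursing: put $0$ there if the $(\vecv,1)$-child is unsatisfied and $-1$ otherwise. A satisfied child then receives target $0$, which you meet by zeroing all fresh coordinates below it. The two children are not symmetric, so your ``say'' is not without loss of generality.

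The genuine gap is property~2, and it cannot be closed by a more careful check, because it is false for the matrix you build. Your induction hypothesis only controls unsatisfied subformulas. An unsatisfied AND gate can have a satisfied child, and inside that child an OR gate with two satisfied children gives two disjoint $\{0,1\}$-combinations of available rows equal to the same nonzero label. Their difference is a linear dependency. Concretely, for $x_1\wedge(x_2\vee x_3)$ your procedure outputs the row $(1,1)$ for $x_1$ and the row $(0,-1)$ for both $x_2$ and $x_3$. The set $S=\{2,3\}$ is unauthorized, no leaf reads a negated literal, and so $\matM_{\hat S}$ consists of two equal rows.

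Restricting to read-once formulas does not repair this. The formula above is already read-once, and no matrix with one row per leaf can satisfy property~2 for it. Let $\vecr_1,\vecr_2,\vecr_3$ be the rows. Since $\{1,2\}$ and $\{1,3\}$ are authorized, $(1,0,\ldots,0)=a\vecr_1+b\vecr_2=c\vecr_1+d\vecr_3$ for some scalars.
\begin{itemize}
\item If $a=c$, then $b\vecr_2=d\vecr_3$. Either $\vecr_2,\vecr_3$ are dependent, or $b=d=0$ and $\{1\}$ would be authorized.
\item If $a\neq c$, then $\vecr_1$, and hence $(1,0,\ldots,0)$, lies in the span of $\vecr_2,\vecr_3$. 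This contradicts that $\{2,3\}$ is unauthorized.
\end{itemize}
In any case, a read-once restriction would prove a weaker statement than the one claimed.

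What your labelling does give is property~2 for DNF formulas, where OR gates sit only above AND gates. Inside an AND-tree, every fresh coordinate occurs in exactly two leaves, with signs $+1$ and $-1$. So the leaves of a clause are the vertices of a tree whose edges are the fresh coordinates. In an unauthorized set every clause has an unavailable leaf, and following edges from it forces every coefficient in a dependency to vanish. Different clauses share no fresh coordinate, so the clauses do not interact.

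Reaching that shape from a general formula requires replicating literals, e.g.\ expanding to DNF, which is not polynomial in general. So this route does not give the lemma for arbitrary Boolean formulas in polynomial time.
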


Here we present the notion of ciphertext-policy attribute-based encryption ($\CPABE$) for $\LSSS$.

\begin{definition}[$\CPABE$ for $\LSSS$]
 Let $\lambda$ be the security parameter, and $n=n(\lambda),q=q(\lambda)$ be lattice parameters. A \emph{ciphertext-policy attribute-based encryption ($\CPABE$) scheme} for access structures captured by a \emph{linear secret sharing scheme (\LSSS)} over some finite field $\Z_q$ is defined as four efficient algorithms $\Pi_{\CPABE}=(\setup,\keygen,\enc,\dec)$. These algorithms proceed as follows: 
\begin{itemize}
\item $\setup(1^\lambda,\mathbb{U})\rightarrow (\pk,\msk)$: The setup algorithm takes as input the security parameter $\lambda$  and an attribute universe $\mathbb{U}$, and outputs the public parameters $\pk$ and a master secret key $\msk$. We assume that $\pk$ specifies the attribute universe $\mathbb{U}$.
\item $\keygen(\msk,U)\rightarrow \sk_{U}$: The key generation algorithm takes as input the master secret key $\msk$ and an attribute set $U\subseteq \mathbb{U}$. It outputs a secret key $\sk_{U}$ associated with attribute set $U$. Without loss of generality, we assume that the secret key $\sk_{U}$ implicitly contains the attribute set $U$.
\item $\enc(\pk,\msg,(\matM,\rho))\rightarrow\ct$: The encryption algorithm takes as input the public parameters $\pk$, a message $\msg$, and an $\LSSS$ access policy $(\matM,\rho)$ such that $\matM$ is a matrix over $\Z_q$ and $\rho$ is a row-labeling function that assigns each row of $\matM$ to an attribute in $\mathbb{U}$. It outputs a ciphertext $\ct$.
\item $\dec(\pk,\sk,(\matM,\rho),\ct)\rightarrow \msg'/\perp$: The decryption algorithm takes as input the public parameters $\pk$, a secret key $\sk$, an access structure $(\matM,\rho)$, and a ciphertext $\ct$.  It outputs a value $\msg'\in\Z_q$ when the attributes in $U$ satisfy the $\LSSS$ structure $(\matM,\rho)$, i.e., the vector $(1,0,\ldots,0)$ lies in the span of the rows of $\matM$ whose labels are in $U$. Otherwise, the decryption outputs $\perp$, indicating the failure of the decryption.
     \end{itemize}
\end{definition}

\iitem{Correctness} A $\CPABE$ scheme for $\LSSS$-realizable access structures is said to be \emph{correct} if for every $\lambda\in \N$, every attribute universe $\mathbb{U}$, every message $\msg$, every $\LSSS$ access policy $(\matM,\rho)$, and every attribute set $U\subseteq \mathbb{U}$ that satisfies the access policy $(\matM,\rho)$, it holds that 
$$\pr\left[\msg'=\msg\left|
\begin{array}{c}(\pk,\msk)\leftarrow\setup(1^{\lambda},\mathbb{U});\\ \sk_{U}\leftarrow\keygen(\msk,U);\\
\ct\leftarrow\enc(\pk,\msg,(\matM,\rho));\\
\msg'\leftarrow\dec(\pk,\sk_{U},(\matM,\rho),\ct)
\end{array}\right.\right]\geq 1-\negl(\lambda).$$

\iitem{Selective Security}
Here we define the selective security of a $\CPABE$ for $\LSSS$-realizable access structures by a game between a challenger and an adversary. For a security parameter $\lambda\in \N$, the game proceeds as follows:
\begin{itemize}
    \item \textbf{Setup Phase:} The adversary first sends an $\LSSS$ access policy $(\matM,\rho)$. The challenger runs algorithm $(\pk,\msk)\leftarrow\setup$ and sends the public parameters $\pk$ to the adversary.
    \item \textbf{Key Query Phase 1:} The adversary can make polynomially many secret-key queries to the challenger. Each secret-key query is of the form $U\subseteq\mathbb{U}$ such that $U$ does not satisfy the $\LSSS$ access policy $(\matM,\rho)$. The challenger then runs $\sk_{U}\leftarrow\keygen(\msk,U)$ and sends $\sk_{U}$ to the adversary.
    \item \textbf{Challenge Phase:} The challenger flips a coin $b\rand\{0,1\}$ and encrypts the message $b$ with respect to the $\LSSS$ access policy $(\matM,\rho)$ as  $\ct\leftarrow \enc(\pk,b,(\matM,\rho))$. Then the challenger sends the ciphertext $\ct$ to the adversary.
    \item \textbf{Key Query Phase 2:} This phase is the same as Key Query Phase 1.
    \item \textbf{Guess Phase:} The adversary outputs a guess $b'\in\{0,1\}$ for the value of $b$.
    \end{itemize}
The \emph{advantage} of the adversary $\mathcal{A}$ in this game is defined by $$\Adv_{\mathcal{A}}^{\CPABE}(\lambda):=|\pr[b=b']-1/2|.$$

\begin{definition}[Selective security for $\CPABE$ for $\LSSS$]\label{def:selectsecurity}
    A $\CPABE$ scheme for $\LSSS$-realizable access structures is \emph{selectively secure} if for any $\ppt$ adversary $\mathcal{A}$, there exists a negligible function $\negl(\cdot)$ such that $\Adv_{\mathcal{A}}^{\CPABE}(\lambda)\leq\negl(\lambda)$ for all $\lambda\in\N$.
\end{definition}

\section{$\CPABE$ for $\LSSS$ from the $\ell$-succinct $\LWE$ Assumption}\label{sec:scheme}
In this section, we present our construction of a $\CPABE$ scheme supporting access structures represented by $\NC^1$ circuits. As discussed in \cite{DKW21a}, we achieve our goal by presenting a $\CPABE$ scheme supporting access structures realized by $\LSSS$. More precisely, we construct a $\CPABE$ scheme for $\LSSS$ access policies $(\matM,\rho)$, where every entry of $\matM$ lies in $\{-1,0,1\}$ and the reconstruction coefficients are in $\{0,1\}$. We then prove our construction achieves selective security as in Definition \ref{def:selectsecurity}, which in turn yields a $\CPABE$ schemes for access structures represented by $\NC^1$ circuits. We assume that all $\LSSS$ access policies in the scheme correspond to matrices $\matM$ with $s_{\max}$ columns and an injective row-labeling function $\rho$. The bound $s_{\max}$ can also be viewed as a bound on the circuit size of the corresponding $\NC^1$ circuits.
\begin{construction}\label{con1}
 Let $\lambda$ be the security parameter, and $n=n(\lambda),q=q(\lambda), m=m(\lambda)$ be lattice parameters. We describe the algorithms $\Pi_{\CPABE}=(\setup,\keygen,\enc,\dec)$ as follows: 
    \begin{itemize}
    \item $\setup(1^\lambda,\mathbb{U}, s_{\max})$: The global setup algorithm takes as input the security parameter $\lambda$, an attribute universe $\mathbb{U}$, and the maximum width of the $\LSSS$ matrix $s_{\max}=s_{\max}(\lambda)$. Let $N:=|\mathbb{U}|$. We fix an arbitrary ordering of $\mathbb{U}$ and henceforth identify it with the index set $[N] := \{1,2,\ldots,N\}$. With a slight abuse of notation, we write $u\in[N]$ to denote the $u$-th attribute in this ordering. First, it chooses the distributions $\sigma,\chi,\chi_1,\chi_s$. Next, it samples $(\matB,\td_{\matB})\leftarrow\trapgen(1^n,1^m,q),\matW\rand\Z_q^{2m^2n\times m},$ and $$\matT\leftarrow \samplepre([\matI_{2m^2}\otimes \matB\mid \matW],\begin{bmatrix} \matI_{2m^{2}}\otimes \td_{\matB} \\ \mathbf{0} \end{bmatrix}, \matI_{2m^2}\otimes \matG,\sigma).$$
    It then samples $\matA\rand\Z_q^{n\times m}$, $\matB_{i}\rand \Z_q^{n\times (m+1)}$ for each $i\in \{2,\ldots,s_{\max}\}$, $\matD_{u}\rand\Z_q^{n\times (m+1)}$ for each $u\in [N]$, $\matQ_u\rand\Z_q^{n\times (m+1)}$ for each $u\in [N]$, and a vector $\vecy\rand\Z_q^n$. It sets $\pp:=(\matB,\matW,\matT)$.
    It outputs the public parameters $$\pk=(\pp,n,m,q,\sigma,\chi,\chi_1,\chi_{s},\matA,\{\matB_{i}\}_{i\in \{2,\ldots,s_{\max}\}},\{\matD_u\}_{u\in [N]},\{\matQ_u\}_{u\in [N]},\vecy)$$ and a master secret key $\msk=(\pk,\td_{\matB})$.

    \item $\keygen(\msk,U)$: The key generation algorithm takes as input the master secret key $\msk$ and an attribute set $U\subseteq \mathbb{U}$. It first computes $\matV\leftarrow \ver(\pp,1^{(m+1)N})\in\Z_q^{m\times (m+1)N}$, and parses it as $\matV=[\matV_1\mid\cdots\mid \matV_{N}]$ with $\matV_u\in \Z_q^{m\times (m+1)}$ for all $u\in [N]$. Then it samples a vector $\hat{\vect}\leftarrow D_{\Z,\chi}^{m}$ and sets the vector $\vect=(1,\hat{\vect}^\top)^\top\in \Z_q^{m+1}$. For each attribute $u\in U$, the algorithm samples $\hat{\veck}_u\leftarrow D_{\Z,\chi_{s}}^m$ and $$\tilde{\veck}_u\leftarrow\samplepre(\matB,\td_{\matB},(\matA\matV_{u}+\matQ_u)\vect-\matB\hat{\veck}_u,\chi_1),$$ and sets $\veck_u=\hat{\veck}_u+\tilde{\veck}_u$. It outputs $\sk=(\{\veck_u\}_{u\in U},\vect)$.

    \item $\enc(\pk,\msg,(\matM,\rho))$: The encryption algorithm takes as input the public parameters $\pk$, a message $\msg\in \{0,1\}$, and an $\LSSS$ access policy $(\matM,\rho)$, where $\matM:=(M_{i,j})\in \{-1,0,1\}^{\ell\times s_{\max}}$ and $\rho:[\ell]\rightarrow [N]$. The function $\rho$ associates rows of $\matM$ to attributes in $\mathbb{U}$. Assume that $\rho$ is an injective function (hence $\ell\leq N=|\mathbb{U}|$). The algorithm first computes \begin{align*}
    \matU_{\rho(i)} &\leftarrow M_{i,1}(\vecy\mid\veczero\mid\cdots\mid\veczero)
       +\sum_{2\leq j\leq s_{\max}}M_{i,j}\matB_j+\matQ_{\rho(i)}
       \in \Z_q^{n\times (m+1)}, \quad &&\forall i\in [\ell],\\
    \matU_u&\leftarrow \matQ_{u}+\matD_{u}\in\Z_q^{n\times (m+1)},\quad &&\forall u\in[N]\setminus \rho([\ell]),
    \end{align*} and sets $\matU\leftarrow [\matU_1\mid \cdots\mid \matU_{N}]\in\Z_q^{n\times (m+1)N}$. It then computes
    $$\matC\leftarrow \Com(\pp,\matU)\in\Z_q^{n\times m}$$
    and 
    samples $\vecs\rand\Z_q^n$,  $\vece_{1}\leftarrow D_{\Z,\chi}^m,\vece_{2}\leftarrow D_{\Z,\chi_{s}}^{m},e_3\leftarrow D_{\Z,\chi_{s}}$. It then computes ciphertext components  $\vecc_{1}\in\Z_q^m, \vecc_{2}\in \Z_q^m$ and $c_3\in \Z_q$ as follows: \begin{align*}
\vecc_{1}^\top&\leftarrow \vecs^\top\matB+\vece_{1}^\top,\\
\vecc_{2}^\top&\leftarrow \vecs^\top(\matA+\matC)+\vece_{2}^\top,\\
c_3&\leftarrow \vecs^\top\vecy+\msg\cdot\lceil q/2\rfloor+e_3.
    \end{align*} 
Then the algorithm outputs the ciphertext $\ct=(\vecc_1,\vecc_{2},c_3)$.

\item $\dec(\pk,\sk,(\matM,\rho),\ct)$: The decryption algorithm takes as input the public parameters $\pk$, a secret key $\sk=(\{\veck_{u}\}_{u\in U}, \vect)$ for an attribute subset $U\subseteq\mathbb{U}$. If $(1,0,\ldots,0)$ is not in the span of the rows of $\matM$ associated with $U$, then the decryption fails and outputs $\perp$. Otherwise, “let $I\subseteq[\ell]$ be the set of row indices such that $I=\{i\in [\ell]\mid \rho(i)\in U\}$. First, the algorithm finds the reconstruction coefficients $\{w_i\}_{i\in I}$ such that $(1,0,\ldots,0)=\sum_{i\in I}w_i \matM_i$, where $\matM_i$ is the $i$-th row vector of $\matM$. It then computes $$\matV\leftarrow \ver(\pp,1^{(m+1)N})$$ and parses it as $\matV=[\matV_1\mid\matV_2\mid\cdots\mid \matV_{N}]$ where $\matV_i\in\Z_q^{m\times (m+1)}$. \begin{align*}
    \matU_{\rho(i)} &\leftarrow M_{i,1}(\vecy\mid\veczero\mid\cdots\mid\veczero)
       +\sum_{2\leq j\leq s_{\max}}M_{i,j}\matB_j+\matQ_{\rho(i)}
       \in \Z_q^{n\times (m+1)}, \quad &&\forall i\in [\ell],\\
    \matU_u&\leftarrow \matQ_{u}+\matD_{u}\in\Z_q^{n\times (m+1)},\quad &&\forall u\in[N]\setminus \rho([\ell]).
    \end{align*} and sets $\matU\leftarrow [\matU_1\mid \cdots\mid \matU_{N}]]\in\Z_q^{n\times (m+1)N}$. It then computes
    $$\matZ\leftarrow \Open(\pp,\matU)\in \Z_q^{m\times (m+1)N},$$
    and 
    parses $\matZ:=[\matZ_1\mid\cdots\mid \matZ_{N}]$ where $\matZ_{i}\in \Z_q^{m\times (m+1)}$.
Finally, it computes $$\mu\leftarrow c_3-\sum_{i\in I}w_i[(\vecc_2^\top\matV_{\rho(i)}+\vecc_1^\top\matZ_{\rho(i)})\vect-\vecc_1^\top \veck_{\rho(i)}]\bmod{q},$$ and it outputs $0$ if $-q/4<\mu<q/4$ and 1 otherwise.
\end{itemize}
\end{construction}

\iitem{Parameters} The parameters are selected as follows:
\begin{itemize}
    \item $q/4> N\cdot\poly(\lambda,\log N,m)\cdot \sigma\cdot (\chi+\chi_{s})$. \hfill (for correctness)
    \item $m>2n\log q+\omega(\log n)$. \hfill (for leftover hash lemma)
    \item $\sigma=\poly (m,\lambda)$. \hfill (for succinct $\LWE$)
    \item $\chi_1 > \poly(m,N,\sigma,\log q)\cdot \omega(\sqrt{\log n})$. \hfill (for preimage sampling)
    \item $\chi>\sqrt{m}\chi_1\cdot\lambda^{\omega(1)}$. \hfill(for noise-smudging/security)
    \item $\chi_{s}\geq \chi\cdot(\chi_1+\sigma)\cdot\log q\cdot\log N\cdot\poly(m)\cdot\lambda^{\omega(1)}$. \hfill(for noise-smudging/security) 
\end{itemize}
    For example, fix $0<\varepsilon<1$, where $2m^2$-succinct $\LWE$ is hard for $2^{n^{\varepsilon}}$ modulus-to-noise ratio. Specifically, we set $$n=\poly(\lambda), \quad m=n\cdot\poly(\lambda),\quad \chi=\poly(n,N,\lambda)\cdot\lambda^{\omega(1)},\quad q=\poly(n,N,\lambda)\cdot\chi\cdot\lambda^{\omega(1)}.$$
This gives the parameters of our $\CPABE$ scheme:
$$|\pk|=O(s_{\max}),\quad |\ct|=O(1),\quad|\sk|=O(N),$$ where $O(\cdot)$ hides the polynomial in $\lambda$. Note that $N$ scales with the input length of the $\NC^1$ circuit and $s_{\max}$ scales with the size of the circuit.

\subsection{Correctness}
\begin{theorem}[Correctness]
Suppose that $q> N\cdot\poly(\lambda,\log N,m)\cdot \sigma\cdot (\chi+\chi_{s})$. The Construction \ref{con1} is correct as a $\CPABE$ scheme.
\end{theorem}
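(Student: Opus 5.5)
We expand the decryption quantity $\mu$ term by term and show it equals $\msg\cdot\lceil q/2\rfloor$ plus a small error. Fix an authorized $U$, the index set $I$, and the $\{0,1\}$ reconstruction coefficients $\{w_i\}_{i\in I}$ with $\sum_{i\in I}w_i\matM_i=(1,0,\ldots,0)$. By Lemma~\ref{matcommit}, $\matC\matV_u=\matU_u-\matB\matZ_u$ holds exactly for every $u\in[N]$. Substituting $\vecc_1^\top=\vecs^\top\matB+\vece_1^\top$ and $\vecc_2^\top=\vecs^\top(\matA+\matC)+\vece_2^\top$ gives, for each $i\in I$ with $u=\rho(i)$,
\begin{align*}
(\vecc_2^\top\matV_u+\vecc_1^\top\matZ_u)\vect-\vecc_1^\top\veck_u
&=\vecs^\top(\matA\matV_u+\matU_u)\vect-\vecs^\top\matB\veck_u\\
&\quad+\vece_2^\top\matV_u\vect+\vece_1^\top\matZ_u\vect-\vece_1^\top\veck_u,
\end{align*}
since the $\vecs^\top\matB\matZ_u\vect$ terms cancel. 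By construction $\matB\veck_u=\matB\hat{\veck}_u+\matB\tilde{\veck}_u=(\matA\matV_u+\matQ_u)\vect$, so the first line collapses to $\vecs^\top(\matU_u-\matQ_u)\vect=M_{i,1}\vecs^\top\vecy+\sum_{j\ge2}M_{i,j}\vecs^\top\matB_j\vect$, using that $(\vecy\mid\veczero\mid\cdots\mid\veczero)\vect=\vecy$ because $\vect$ has first coordinate $1$. Summing with weights $w_i$ and using $\sum_i w_iM_{i,1}=1$ and $\sum_i w_iM_{i,j}=0$ for $j\ge2$, the $\matB_j$ terms vanish and the sum equals $\vecs^\top\vecy$ plus noise. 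Hence $\mu=\msg\cdot\lceil q/2\rfloor+e_3-\sum_{i\in I}w_i(\vece_2^\top\matV_u\vect+\vece_1^\top\matZ_u\vect-\vece_1^\top\veck_u)$.

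It remains to bound the error, which is the main (though routine) step. By Lemma~\ref{Gaussianbound}, with overwhelming probability $\|\vece_1\|\le\sqrt m\chi$, $\|\vece_2\|,|e_3|\le\sqrt m\chi_s$, $\|\hat\vect\|\le\sqrt m\chi$ (so $\|\vect\|\le\sqrt m\chi$ as $\chi\ge1$), and $\|\veck_u\|\le\sqrt m(\chi_s+\chi_1)$. Lemma~\ref{trapdoor} guarantees that $\samplepre$ outputs Gaussian preimages because $\chi_1$ exceeds the smoothing bound. Also $\|\matT\|\le\sqrt{m}\sigma$, so Lemma~\ref{matcommit} gives $\|\matV_u\|\le\poly(m)\sigma^4\log q$ and $\|\matZ_u\|\le\poly(m)\sigma\log q\log N$. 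Each of the at most $|I|\le N$ summands is therefore bounded by $\poly(m,\lambda,\log N)\cdot\sigma^{O(1)}\cdot(\chi+\chi_s)\cdot\chi$, with $\log q$ factors absorbed. The remaining subtlety is that the stated hypothesis writes this as linear in $\sigma$ and $\chi+\chi_s$. I would interpret $\poly(\lambda,\log N,m)$ as absorbing the polynomial powers of $\sigma=\poly(m,\lambda)$ and the products of noise widths, following the paper's parameter conventions. Since the $w_i\in\{0,1\}$, the total error is below $q/4$ under the hypothesis.

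Consequently, for $\msg=0$ we get $|\mu|<q/4$, and for $\msg=1$ we get $\mu$ within $q/4$ of $\lceil q/2\rfloor$. Decryption thus outputs $\msg$ except with the negligible probability that one of the Gaussian tail bounds fails.
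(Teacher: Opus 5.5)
Your argument is the paper's proof, step for step. The commitment identity $\matC\matV_u=\matU_u-\matB\matZ_u$ cancels the $\vecs^\top\matB\matZ_u\vect$ terms, and the key equation $\matB\veck_u=(\matA\matV_u+\matQ_u)\vect$ reduces each summand to $\vecs^\top(\matU_u-\matQ_u)\vect$. The $\{0,1\}$ reconstruction coefficients then kill the $\matB_j$ terms. You arrive at the same error term $e_3-\sum_{i\in I}w_i(\vece_2^\top\matV_{\rho(i)}\vect+\vece_1^\top\matZ_{\rho(i)}\vect-\vece_1^\top\veck_{\rho(i)})$ with the same norm bounds. Your direct triangle-inequality bound $\|\veck_u\|\le\sqrt m(\chi_s+\chi_1)$ is cleaner than the paper's smudging detour.

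The one step that does not go through is your final reconciliation with the hypothesis. Absorbing $\sigma^4$ in place of $\sigma$ is harmless, because $\sigma=\poly(m,\lambda)$.

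The extra factor $\chi$ is a different matter. It comes from $\|\vect\|\le\sqrt m\chi$ in $\vece_2^\top\matV_u\vect$ and $\vece_1^\top\matZ_u\vect$, and from $\|\vece_1\|\le\sqrt m\chi$ in $\vece_1^\top\veck_u$. It cannot be hidden inside $\poly(\lambda,\log N,m)$, because the parameters require $\chi>\sqrt m\chi_1\cdot\lambda^{\omega(1)}$, so $\chi$ is superpolynomial. The dominant term $\vece_2^\top\matV_u\vect$ is a bilinear form in two independent Gaussians of widths $\chi_s$ and $\chi$. Its size therefore scales with $\chi\chi_s$, and no bound linear in $\chi+\chi_s$ can come out of this argument.

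What your computation actually proves is correctness under the condition $q/4>N\cdot\poly(\lambda,\log N,m)\cdot\sigma^4\cdot\chi\cdot(\chi+\chi_s)$, with $\log q$ absorbed. You should state that condition rather than reinterpret the given one. The paper's proof makes the same slip: it lists the same bounds ($\|\vece_2\|\le\sqrt m\chi_s$, $\|\vect\|\le\sqrt m\chi$, and $\|\matV_u\|$ growing like $\|\matT\|^4$). It then asserts a total linear in both $\sigma$ and $\chi+\chi_s$. Your bookkeeping is the more accurate of the two; the fix is to strengthen the hypothesis.
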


\begin{proof}
    We begin the proof by analyzing the term $\sum_{i\in I}w_i[(\vecc_2^\top\matV_{\rho(i)}+\vecc_1^\top\matZ_{\rho(i)})\vect-\vecc_1^\top \veck_{\rho(i)}]$. First, by the commitment scheme, we know for all $u\in [N]$, we have $\matC\matV_{u}=\matU_{u}-\matB\matZ_u$, then \begin{align}\label{correcteq1}
    \begin{split}
        (\vecc_2^\top\matV_{\rho(i)}+\vecc_1^\top\matZ_{\rho(i)})\vect&\approx \vecs^\top(\matA\matV_{\rho(i)}+\matC\matV_{\rho(i)}+\matB\matZ_{\rho(i)})\vect\\
        &=\vecs^\top(\matA\matV_{\rho(i)}+\matU_{\rho(i)})\vect\\&=\vecs^\top (\matA\matV_{\rho(i)}+M_{i,1}(\vecy\mid\veczero\mid\cdots\mid\veczero)
       +\sum_{2\leq j\leq s_{\max}}M_{i,j}\matB_j+\matQ_{\rho(i)})\vect,
    \end{split}
    \end{align}
     where ``$\approx$" hides the error term. By the $\keygen$ algorithm, we have for all $u\in U$, we have that $\veck_{u}=\hat{\veck}_u+\tilde{\veck}_u$ and $\matB\tilde{\veck}_u=(\matA\matV_{u}+\matQ_u)\vect-\matB\hat{\veck}_u$. Therefore, for all $i\in I$, we have 
     \begin{align}\label{correcteq2}
        \begin{split}
            \vecc_1^\top\veck_{\rho(i)}\approx\vecs^\top\matB\veck_{\rho(i)}=\vecs^\top(\matA\matV_{\rho(i)}+\matQ_{\rho(i)})\vect.
        \end{split}
    \end{align} 
    Combining \eqref{correcteq1} and \eqref{correcteq2}, we obtain that \begin{align}
        \begin{split}
            \sum_{i\in I}w_i[(\vecc_2^\top\matV_{\rho(i)}+\vecc_1^\top\matZ_{\rho(i)})\vect-\vecc_1^\top \veck_{\rho(i)}]\approx\sum_{i\in I}\vecs^\top(w_iM_{i,1}(\vecy\mid\veczero\mid\cdots\mid\veczero)
       +\sum_{2\leq j\leq s_{\max}}w_iM_{i,j}\matB_j)\vect.
        \end{split}
    \end{align}
    From the $\LSSS$ access structure, if the rows of indices in $I$ satisfy the policy $(\matM,\rho)$,  we have $\sum_{i\in I}w_iM_{i,1}=1$ and $\sum_{i\in I}w_iM_{i,j}=0$ for all $j\in \{2,\ldots,s_{\max}\}$. Since $\vect=(1\mid \hat{\vect}^\top)^\top$, we have $$\sum_{i\in I}w_i[(\vecc_2^\top\matV_{\rho(i)}+\vecc_1^\top\matZ_{\rho(i)})\vect-\vecc_1^\top \veck_{\rho(i)}]\approx \vecs^\top\vecy.$$ It follows that $$c_3-\sum_{i\in I}w_i[(\vecc_2^\top\matV_{\rho(i)}+\vecc_1^\top\matZ_{\rho(i)})\vect-\vecc_1^\top \veck_{\rho(i)}]\approx \msg\cdot\lceil q/2\rfloor,$$ where the error term is given by $$e_3-\sum_{i\in I}w_i((\vece_2^\top\matV_{\rho(i)}+\vece_1^\top\matZ_{\rho(i)})\vect-\vece_1^\top\veck_{\rho(i)}).$$
    The following bounds hold except with negligible probability:
    \begin{itemize}
        \item $\|\veck_{\rho(i)}\|\leq \sqrt{m}\chi_{s}$ for all $i\in I$: Since $\tilde{\veck}_u\leftarrow D_{\Z,\chi_1}^m$ conditioned on $(\matA\matV_u+\matQ_u)\vect-\matB\hat{\veck}_u=\matB\tilde{\veck}_u$, $\hat{\veck}_u\leftarrow D_{\Z,\chi_{s}}^m$, and by the noise-smudging lemma (Lemma \ref{smudge}), we obtain that $\hat{\veck}_u+\tilde{\veck}_u\overset{s}{\approx}\hat{\veck}_u$, and hence $\|\veck_u\|\leq \sqrt{m}\chi_{s}$ for all $u\in U$.
        \item $\|\vece_1\|\leq \sqrt{m}\chi, \|\vece_2\|\leq\sqrt{m}\chi_{s},\|e_3\|\leq \chi_{s}$: It follows from the fact that $\vece_1\leftarrow D_{\Z,\chi}^m,\vece_2\leftarrow D_{\Z,\chi_{s}}^m,e_3\leftarrow D_{\Z,\chi_{s}}$ and Lemma \ref{Gaussianbound}.
        \item $\|\matV_u\|\leq O(\sigma\cdot \poly(m)\cdot\log q),\|\matZ_u\|\leq O(\sigma\cdot \poly(m)\cdot \log q\cdot \log N)$: It follows from the matrix commitment (Lemma \ref{matcommit}) and the matrix $\matT$ is sampled as $\matT\leftarrow \samplepre([\matI_{2m^2}\otimes \matB\mid \matW],\begin{bmatrix} \matI_{2m^{2}}\otimes \td_{\matB} \\ \mathbf{0} \end{bmatrix}, \matI_{2m^2}\otimes \matG,\sigma).$
        \item $\|\vect\|\leq \sqrt{m}\chi$: It follows from the fact that $\hat{\vect}\leftarrow D_{\Z,\chi}^{m}$ and $\vect\leftarrow(1\mid \hat{\vect}^\top)^\top$.
    \end{itemize}
    By combining the results above and the fact that $w_i\in \{0,1\}$ by our restriction on $\LSSS$, we have that the error term is bounded by $$N\poly(\lambda,\log N,m)\cdot\sigma\cdot(\chi+\chi_{s}). $$ Therefore, the correctness holds under the condition presented in the theorem.
\end{proof}

\subsection{Selective Security}\label{subsec:security}
\begin{theorem}
    Suppose the $2m^2$-succinct $\LWE$ assumption holds, then Construction \ref{con1} is selectively secure.
\end{theorem}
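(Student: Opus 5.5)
We argue by a sequence of hybrids $\H_0,\dots,\H_5$, starting from the real selective game with challenge bit $b$ and ending in a game where the ciphertext $(\vecc_1,\vecc_2,c_3)$ is uniform and independent of $b$. Since the policy $(\matM,\rho)$ is fixed before setup, the challenger can program the public key around $(\matM,\rho)$.

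\textbf{Step 1: programming the public key.} In $\H_1$ we take $\matR,\matR_u,\matR_i\rand\{-1,1\}^{m\times(m+1)}$ (with $\matR\in\{-1,1\}^{m\times m}$ for $\matA$). We first compute $\matC\leftarrow\Com(\pp,\matU)$. The matrix $\matU$ must be determined independently of $\matA$, which holds because $\matU$ depends only on $\vecy,\matB_j,\matQ_u,\matD_u$. We then set
\[
\matA:=\matB\matR-\matC,\qquad
\matQ_u:=\matB\matR_u+M_{\rho^{-1}(u),1}(\vecy\mid\veczero\mid\cdots\mid\veczero)\ \ (u\in\rho([\ell])),
\]
and $\matB_i:=\matB\matR_i+\matC_i$ for suitable commitment matrices $\matC_i$. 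There is a circularity here: $\matU$ depends on $\matQ_u$. We resolve it by programming $\matU_u$ directly and deriving $\matQ_u$, or by noting that $\matQ_u$ enters $\matU$ only additively. Indistinguishability of these changes follows from Lemma~\ref{lefttrap} (leftover hash with trapdoor). We use that lemma rather than Lemma~\ref{left} because the adversary sees $\pp$ and, through key queries, preimages under $\td_\matB$. For $u\notin\rho([\ell])$, the uniform $\matD_u$ makes $\matU_u$ uniform and unrelated to the policy.

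\textbf{Step 2: answering keys without $\td_\matB$.} This is the main obstacle. Using the algebra from the technical overview, for $u\in\rho([\ell])\cap U$ the target $(\matA\matV_u+\matQ_u)\vect$ equals $\matB(\cdot)$ minus $\sum_j M_{\rho^{-1}(u),j}(\vecb_j'+\matB_j'\hat{\vect})$.

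\begin{itemize}
\item \emph{Joint sampling.} For all rows of an unauthorized $U$ simultaneously, we sample $(\{\tilde{\veck}_u\},\hat{\vect})$ as a Gaussian preimage under $(\matM_U\otimes\matI_n)[\matI_{s_{\max}-1}\otimes\matB\mid \matB'_{\text{stack}}]$.
\item \emph{Trapdoor from $\matT$.} The trapdoor for the stacked matrix is built from $\matT$ via Corollary~\ref{tensortrap}. We choose $\matC_i$ so that $\matB_i'$ equals the corresponding block of $\matW$ times $\matT$-related terms, giving $[\matI\otimes\matB\mid\matW']\matT'=\matI\otimes\matG$.
\item \emph{Role of the $\LSSS$ property.} Because $U$ is unauthorized, the rows of $\matM_{\hat U}$ are linearly independent (property 2). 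Hence $\matH=\matM_U$ restricted to columns $2..s_{\max}$ has full row rank after accounting for $d_1=1$, so Corollary~\ref{tensortrap} applies.
\item \emph{Smudging.} We set $\veck_u=\tilde{\veck}_u+(\matR+\matR_u)\vect+\matZ_u\vect+\hat{\veck}_u$. The smudging noise $\hat{\veck}_u\leftarrow D_{\Z,\chi_s}$ hides the difference in distribution by Lemma~\ref{smudge}, given $\chi_s\ge\chi(\chi_1+\sigma)\poly(m)\lambda^{\omega(1)}$. Attributes outside $\rho([\ell])$ are handled with the trapdoor for $\matB$ alone, namely the $s=1$ case.
\end{itemize}

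What I am least sure of is making $\matW'$ statistically uniform while keeping the $\matB_i$ uniform. I would try writing $\matB_i'=\matB\matR_i'+\matW_i\matT_{2m^2+1}$-style terms and invoking Lemma~\ref{left} on the $\matR_i'$.

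\textbf{Step 3: switching to succinct $\LWE$.} Once $\td_\matB$ is unused, the reduction embeds the challenge $\vecc^\top\approx\vecs^\top\matB$, with $\pp$ supplied by the assumption. It then computes the ciphertext components from $\vecc$:
\begin{itemize}
\item $\vecc_1=\vecc$;
\item $\vecc_2^\top=\vecc^\top\matR$ plus smudging, since $\matA+\matC=\matB\matR$;
\item $c_3$ from $\vecs^\top\vecy$, where $\vecy$ is embedded as an extra column, i.e.\ using $[\matB\mid\vecy]$ as an extended secret matrix.
\end{itemize}
The smudging terms $\vece_2,e_3$ with parameter $\chi_s$ absorb $\vece_1^\top\matR$. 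Under the $2m^2$-succinct $\LWE$ assumption, $\vecc$ becomes uniform, so $(\vecc_1,\vecc_2,c_3)$ is uniform. The final game is independent of $b$, and the advantage is negligible.
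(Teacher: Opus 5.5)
\textbf{Gap 1: the trapdoor for key simulation.} Your outline (program $\pk$ around the committed policy, answer keys without $\td_{\matB}$, then embed a succinct \LWE sample) matches the paper's strategy. However, the step you leave open is the crux of the proof, and the route you sketch for it does not work.

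To answer keys without $\td_{\matB}$, you need a trapdoor for $[\matI_{g+h}\otimes\matB\mid\matW']$. It has to be derived from a stacked matrix with $s=N+s_{\max}-1$ blocks, whereas $\matT$ only covers $2m^2$ blocks. Tying $\matB_i'$ to the blocks of $\matW$ fails:
\begin{itemize}
\item $\matW_i\matT_{2m^2+1}$ does not even have the dimensions of $\matB_i'$;
\item $\matB_i'=\matB\matR_i'+\matW_i$ would only handle $s\le 2m^2$, which the construction does not guarantee. Enforcing it would make $m$, hence $|\ct|$, grow with $N$.
\end{itemize}

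The missing idea is to use the commitment's correctness equation as a trapdoor amplifier, i.e.\ to say what your ``suitable $\matC_i$'' commit to: $\matC_i\leftarrow\Com(\pp,\vecu_{i-1}\otimes\matG)$. Then $\matC_i\matV=\vecu_{i-1}\otimes\matG-\matB\matZ_i$, and stacking gives $[\matI_s\otimes\matB\mid\matC']\,[\matZ'^\top\mid\matV^\top]^\top=\matI_s\otimes\matG$ for arbitrary $s$. The masking $\matB_i'=\matB\matR_i'+\matC_i$ is justified by Lemma~\ref{lefttrap}, because $\td_{\matB}$ is still in use at that hybrid; this settles your uniformity worry. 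With $\matW'=(\matM_U\otimes\matI_n)((\matI_s\otimes\matB)\matR'+\matC')$, the matrix $[((\matM_U\otimes\matI_m)(\matZ'-\matR'\matV))^\top\mid\matV^\top]^\top$ sends $[\matI_{g+h}\otimes\matB\mid\matW']$ to $\matM_U\otimes\matG$. It is therefore a trapdoor by Corollary~\ref{tensortrap} once $\matM_U$ has full row rank.

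\textbf{Gap 2: attributes outside $\rho([\ell])$.} Once $\td_{\matB}$ is gone, there is no trapdoor for $\matB$ alone; one computable from public data would itself break succinct \LWE. Yet each such $\veck_u$ must still be a wide Gaussian solution of a key equation whose target $-\vecd_u'-\matD_u'\hat{\vect}$ involves the $\vect$ shared by the whole key. So these attributes can neither be handled with $\matB$ alone nor be sampled one at a time with separate $\tilde{\vect}$'s.

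This is why the scheme contains $\matD_u$. The paper:
\begin{itemize}
\item programs $\matD_u'=\matB\matR_u''+\matC_u'$ with $\matC_u'\leftarrow\Com(\pp,\vecu_{u+s_{\max}-1}\otimes\matG)$;
\item appends the rows $\matD_{u'}'$ to $\matW'$ and unit rows to $\matM_U$;
\item samples all $\tilde{\veck}_u$ and a single narrow $\tilde{\vect}$ in one call, with $\vect=(1,\hat{\vect}^\top+\tilde{\vect}^\top)^\top$ for a fresh $\hat{\vect}\leftarrow D_{\Z,\chi}^m$, so that smudging keeps $\vect$ honest-looking.
\end{itemize}

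\textbf{A smaller issue in Step 3.} You cannot treat $[\matB\mid\vecy]$ as the \LWE matrix, since $\matT$ is tied to $\matB$. Instead:
\begin{itemize}
\item program $\vecy=\matB\vecr$ in Step 1;
\item set $c_3=\vecc_1^\top\vecr+\msg\cdot\lceil q/2\rfloor+e_3$ by smudging;
\item once $\vecc_1$ is uniform, apply Lemma~\ref{left} to $\vecr$ so that $c_3$, the only component that needs it, becomes uniform.
\end{itemize}
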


\begin{proof}
    We prove the selective security of Construction \ref{con1} by giving a sequence of hybrid games. We claim that each adjacent pair of games is indistinguishable. Game $\H_0$ corresponds to the real selective-security game (defined in \ref{def:selectsecurity}), while in the final game the challenge bit is information-theoretically hidden, hence the adversary’s advantage is 0. Therefore, the adversary’s advantage in the real game is negligible.

    Throughout the hybrid games, the adversary $\mathcal{A}$ first commits to an access policy $(\matM,\rho)$, in which $\matM=(M_{i,j})\in\{-1,0,1\}^{\ell\times s_{\max}}$ and $\rho:[\ell]\rightarrow [N]$ is an injective row-labeling function for $\matM$. Then the challenger sends the public parameters to $\mathcal{A}$. In the key query phase, the adversary can query secret keys corresponding to the attribute sets $U$ such that the row indices in $\rho^{-1}(U)$ are not authorized with respect to the access policy $(\matM,\rho)$. In the challenge phase, the challenger tosses a coin $b\rand\{0,1\}$ and provides the ciphertext of the bit $b$ under the committed policy $(\matM,\rho)$. Finally, in the guess phase, the adversary outputs a guess bit $\hat{b}$ for $b$.   
\newpage
\noindent
\iitem{Game $\H_0$}
This game corresponds to the real selective security game
for our $\CPABE$ scheme.
\begin{center}
\small
    \framebox{%
\begin{minipage}[t]{0.5\linewidth}
\raggedright
\gameitem{Setup phase:}
\begin{enumerate}[label=\arabic*., itemsep=0.5em]
    \item $(\matB,\td_{\matB})\leftarrow\trapgen(1^n,1^m,q)$.
    \item $\matW\rand\Z_q^{2m^2n\times m}$.
    \item $\matT\leftarrow \samplepre([\matI_{2m^2}\otimes \matB\mid \matW],\begin{bmatrix} \matI_{2m^{2}}\otimes \td_{\matB} \\ \mathbf{0} \end{bmatrix}, \matI_{2m^2}\otimes \matG,\sigma)$. \item $\pp:=(\matB,\matW,\matT)$.
    \item $\matA\rand\Z_q^{n\times m}$.
    \item $\vecy\rand\Z_q^n$.
    \item $\{\matB_{i}\}_{i\in \{2,\ldots,s_{\max}\}}\rand\Z_q^{n\times (m+1)}$.
    \item $\{\matD_u\}_{u\in [N]}\rand\Z_q^{n\times (m+1)}$.
    \item $\{\matQ_{u}\}_{u\in [N]}\rand \Z_q^{n\times (m+1)}$.
    
    \item $\begin{aligned}&\pk=(\pp,n,m,q,\sigma,\chi,\chi_1,\chi_{s},\\  &\matA,\{\matB_{i}\}_{i\in\{2,\ldots,s_{\max}\}},\{\matD_u\}_{u\in [N]},\{\matQ_u\}_{u\in [N]},\vecy).\end{aligned}$
\end{enumerate}

\gameitem{Secret-key query for attribute set $U$:}
\begin{enumerate}[label=\arabic*., itemsep=0.5em]
   \item $\matV=[\matV_1\mid\cdots\mid \matV_{N}]\leftarrow \ver(\pp,1^{(m+1)N})$.
    \item $\hat{\vect}\leftarrow D_{\Z,\chi}^{m}$. 
    \item $\vect=(1,\hat{\vect}^\top)^\top\in \Z_q^{m+1}$.
    \item $\{\hat{\veck}_u\}_{u\in U}\leftarrow D_{\Z,\chi_{s}}^m$.
    \item $\forall u\in U$: $\tilde{\veck}_{u}\leftarrow \samplepre(\matB,\td_{\matB},(\matA\matV_{u}+\matQ_u)\vect-\matB\hat{\veck}_u,\chi_1)$.
    \item $\forall u\in U$: $\veck_{u}\leftarrow\hat{\veck}_u+\tilde{\veck}_{u}$.
    \item $\sk\leftarrow (\{\veck_u\}_{u\in U},\vect)$.
\end{enumerate}
\end{minipage}

\begin{minipage}[t]{0.5\textwidth}
\gameitem{Challenge phase:}
\begin{enumerate}[label=\arabic*., itemsep=0.5em]
\raggedright
\item $\msg\rand\{0,1\}$.
\item $\forall i\in [\ell]:\matU_{\rho(i)}\leftarrow M_{i,1}(\vecy\mid \veczero\mid\cdots\mid\veczero)+\sum_{2\leq j\leq s_{\max}}M_{i,j}\matB_j+\matQ_{\rho(i)}$.
\item $\forall u\in [N]\setminus\rho([\ell]):\matU_u\leftarrow \matQ_{u}+\matD_u$.
\item $\matU\leftarrow[\matU_1\mid\cdots\mid\matU_{N}]$.
\item $\matC\leftarrow \Com(\pp,\matU)$.
    \item $\vecs\rand\Z_q^n$.
    \item $\vece_{1}\leftarrow D_{\Z,\chi}^m$.
    \item $\vece_{2}\leftarrow D_{\Z,\chi_{s}}^{m}$.
    \item $e_3\leftarrow D_{\Z,\chi_{s}}$.
    \item $\vecc_{1}^\top\leftarrow\vecs^\top\matB+\vece_{1}^\top$.
    \item $\vecc_{2}^\top\leftarrow \vecs^\top(\matA+\matC)+\vece_{2}^\top$.
    \item $c_3\leftarrow \vecs^\top\vecy+\msg\cdot\lceil q/2\rfloor+e_3$.
    \item $\ct\leftarrow(\vecc_{1},\vecc_{2},c_3)$.
\end{enumerate}
\end{minipage}
}
\end{center}

\newpage
\iitem{Game $\H_1$}
This game is defined identically to Game $\H_0$, except for how the matrix $\matA$ is sampled and how the matrices $\{\matU_u\}_{u\in [N]}$ and $\{\matQ_u\}_{u\in [N]}$ are generated during the setup phase. The indistinguishability between Game $\H_0$ and Game $\H_1$ (Lemma \ref{lem0}) follows from the fact that this change is merely syntactic.
\begin{center}
\small
    \framebox{%
\begin{minipage}[t]{0.5\linewidth}
\raggedright
\gameitem{Setup phase:}
\begin{enumerate}[label=\arabic*., itemsep=0.5em]
    \item $(\matB,\td_{\matB})\leftarrow\trapgen(1^n,1^m,q)$.
    \item $\matW\rand\Z_q^{2m^2n\times m}$.
    \item $\matT\leftarrow \samplepre([\matI_{2m^2}\otimes \matB\mid \matW], \begin{bmatrix} \matI_{2m^{2}}\otimes \td_{\matB} \\ \mathbf{0} \end{bmatrix}, \matI_{2m^2}\otimes \matG,\sigma)$. \item $\pp:=(\matB,\matW,\matT)$.
    \item $\boxed{\matA'\rand\Z_q^{n\times m}}$.
    \item $\vecy\rand\Z_q^n$.
    \item $\{\matB_{i}\}_{i\in \{2,\ldots,s_{\max}\}}\rand\Z_q^{n\times (m+1)}$.
    \item $\boxed{\{\matU_u\}_{u\in [N]}\rand\Z_q^{n\times (m+1)}}$.
    \item $\{\matD_u\}_{u\in [N]}\rand\Z_q^{n\times (m+1)}$.
    \item $\forall u\in \rho([\ell]):\boxed{\begin{aligned}\matQ_{u}&\leftarrow \matU_{u}-(M_{\rho^{-1}(u),1}(\vecy\mid \veczero\mid\cdots\mid\veczero)\\
    &+\sum_{2\leq j\leq s_{\max}}M_{\rho^{-1}(u),j}\matB_j)\end{aligned}}$.
    \item $\forall u\in [N]\setminus\rho([\ell]):\boxed{\matQ_u\leftarrow \matU_{u}-\matD_{u}}$.    
    \item $\matU\leftarrow[\matU_1\mid\cdots\mid\matU_{N}]$.
    \item $\matC\leftarrow \Com(\pp,\matU)$.
    \item $\boxed{\matA\leftarrow \matA'-\matC\in \Z_q^{n\times m}}$.
    \item $\begin{aligned}&\pk=(\pp,n,m,q,\sigma,\chi,\chi_1,\chi_{s},\\
    &\matA,\{\matB_{i}\}_{i\in\{2,\ldots,s_{\max}\}},\{\matD_u\}_{u\in [N]},\{\matQ_u\}_{u\in [N]},\vecy)\end{aligned}$.
\end{enumerate}
\end{minipage}

\begin{minipage}[t]{0.5\textwidth}
\gameitem{Secret-key query for attribute set $U$:}
\begin{enumerate}[label=\arabic*., itemsep=0.5em]
\raggedright
   \item $\matV=[\matV_1\mid\cdots\mid \matV_{N}]\leftarrow \ver(\pp,1^{(m+1)N})$.
    \item $\hat{\vect}\leftarrow D_{\Z,\chi}^{m}$. 
    \item $\vect=(1,\hat{\vect}^\top)^\top\in \Z_q^{m+1}$.
    \item $\{\hat{\veck}_u\}_{u\in U}\leftarrow D_{\Z,\chi_{s}}^m$.
    \item $\forall u\in U$: $\tilde{\veck}_{u}\leftarrow \samplepre(\matB,\td_{\matB},(\matA\matV_{u}+\matQ_u)\vect-\matB\hat{\veck}_u,\chi_1)$.
    \item $\forall u\in U$: $\veck_{u}\leftarrow\hat{\veck}_u+\tilde{\veck}_{u}$.
    \item $\sk\leftarrow (\{\veck_u\}_{u\in U},\vect)$.
\end{enumerate}

\gameitem{Challenge phase:}
\begin{enumerate}[label=\arabic*., itemsep=0.5em]
\raggedright
    \item $\msg\rand\{0,1\}$.
    \item $\vecs\rand\Z_q^n$.
    \item $\vece_{1}\leftarrow D_{\Z,\chi}^m$.
    \item $\vece_{2}\leftarrow D_{\Z,\chi_{s}}^{m}$.
    \item $e_3\leftarrow D_{\Z,\chi_{s}}$.
    \item $\vecc_{1}^\top\leftarrow\vecs^\top\matB+\vece_{1}^\top$.
    \item $\vecc_{2}^\top\leftarrow \vecs^\top(\matA+\matC)+\vece_{2}^\top$.
    \item $c_3\leftarrow \vecs^\top\vecy+\msg\cdot\lceil q/2\rfloor+e_3$.
    \item $\ct\leftarrow(\vecc_{1},\vecc_{2},c_3)$.
\end{enumerate}
\end{minipage}
}
\end{center}

\newpage
\iitem{Game $\H_2$} This game is defined identically to Game $\H_1$, except for the way the
challenger generates the matrices $\matA'$, $\{\matU_u\}_{u\in [N]}$ and
the vector $\vecy$ during the setup phase. The indistinguishability
between Games $\H_1$ and $\H_2$ (Lemma \ref{lem1}) follows from the
leftover hash lemma with trapdoor (Lemma~\ref{lefttrap}).
\begin{center}
\small
    \framebox{%
\begin{minipage}[t]{0.5\linewidth}
\raggedright
\gameitem{Setup phase:}
\begin{enumerate}[label=\arabic*., itemsep=0.5em]
    \item $(\matB,\td_{\matB})\leftarrow\trapgen(1^n,1^m,q)$.
    \item $\matW\rand\Z_q^{2m^2n\times m}$.
    \item $\matT\leftarrow \samplepre([\matI_{2m^2}\otimes \matB\mid \matW],\begin{bmatrix} \matI_{2m^{2}}\otimes \td_{\matB} \\ \mathbf{0} \end{bmatrix}, \matI_{2m^2}\otimes \matG,\sigma)$. \item $\pp:=(\matB,\matW,\matT)$.
        \item $\boxed{\begin{aligned}&\matR\rand\{-1,1\}^{m\times m},\{\matR_u\}_{u\in [N]}\rand\{-1,1\}^{m\times (m+1)},\\
        &\vecr\rand\{-1,1\}^{m}\end{aligned}}$.
    \item $\boxed{\matA'\leftarrow\matB\matR}$.
    \item $\boxed{\vecy\leftarrow\matB\vecr}$.
    \item $\{\matB_{i}\}_{i\in \{2,\ldots,s_{\max}\}}\rand\Z_q^{n\times (m+1)}$.
    \item $\forall u\in [N]:\boxed{\matU_{u}\leftarrow\matB\matR_u}$.
    \item $\{\matD_{u}\}_{u\in [N]}\rand \Z_q^{n\times (m+1)}$.
    \item $\forall u\in \rho([\ell]): \matQ_{u}\leftarrow \matU_{u}-(M_{\rho^{-1}(u),1}(\vecy\mid \veczero\mid\cdots\mid\veczero)+\sum_{2\leq j\leq s_{\max}}M_{\rho^{-1}(u),j}\matB_j)$.
    \item $\forall u\in [N]\setminus\rho([\ell]):\matQ_u\leftarrow \matU_{u}-\matD_{u}$. 
    \item $\matU\leftarrow[\matU_1\mid\cdots\mid\matU_{N}]$.
\item $\matC\leftarrow \Com(\pp,\matU)$.
    \item $\matA\leftarrow \matA'-\matC\in \Z_q^{n\times m}$.
    \item $\begin{aligned}&\pk=(\pp,n,m,q,\sigma,\chi,\chi_1,\chi_{s},\\  &\matA,\{\matB_{i}\}_{i\in\{2,\ldots,s_{\max}\}},\{\matD_u\}_{u\in [N]},\{\matQ_u\}_{u\in [N]},\vecy).\end{aligned}$
\end{enumerate}
\end{minipage}

\begin{minipage}[t]{0.5\textwidth}
\raggedright
\gameitem{Secret-key query for attribute set $U$:}
\begin{enumerate}[label=\arabic*., itemsep=0.5em]
   \item $\matV=[\matV_1\mid\cdots\mid \matV_{N}]\leftarrow \ver(\pp,1^{(m+1)N})$.
    \item $\hat{\vect}\leftarrow D_{\Z,\chi}^{m}$. 
    \item $\vect=(1,\hat{\vect}^\top)^\top\in \Z_q^{m+1}$.
    \item $\{\hat{\veck}_u\}_{u\in U}\leftarrow D_{\Z,\chi_{s}}^m$.
    \item $\forall u\in U$: $\tilde{\veck}_{u}\leftarrow \samplepre(\matB,\td_{\matB},(\matA\matV_{u}+\matQ_u)\vect-\matB\hat{\veck}_u,\chi_1)$.
    \item $\forall u\in U$: $\veck_{u}\leftarrow\hat{\veck}_u+\tilde{\veck}_{u}$.
    \item $\sk\leftarrow (\{\veck_u\}_{u\in U},\vect)$.
\end{enumerate}

\gameitem{Challenge phase:}
\begin{enumerate}[label=\arabic*., itemsep=0.5em]
\raggedright
\item $\msg\rand\{0,1\}$.
    \item $\vecs\rand\Z_q^n$.
    \item $\vece_{1}\leftarrow D_{\Z,\chi}^m$.
    \item $\vece_{2}\leftarrow D_{\Z,\chi_{s}}^{m}$.
    \item $e_3\leftarrow D_{\Z,\chi_{s}}$.
    \item $\vecc_{1}^\top\leftarrow\vecs^\top\matB+\vece_{1}^\top$.
    \item $\vecc_{2}^\top\leftarrow \vecs^\top(\matA+\matC)+\vece_{2}^\top$.
    \item $c_3\leftarrow \vecs^\top\vecy+\msg\cdot\lceil q/2\rfloor+e_3$.
    \item $\ct\leftarrow(\vecc_{1},\vecc_{2},c_3)$.
\end{enumerate}
\end{minipage}
}
\end{center}

\newpage
\iitem{Game $\H_3$} This game is identical to Game~$\H_2$, except for how the challenger
generates the responses to secret-key queries.
The indistinguishability between Games~$\H_2$ and~$\H_3$ (Lemma~\ref{lem2})
follows from the noise-smudging lemma (Lemma~\ref{smudge}).
\begin{center}
    \small\framebox{%
\begin{minipage}[t]{0.5\linewidth}
\raggedright
\gameitem{Setup phase:}
\begin{enumerate}[label=\arabic*., itemsep=0.5em]
    \item $(\matB,\td_{\matB})\leftarrow\trapgen(1^n,1^m,q)$.
    \item $\matW\rand\Z_q^{2m^2n\times m}$.
    \item $\matT\leftarrow \samplepre([\matI_{2m^2}\otimes \matB\mid \matW],\begin{bmatrix} \matI_{2m^{2}}\otimes \td_{\matB} \\ \mathbf{0} \end{bmatrix}, \matI_{2m^2}\otimes \matG,\sigma)$. \item $\pp:=(\matB,\matW,\matT)$.
        \item $\matR\rand\{-1,1\}^{m\times m},\{\matR_u\}_{u\in [N]}\rand\{-1,1\}^{m\times (m+1)},
        \vecr\rand\{-1,1\}^{m}$.
    \item $\matA'\leftarrow\matB\matR$.
    \item $\vecy\leftarrow\matB\vecr$.
    \item $\{\matB_{i}\}_{i\in \{2,\ldots,s_{\max}\}}\rand\Z_q^{n\times (m+1)}$.
    \item $\forall i\in \{2,\ldots,s_{\max}\}$, $\matB_i:=[\vecb_i'\mid \matB_i']$ where $\vecb_i'\in\Z_q^n$, $\matB_i'\in\Z_q^{n\times m}$.
    \item $\forall u\in \rho([\ell])$: $\matN_u\leftarrow\sum_{2\leq j\leq s_{\max}}M_{\rho^{-1}(u),j}\matB'_j$.
    \item $\forall u\in [N]:\matU_{u}\leftarrow\matB\matR_u$.
    \item $\{\matD_{u}\}_{u\in [N]}\rand \Z_q^{n\times (m+1)}$.
    \item $\forall u\in [N]$, $\matD_u:=[\vecd_u'\mid \matD_u']$ where $\vecd_u'\in\Z_q^n$, $\matD_u'\in\Z_q^{n\times m}$.
    \item $\forall u\in \rho([\ell]): \matQ_{u}\leftarrow \matU_{u}-(M_{\rho^{-1}(u),1}(\vecy\mid \veczero\mid\cdots\mid\veczero)+\sum_{2\leq j\leq s_{\max}}M_{\rho^{-1}(u),j}\matB_j)$.
    \item $\forall u\in [N]\setminus\rho([\ell]):\matQ_u\leftarrow \matU_{u}-\matD_{u}$. 
    \item $\matU\leftarrow[\matU_1\mid\cdots\mid\matU_{N}]$.
\item $\matC\leftarrow \Com(\pp,\matU)$.
    \item $\matA\leftarrow \matA'-\matC\in \Z_q^{n\times m}$.
    \item $\begin{aligned}&\pk=(\pp,n,m,q,\sigma,\chi,\chi_1,\chi_{s},\\  &\matA,\{\matB_{i}\}_{i\in\{2,\ldots,s_{\max}\}},\{\matD_u\}_{u\in [N]},\{\matQ_u\}_{u\in [N]},\vecy).\end{aligned}$
\end{enumerate}
\end{minipage}

\begin{minipage}[t]{0.5\textwidth}
\gameitem{Secret-key query for attribute set $U$:}
\begin{enumerate}[label=\arabic*., itemsep=0.5em]
\raggedright
\item $\matV=[\matV_1\mid\cdots\mid \matV_{N}]\leftarrow \ver(\pp,1^{(m+1)N})$.
\item $\boxed{\matZ=[\matZ_1\mid\cdots\mid \matZ_{N}]\leftarrow \Open(\pp,\matU)}$.
    \item $\hat{\vect}\leftarrow D_{\Z,\chi}^{m}$. 
    \item $\vect=(1,\hat{\vect}^\top)^\top\in \Z_q^{m+1}$.
    \item $\{\hat{\veck}_u\}_{u\in U}\leftarrow D_{\Z,\chi_{s}}^m$.
    \item $\forall u\in U\cap\rho([\ell])$: $\boxed{\begin{aligned}\tilde{\veck}_{u}\leftarrow \samplepre(\matB,\td_{\matB},-M_{\rho^{-1}(u),1}\vecy\\-\sum_{2\leq j\leq s_{\max}}M_{\rho^{-1}(u),j}\vecb_j'
        -\matN_u\hat{\vect}-\matB\hat{\veck}_u,\chi_1)\end{aligned}}$.
    
    \item  $\forall u\in U\setminus\rho([\ell])$: $\boxed{\begin{aligned}\tilde{\veck}_{u}\leftarrow \samplepre(\matB,\td_{\matB},-\vecd_u'-\matD_{u}'\hat{\vect}-\matB\hat{\veck}_u,\chi_1)\end{aligned}}$.
    \item $\forall u\in U$: $\boxed{\veck_{u}\leftarrow\hat{\veck}_u+\tilde{\veck}_u+\matR\matV_{u}\vect+\matZ_{u}\vect}$.
    \item $\sk\leftarrow (\{\veck_u\}_{u\in U},\vect)$.
\end{enumerate}

\gameitem{Challenge phase:}
\begin{enumerate}[label=\arabic*., itemsep=0.5em]
\raggedright
   \item $\msg\rand\{0,1\}$.
    \item $\vecs\rand\Z_q^n$.
    \item $\vece_{1}\leftarrow D_{\Z,\chi}^m$.
    \item $\vece_{2}\leftarrow D_{\Z,\chi_{s}}^{m}$.
    \item $e_3\leftarrow D_{\Z,\chi_{s}}$.
    \item $\vecc_{1}^\top\leftarrow\vecs^\top\matB+\vece_{1}^\top$.
    \item $\vecc_{2}^\top\leftarrow \vecs^\top(\matA+\matC)+\vece_{2}^\top$.
    \item $c_3\leftarrow \vecs^\top\vecy+\msg\cdot\lceil q/2\rfloor+e_3$.
    \item $\ct\leftarrow(\vecc_{1},\vecc_{2},c_3)$.
\end{enumerate}
\end{minipage}
}
\end{center}

\newpage
\iitem{Game $\H_4$} This game is identical to Game~$\H_3$, except for how the challenger
generates the responses to secret-key queries.
More precisely, for each secret-key query in this game, the challenger samples
$\hat{\vect}\leftarrow D_{\Z,\chi}^{m}$ and
$\tilde{\vect}\leftarrow D_{\Z,\chi_1}^m$, and sets the last $m$ entries
of $\vect$ to be $\hat{\vect}+\tilde{\vect}$. The indistinguishability
between Game $\H_3$ and Game $\H_4$ (Lemma \ref{lem3}) follows from the noise-smudging lemma
(Lemma~\ref{smudge}).
\begin{center}
    \small\framebox{%
\begin{minipage}[t]{0.5\linewidth}
\raggedright
\gameitem{Setup phase:}
\begin{enumerate}[label=\arabic*., itemsep=0.5em]
    \item $(\matB,\td_{\matB})\leftarrow\trapgen(1^n,1^m,q)$.
    \item $\matW\rand\Z_q^{2m^2n\times m}$.
    \item $\matT\leftarrow \samplepre([\matI_{2m^2}\otimes \matB\mid \matW],\begin{bmatrix} \matI_{2m^{2}}\otimes \td_{\matB} \\ \mathbf{0} \end{bmatrix}, \matI_{2m^2}\otimes \matG,\sigma)$. \item $\pp:=(\matB,\matW,\matT)$.
        \item $\matR\rand\{-1,1\}^{m\times m},\{\matR_u\}_{u\in [N]}\rand\{-1,1\}^{m\times (m+1)},
        \vecr\rand\{-1,1\}^{m}$.
    \item $\matA'\leftarrow\matB\matR$.
    \item $\vecy\leftarrow\matB\vecr$.
    \item $\{\matB_{i}\}_{i\in \{2,\ldots,s_{\max}\}}\rand\Z_q^{n\times (m+1)}$.
    \item $\forall i\in \{2,\ldots,s_{\max}\}$, $\matB_i:=[\vecb_i'\mid \matB_i']$ where $\vecb_i'\in\Z_q^n$, $\matB_i'\in\Z_q^{n\times m}$.
    \item $\forall u\in \rho([\ell])$: $\matN_u\leftarrow\sum_{2\leq j\leq s_{\max}}M_{\rho^{-1}(u),j}\matB'_j$.
    \item $\forall u\in [N]:\matU_{u}\leftarrow\matB\matR_u$.
    \item $\{\matD_{u}\}_{u\in [N]}\rand \Z_q^{n\times (m+1)}$.
    \item $\forall u\in [N]$, $\matD_u:=[\vecd_u'\mid \matD_u']$ where $\vecd_u'\in\Z_q^n$, $\matD_u'\in\Z_q^{n\times m}$.
    \item $\forall u\in \rho([\ell]): \matQ_{u}\leftarrow \matU_{u}-(M_{\rho^{-1}(u),1}(\vecy\mid \veczero\mid\cdots\mid\veczero)+\sum_{2\leq j\leq s_{\max}}M_{\rho^{-1}(u),j}\matB_j)$.
    \item $\forall u\in [N]\setminus\rho([\ell]):\matQ_u\leftarrow \matU_{u}-\matD_{u}$. 
    \item $\matU\leftarrow[\matU_1\mid\cdots\mid\matU_{N}]$.
\item $\matC\leftarrow \Com(\pp,\matU)$.
    \item $\matA\leftarrow \matA'-\matC\in \Z_q^{n\times m}$.
    \item $\begin{aligned}&\pk=(\pp,n,m,q,\sigma,\chi,\chi_1,\chi_{s},\\  &\matA,\{\matB_{i}\}_{i\in\{2,\ldots,s_{\max}\}},\{\matD_u\}_{u\in [N]},\{\matQ_u\}_{u\in [N]},\vecy).\end{aligned}$
\end{enumerate}
\end{minipage}

\begin{minipage}[t]{0.5\textwidth}
\gameitem{Secret-key query for attribute set $U$:}
\begin{enumerate}[label=\arabic*., itemsep=0.5em]
\raggedright
\item $\matV=[\matV_1\mid\cdots\mid \matV_{N}]\leftarrow \ver(\pp,1^{(m+1)N})$.
\item $\matZ=[\matZ_1\mid\cdots\mid \matZ_{N}]\leftarrow \Open(\pp,\matU)$.
    \item $\hat{\vect}\leftarrow D_{\Z,\chi}^{m}$. 
    \item $\boxed{\tilde{\vect}\leftarrow D_{\Z,\chi_1}^m}$.
    \item $\boxed{\vect=(1,\hat{\vect}^\top+\tilde{\vect}^\top)^\top\in \Z_q^{m+1}}$.
    \item $\{\hat{\veck}_u\}_{u\in U}\leftarrow D_{\Z,\chi_{s}}^m$.
    \item $\forall u\in U\cap\rho([\ell])$: $\boxed{\begin{aligned}\tilde{\veck}_{u}&\leftarrow \samplepre(\matB,\td_{\matB},-M_{\rho^{-1}(u),1}\vecy\\
    &-\sum_{2\leq j\leq s_{\max}}M_{\rho^{-1}(u),j}\vecb_j'
        -\matN_u\hat{\vect}-\matN_u\tilde{\vect}-\matB\hat{\veck}_u,\chi_1)\end{aligned}}$.
    
    \item  $\forall u\in U\setminus\rho([\ell])$: $\boxed{\begin{aligned}\tilde{\veck}_{u}&\leftarrow \samplepre(\matB,\td_{\matB},-\vecd_u'\\
    &-\matD_{u}'\hat{\vect}-\matD_u'\tilde{\vect}-\matB\hat{\veck}_u,\chi_1)\end{aligned}}$.
    \item $\forall u\in U$: $\veck_{u}\leftarrow\hat{\veck}_u+\tilde{\veck}_u+\matR\matV_{u}\vect+\matZ_{u}\vect$.
    \item $\sk\leftarrow (\{\veck_u\}_{u\in U},\vect)$.
\end{enumerate}

\gameitem{Challenge phase:}
\begin{enumerate}[label=\arabic*., itemsep=0.5em]
\raggedright
   \item $\msg\rand\{0,1\}$.
    \item $\vecs\rand\Z_q^n$.
    \item $\vece_{1}\leftarrow D_{\Z,\chi}^m$.
    \item $\vece_{2}\leftarrow D_{\Z,\chi_{s}}^{m}$.
    \item $e_3\leftarrow D_{\Z,\chi_{s}}$.
    \item $\vecc_{1}^\top\leftarrow\vecs^\top\matB+\vece_{1}^\top$.
    \item $\vecc_{2}^\top\leftarrow \vecs^\top(\matA+\matC)+\vece_{2}^\top$.
    \item $c_3\leftarrow \vecs^\top\vecy+\msg\cdot\lceil q/2\rfloor+e_3$.
    \item $\ct\leftarrow(\vecc_{1},\vecc_{2},c_3)$.
\end{enumerate}
\end{minipage}
}
\end{center}

\newpage
\iitem{Game $\H_5$}
This game is identical to Game~$\H_4$, except for how the challenger
generates the matrices $\{\matB_i\}_{i\in \{2,\ldots,s_{\max}\}}$ and
$\{\matD_u\}_{u\in[N]}$, which are no longer sampled uniformly at random. In this game, the vector $\vecu_{i}\in\{0,1\}^{N+s_{\max}-1}$ represents the unit vector whose $i$-th entry is 1. The indistinguishability between Game $\H_4$ and Game $\H_5$ (Lemma \ref{lem4}) follows from the leftover hash lemma with trapdoor (Lemma \ref{lefttrap}).
\begin{center}
   \small \framebox{%
\begin{minipage}[t]{0.5\linewidth}
\raggedright
\gameitem{Setup phase:}
\begin{enumerate}[label=\arabic*., itemsep=0.5em]
    \item $(\matB,\td_{\matB})\leftarrow\trapgen(1^n,1^m,q)$.
    \item $\matW\rand\Z_q^{2m^2n\times m}$.
    \item $\matT\leftarrow \samplepre([\matI_{2m^2}\otimes \matB\mid \matW],\begin{bmatrix} \matI_{2m^{2}}\otimes \td_{\matB} \\ \mathbf{0} \end{bmatrix}, \matI_{2m^2}\otimes \matG,\sigma)$. \item $\pp:=(\matB,\matW,\matT)$.
        \item $\matR\rand\{-1,1\}^{m\times m},\{\matR_u\}_{u\in [N]}\rand\{-1,1\}^{m\times (m+1)},
        \vecr\rand\{-1,1\}^{m}$, $\boxed{\{\matR_i'\}_{i\in \{2,\ldots,s_{\max}\}}\rand\{-1,1\}^{m\times m}}$, $\boxed{\{\matR_u''\}_{u\in [N]}\rand\{-1,1\}^{m\times m}}$.
    \item $\matA'\leftarrow\matB\matR$.
    \item $\vecy\leftarrow\matB\vecr$.
    \item $\forall i\in\{2,\ldots,s_{\max}\}$, $\boxed{\matC_i\leftarrow \Com(\pp,\vecu_{i-1}\otimes \matG)}$.
    \item $\forall u\in [N]$, $\boxed{\matC_u'\leftarrow \Com(\pp,\vecu_{u+s_{\max}-1}\otimes \matG)}$.
    \item $\{\vecb_{i}'\}_{i\in \{2,\ldots,s_{\max}\}}\rand\Z_q^{n}$.
    \item $\forall i\in \{2,\ldots,s_{\max}\}$: $\boxed{\matB_i'\leftarrow \matB\matR_i'+\matC_i}$.
    \item $\forall i\in \{2,\ldots,s_{\max}\}$, $\matB_i:=[\vecb_i'\mid \matB_i']\in\Z_q^{n\times (m+1)}$.
    \item $\forall u\in \rho([\ell])$: $\matN_u\leftarrow\sum_{2\leq j\leq s_{\max}}M_{\rho^{-1}(u),j}\matB'_j$.
    \item $\forall u\in [N]:\matU_{u}\leftarrow\matB\matR_u$.
\item $\{\vecd_u'\}_{u\in [N]}\rand\Z_q^n$.
    \item $\forall u\in [N]$: $\boxed{\matD_{u}'\leftarrow \matB\matR_u''+\matC'_u}$.
    \item $\forall u\in [N]$, $\matD_u:=[\vecd_u'\mid \matD_u']\in\Z_q^{n\times (m+1)}$.
    \item $\forall u\in \rho([\ell]): \matQ_{u}\leftarrow \matU_{u}-(M_{\rho^{-1}(u),1}(\vecy\mid \veczero\mid\cdots\mid\veczero)+\sum_{2\leq j\leq s_{\max}}M_{\rho^{-1}(u),j}\matB_j)$.
    \item $\forall u\in [N]\setminus\rho([\ell]):\matQ_u\leftarrow \matU_{u}-\matD_{u}$. 
    \item $\matU\leftarrow[\matU_1\mid\cdots\mid\matU_{N}]$.
\item $\matC\leftarrow \Com(\pp,\matU)$.
    \item $\matA\leftarrow \matA'-\matC\in \Z_q^{n\times m}$.
    \item $\begin{aligned}&\pk=(\pp,n,m,q,\sigma,\chi,\chi_1,\chi_{s},\\  &\matA,\{\matB_{i}\}_{i\in\{2,\ldots,s_{\max}\}},\{\matD_u\}_{u\in [N]},\{\matQ_u\}_{u\in [N]},\vecy).\end{aligned}$
\end{enumerate}
\end{minipage}

\begin{minipage}[t]{0.5\textwidth}
\gameitem{Secret-key query for attribute set $U$:}
\begin{enumerate}[label=\arabic*., itemsep=0.5em]
\raggedright
\item $\matV=[\matV_1\mid\cdots\mid \matV_{N}]\leftarrow \ver(\pp,1^{(m+1)N})$.
\item $\matZ=[\matZ_1\mid\cdots\mid \matZ_{N}]\leftarrow \Open(\pp,\matU)$.
    \item $\hat{\vect}\leftarrow D_{\Z,\chi}^{m}$. 
    \item $\tilde{\vect}\leftarrow D_{\Z,\chi_1}^m$.
    \item $\vect=(1,\hat{\vect}^\top+\tilde{\vect}^\top)^\top\in \Z_q^{m+1}$.
    \item $\{\hat{\veck}_u\}_{u\in U}\leftarrow D_{\Z,\chi_{s}}^m$.
    \item $\forall u\in U\cap\rho([\ell])$: $\tilde{\veck}_{u}\leftarrow \samplepre(\matB,\td_{\matB},-M_{\rho^{-1}(u),1}\vecy-\sum_{2\leq j\leq s_{\max}}M_{\rho^{-1}(u),j}\vecb_j'
        -\matN_u\hat{\vect}-\matN_u\tilde{\vect}-\matB\hat{\veck}_u,\chi_1).$
    
    \item  $\forall u\in U\setminus\rho([\ell])$: $\tilde{\veck}_{u}\leftarrow \samplepre(\matB,\td_{\matB},-\vecd_u'-\matD_{u}'\hat{\vect}-\matD_u'\tilde{\vect}-\matB\hat{\veck}_u,\chi_1)$.
    \item $\forall u\in U$: $\veck_{u}\leftarrow\hat{\veck}_u+\tilde{\veck}_u+\matR\matV_{u}\vect+\matZ_{u}\vect$.
    \item $\sk\leftarrow (\{\veck_u\}_{u\in U},\vect)$.
\end{enumerate}

\gameitem{Challenge phase:}
\begin{enumerate}[label=\arabic*., itemsep=0.5em]
\raggedright
   \item $\msg\rand\{0,1\}$.
    \item $\vecs\rand\Z_q^n$.
    \item $\vece_{1}\leftarrow D_{\Z,\chi}^m$.
    \item $\vece_{2}\leftarrow D_{\Z,\chi_{s}}^{m}$.
    \item $e_3\leftarrow D_{\Z,\chi_{s}}$.
    \item $\vecc_{1}^\top\leftarrow\vecs^\top\matB+\vece_{1}^\top$.
    \item $\vecc_{2}^\top\leftarrow \vecs^\top(\matA+\matC)+\vece_{2}^\top$.
    \item $c_3\leftarrow \vecs^\top\vecy+\msg\cdot\lceil q/2\rfloor+e_3$.
    \item $\ct\leftarrow(\vecc_{1},\vecc_{2},c_3)$.
\end{enumerate}
\end{minipage}
}
\end{center}

\newpage
\iitem {Game $\H_6$}
This game is identical to Game~$\H_5$, except for how the challenger
generates the responses to secret-key queries. More precisely, the challenger
answers secret-key queries without using the trapdoor $\td_\matB$. Instead, in this game we use a trapdoor $\td$ for the matrix
$[\matI_{g+h}\otimes \matB \mid \matW']$ in order to invoke
$\samplepre$ in the key-generation algorithm. This trapdoor is obtained
from $(\matB,\matW,\matT)$ by the procedure described in the proof of
Lemma~\ref{lem5}, which is based on the Wee25 commitment scheme.
The indistinguishability between Games~$\H_5$ and~$\H_6$
(Lemma~\ref{lem5}) follows from the well-sampleness of the preimage
distribution for $\samplepre$ (Lemma~\ref{trapdoor}) and the
noise-smudging lemma (Lemma~\ref{smudge}).
\begin{center}
   \small \framebox{%
\begin{minipage}[t]{0.5\linewidth}
\raggedright
\gameitem{Setup phase:}
\begin{enumerate}[label=\arabic*., itemsep=0.5em]
    \item $(\matB,\td_{\matB})\leftarrow\trapgen(1^n,1^m,q)$.
    \item $\matW\rand\Z_q^{2m^2n\times m}$.

    \item $\matT\leftarrow \samplepre([\matI_{2m^2}\otimes \matB\mid \matW],\begin{bmatrix} \matI_{2m^{2}}\otimes \td_{\matB} \\ \mathbf{0} \end{bmatrix}, \matI_{2m^2}\otimes \matG,\sigma)$. \item $\pp:=(\matB,\matW,\matT)$.
        \item $\matR\rand\{-1,1\}^{m\times m},\{\matR_u\}_{u\in [N]}\rand\{-1,1\}^{m\times (m+1)},
        \vecr\rand\{-1,1\}^{m}$, $\{\matR_i'\}_{i\in \{2,\ldots,s_{\max}\}}\rand\{-1,1\}^{m\times m}$, $\{\matR_u''\}_{u\in [N]}\rand\{-1,1\}^{m\times m}$.
    \item $\matA'\leftarrow\matB\matR$.
    \item $\vecy\leftarrow\matB\vecr$.
    \item $\forall i\in\{2,\ldots,s_{\max}\}$, $\matC_i\leftarrow \Com(\pp,\vecu_{i-1}\otimes \matG)$.
    \item $\forall u\in [N]$, $\matC_u'\leftarrow \Com(\pp,\vecu_{u+s_{\max}-1}\otimes \matG)$.
   \item $\{\vecb_{i}'\}_{i\in \{2,\ldots,s_{\max}\}}\rand\Z_q^{n}$.
    \item $\forall i\in \{2,\ldots,s_{\max}\}$: $\matB_i'\leftarrow \matB\matR_i'+\matC_i$.
    \item $\forall i\in \{2,\ldots,s_{\max}\}$, $\matB_i:=[\vecb_i'\mid \matB_i']\in\Z_q^{n\times (m+1)}$.
    \item $\forall u\in \rho([\ell])$: $\matN_u\leftarrow\sum_{2\leq j\leq s_{\max}}M_{\rho^{-1}(u),j}\matB'_j$.
    \item $\forall u\in \rho([\ell])$: $\vecn_u\leftarrow M_{\rho^{-1}(u),1}\vecy+\sum_{2\leq j\leq s_{\max}}M_{\rho^{-1}(u),j}\vecb_j'$.
    \item $\forall u\in [N]:\matU_{u}\leftarrow\matB\matR_u$.
\item $\{\vecd_u'\}_{u\in [N]}\rand\Z_q^n$.
    \item $\forall u\in [N]$: $\matD_{u}'\leftarrow \matB\matR_u''+\matC'_u$.
    \item $\forall u\in [N]$, $\matD_u:=[\vecd_u'\mid \matD_u']\in\Z_q^{n\times (m+1)}$.
    \item $\forall u\in \rho([\ell]): \matQ_{u}\leftarrow \matU_{u}-(M_{\rho^{-1}(u),1}(\vecy\mid \veczero\mid\cdots\mid\veczero)+\sum_{2\leq j\leq s_{\max}}M_{\rho^{-1}(u),j}\matB_j)$.
    \item $\forall u\in [N]\setminus\rho([\ell]):\matQ_u\leftarrow \matU_{u}-\matD_{u}$. 
    \item $\matU\leftarrow[\matU_1\mid\cdots\mid\matU_{N}]$.
\item $\matC\leftarrow \Com(\pp,\matU)$.
    \item $\matA\leftarrow \matA'-\matC\in \Z_q^{n\times m}$.
    \item $\begin{aligned}&\pk=(\pp,n,m,q,\sigma,\chi,\chi_1,\chi_{s},\\  &\matA,\{\matB_{i}\}_{i\in\{2,\ldots,s_{\max}\}},\{\matD_u\}_{u\in [N]},\{\matQ_u\}_{u\in [N]},\vecy).\end{aligned}$
\end{enumerate}
\end{minipage}

\begin{minipage}[t]{0.5\textwidth}
\gameitem{Secret-key query for attribute set $U$:}
\begin{enumerate}[label=\arabic*., itemsep=0.5em]
\raggedright
\item $\matV=[\matV_1\mid\cdots\mid \matV_{N}]\leftarrow \ver(\pp,1^{(m+1)N})$.
\item $\matZ=[\matZ_1\mid\cdots\mid \matZ_{N}]\leftarrow \Open(\pp,\matU)$.
    \item $\hat{\vect}\leftarrow D_{\Z,\chi}^{m}$. 
    \item Denote $U\cap\rho([\ell])=\{u_1,\ldots,u_g\}$ and $U\setminus \rho([\ell])=\{u_1',\ldots,u_h'\}$.
    \item $\matW'\leftarrow [\matN_{u_1}^\top\mid \cdots\mid\matN_{u_g}^\top\mid\matD_{u_1'}'^\top\mid\cdots\mid\matD_{u_h'}'^\top]^\top$.
    \item $\{\hat{\veck}_u\}_{u\in U}\leftarrow D_{\Z,\chi_{s}}^m$.
     \item $\boxed{\begin{aligned}\left[\begin{smallmatrix}
        \tilde{\veck}_{u_1}\\
        \vdots\\
        \tilde{\veck}_{u_g}\\
        \tilde{\veck}_{u_1'}\\
        \vdots\\
        \tilde{\veck}_{u_h'}\\
        \tilde{\vect}
    \end{smallmatrix}\right]\leftarrow\samplepre(\left[\begin{matrix}
        \matI_{g+h}\otimes \matB\mid\matW'
    \end{matrix}\right],\td,\\
    \left[\begin{smallmatrix}
        -\vecn_{u_1}-\matN_{u_1}\hat{\vect}-\matB\hat{\veck}_{u_1}\\
        \vdots\\
        -\vecn_{u_g}
        -\matN_{u_g}\hat{\vect}-\matB\hat{\veck}_{u_g}\\
        -\vecd_{u_1'}'-\matD_{u_1'}'\hat{\vect}-\matB\hat{\veck}_{u_1'}\\
        \vdots\\
        -\vecd_{u_h'}'-\matD_{u_h'}'\hat{\vect}-\matB\hat{\veck}_{u_h'}
    \end{smallmatrix}\right],\chi_1)\end{aligned}}$.
    \item $\vect=(1,\hat{\vect}^\top+\tilde{\vect}^\top)^\top\in \Z_q^{m+1}$.
    \item $\forall u\in U$: $\veck_{u}\leftarrow\hat{\veck}_u+\tilde{\veck}_{u}+\matR\matV_{u}\vect+\matZ_u\vect$.

    \item $\sk\leftarrow (\{\veck_u\}_{u\in U},\vect)$.
\end{enumerate}

\gameitem{Challenge phase:}
\begin{enumerate}[label=\arabic*., itemsep=0.5em]
\raggedright
    \item $\msg\rand\{0,1\}$.
    \item $\vecs\rand\Z_q^n$.
    \item $\vece_{1}\leftarrow D_{\Z,\chi}^m$.
    \item $\vece_{2}\leftarrow D_{\Z,\chi_{s}}^{m}$.
    \item $e_3\leftarrow D_{\Z,\chi_{s}}$.
    \item $\vecc_{1}^\top\leftarrow\vecs^\top\matB+\vece_{1}^\top$.
    \item $\vecc_{2}^\top\leftarrow \vecs^\top(\matA+\matC)+\vece_{2}^\top$.
    \item $c_3\leftarrow \vecs^\top\vecy+\msg\cdot\lceil q/2\rfloor+e_3$.
    \item $\ct\leftarrow(\vecc_{1},\vecc_{2},c_3)$.
\end{enumerate}
\end{minipage}
}
\end{center}

\newpage
\iitem {Game $\H_7$}
This game is defined identically to Game $\H_6$ except for the way the challenger generates the challenge ciphertext components $\vecc_2$ and $c_3$. The indistinguishability between Game $\H_6$ and Game $\H_7$ (Lemma \ref{lem6}) follows from the noise-smudging lemma (Lemma \ref{smudge}).
\begin{center}
   \small \framebox{%
\begin{minipage}[t]{0.5\linewidth}
\raggedright
\gameitem{Setup phase:}
\begin{enumerate}[label=\arabic*., itemsep=0.5em]
    \item $(\matB,\td_{\matB})\leftarrow\trapgen(1^n,1^m,q)$.
    \item $\matW\rand\Z_q^{2m^2n\times m}$.

    \item $\matT\leftarrow \samplepre([\matI_{2m^2}\otimes \matB\mid \matW],\begin{bmatrix} \matI_{2m^{2}}\otimes \td_{\matB} \\ \mathbf{0} \end{bmatrix}, \matI_{2m^2}\otimes \matG,\sigma)$. \item $\pp:=(\matB,\matW,\matT)$.
        \item $\matR\rand\{-1,1\}^{m\times m},\{\matR_u\}_{u\in [N]}\rand\{-1,1\}^{m\times (m+1)},
        \vecr\rand\{-1,1\}^{m}$, $\{\matR_i'\}_{i\in \{2,\ldots,s_{\max}\}}\rand\{-1,1\}^{m\times m}$, $\{\matR_u''\}_{u\in [N]}\rand\{-1,1\}^{m\times m}$.
    \item $\matA'\leftarrow\matB\matR$.
    \item $\vecy\leftarrow\matB\vecr$.
    \item $\forall i\in\{2,\ldots,s_{\max}\}$, $\matC_i\leftarrow \Com(\pp,\vecu_{i-1}\otimes \matG)$.
    \item $\forall u\in [N]$, $\matC_u'\leftarrow \Com(\pp,\vecu_{u+s_{\max}-1}\otimes \matG)$.
   \item $\{\vecb_{i}'\}_{i\in \{2,\ldots,s_{\max}\}}\rand\Z_q^{n}$.
    \item $\forall i\in \{2,\ldots,s_{\max}\}$: $\matB_i'\leftarrow \matB\matR_i'+\matC_i$.
    \item $\forall i\in \{2,\ldots,s_{\max}\}$, $\matB_i:=[\vecb_i'\mid \matB_i']\in\Z_q^{n\times (m+1)}$.
    \item $\forall u\in \rho([\ell])$: $\matN_u\leftarrow\sum_{2\leq j\leq s_{\max}}M_{\rho^{-1}(u),j}\matB'_j$.
    \item $\forall u\in \rho([\ell])$: $\vecn_u\leftarrow M_{\rho^{-1}(u),1}\vecy+\sum_{2\leq j\leq s_{\max}}M_{\rho^{-1}(u),j}\vecb_j'$.
    \item $\forall u\in [N]:\matU_{u}\leftarrow\matB\matR_u$.
\item $\{\vecd_u'\}_{u\in [N]}\rand\Z_q^n$.
    \item $\forall u\in [N]$: $\matD_{u}'\leftarrow \matB\matR_u''+\matC'_u$.
    \item $\forall u\in [N]$, $\matD_u:=[\vecd_u'\mid \matD_u']\in\Z_q^{n\times (m+1)}$.
    \item $\forall u\in \rho([\ell]): \matQ_{u}\leftarrow \matU_{u}-(M_{\rho^{-1}(u),1}(\vecy\mid \veczero\mid\cdots\mid\veczero)+\sum_{2\leq j\leq s_{\max}}M_{\rho^{-1}(u),j}\matB_j)$.
    \item $\forall u\in [N]\setminus\rho([\ell]):\matQ_u\leftarrow \matU_{u}-\matD_{u}$. 
    \item $\matU\leftarrow[\matU_1\mid\cdots\mid\matU_{N}]$.
\item $\matC\leftarrow \Com(\pp,\matU)$.
    \item $\matA\leftarrow \matA'-\matC\in \Z_q^{n\times m}$.
    \item $\begin{aligned}&\pk=(\pp,n,m,q,\sigma,\chi,\chi_1,\chi_{s},\\  &\matA,\{\matB_{i}\}_{i\in\{2,\ldots,s_{\max}\}},\{\matD_u\}_{u\in [N]},\{\matQ_u\}_{u\in [N]},\vecy).\end{aligned}$
\end{enumerate}
\end{minipage}

\begin{minipage}[t]{0.5\textwidth}
\gameitem{Secret-key query for attribute set $U$:}
\begin{enumerate}[label=\arabic*., itemsep=0.5em]
\raggedright
\item $\matV=[\matV_1\mid\cdots\mid \matV_{N}]\leftarrow \ver(\pp,1^{(m+1)N})$.
\item $\matZ=[\matZ_1\mid\cdots\mid \matZ_{N}]\leftarrow \Open(\pp,\matU)$.
    \item $\hat{\vect}\leftarrow D_{\Z,\chi}^{m}$. 
    \item Denote $U\cap\rho([\ell])=\{u_1,\ldots,u_g\}$ and $U\setminus \rho([\ell])=\{u_1',\ldots,u_h'\}$.
    \item $\matW'\leftarrow [\matN_{u_1}^\top\mid \cdots\mid\matN_{u_g}^\top\mid\matD_{u_1'}'^\top\mid\cdots\mid\matD_{u_h'}'^\top]^\top$.
    \item $\{\hat{\veck}_u\}_{u\in U}\leftarrow D_{\Z,\chi_{s}}^m$.
     \item $\begin{aligned}\left[\begin{smallmatrix}
        \tilde{\veck}_{u_1}\\
        \vdots\\
        \tilde{\veck}_{u_g}\\
        \tilde{\veck}_{u_1'}\\
        \vdots\\
        \tilde{\veck}_{u_h'}\\
        \tilde{\vect}
    \end{smallmatrix}\right]\leftarrow\samplepre(\left[\begin{matrix}
        \matI_{g+h}\otimes \matB\mid\matW'
    \end{matrix}\right],\td,\\
    \left[\begin{smallmatrix}
        -\vecn_{u_1}-\matN_{u_1}\hat{\vect}-\matB\hat{\veck}_{u_1}\\
        \vdots\\
        -\vecn_{u_g}
        -\matN_{u_g}\hat{\vect}-\matB\hat{\veck}_{u_g}\\
        -\vecd_{u_1'}'-\matD_{u_1'}'\hat{\vect}-\matB\hat{\veck}_{u_1'}\\
        \vdots\\
        -\vecd_{u_h'}'-\matD_{u_h'}'\hat{\vect}-\matB\hat{\veck}_{u_h'}
    \end{smallmatrix}\right],\chi_1)\end{aligned}$.
    \item $\vect=(1,\hat{\vect}^\top+\tilde{\vect}^\top)^\top\in \Z_q^{m+1}$.
    \item $\forall u\in U$: $\veck_{u}\leftarrow\hat{\veck}_u+\tilde{\veck}_{u}+\matR\matV_{u}\vect+\matZ_u\vect$.

    \item $\sk\leftarrow (\{\veck_u\}_{u\in U},\vect)$.
\end{enumerate}

\gameitem{Challenge phase:}
\begin{enumerate}[label=\arabic*., itemsep=0.5em]
\raggedright
    \item $\msg\rand\{0,1\}$.
    \item $\vecs\rand\Z_q^n$.
    \item $\vece_{1}\leftarrow D_{\Z,\chi}^m$.
    \item $\vece_{2}\leftarrow D_{\Z,\chi_{s}}^{m}$.
    \item $e_3\leftarrow D_{\Z,\chi_{s}}$.
    \item $\vecc_{1}^\top\leftarrow\vecs^\top\matB+\vece_{1}^\top$.
    \item $\boxed{\vecc_{2}^\top\leftarrow \vecc_1^\top\matR+\vece_2^\top}$.
    \item $\boxed{c_3\leftarrow \vecc_1^\top\vecr+\msg\cdot\lceil q/2\rfloor+e_3}$.
\end{enumerate}
\end{minipage}
}
\end{center}

\newpage
\iitem {Game $\H_8$}
This game is defined identically to Game~$\H_7$, except for the way the
challenger generates the ciphertext component $\vecc_1$. In Game~$\H_8$,
$\vecc_1$ is generated uniformly and independently at random. The
indistinguishability between Games~$\H_7$ and~$\H_8$ (Lemma~\ref{lem7})
follows from the $(2m^2,\sigma)$-succinct $\LWE$ assumption.
\begin{center}
   \small \framebox{%
\begin{minipage}[t]{0.5\linewidth}
\raggedright
\gameitem{Setup phase:}
\begin{enumerate}[label=\arabic*., itemsep=0.5em]
    \item $(\matB,\td_{\matB})\leftarrow\trapgen(1^n,1^m,q)$.
    \item $\matW\rand\Z_q^{2m^2n\times m}$.

    \item $\matT\leftarrow \samplepre([\matI_{2m^2}\otimes \matB\mid \matW],\begin{bmatrix} \matI_{2m^{2}}\otimes \td_{\matB} \\ \mathbf{0} \end{bmatrix}, \matI_{2m^2}\otimes \matG,\sigma)$. \item $\pp:=(\matB,\matW,\matT)$.
        \item $\matR\rand\{-1,1\}^{m\times m},\{\matR_u\}_{u\in [N]}\rand\{-1,1\}^{m\times (m+1)},
        \vecr\rand\{-1,1\}^{m}$, $\{\matR_i'\}_{i\in \{2,\ldots,s_{\max}\}}\rand\{-1,1\}^{m\times m}$, $\{\matR_u''\}_{u\in [N]}\rand\{-1,1\}^{m\times m}$.
    \item $\matA'\leftarrow\matB\matR$.
    \item $\vecy\leftarrow\matB\vecr$.
    \item $\forall i\in\{2,\ldots,s_{\max}\}$, $\matC_i\leftarrow \Com(\pp,\vecu_{i-1}\otimes \matG)$.
    \item $\forall u\in [N]$, $\matC_u'\leftarrow \Com(\pp,\vecu_{u+s_{\max}-1}\otimes \matG)$.
   \item $\{\vecb_{i}'\}_{i\in \{2,\ldots,s_{\max}\}}\rand\Z_q^{n}$.
    \item $\forall i\in \{2,\ldots,s_{\max}\}$: $\matB_i'\leftarrow \matB\matR_i'+\matC_i$.
    \item $\forall i\in \{2,\ldots,s_{\max}\}$, $\matB_i:=[\vecb_i'\mid \matB_i']\in\Z_q^{n\times (m+1)}$.
    \item $\forall u\in \rho([\ell])$: $\matN_u\leftarrow\sum_{2\leq j\leq s_{\max}}M_{\rho^{-1}(u),j}\matB'_j$.
    \item $\forall u\in \rho([\ell])$: $\vecn_u\leftarrow M_{\rho^{-1}(u),1}\vecy+\sum_{2\leq j\leq s_{\max}}M_{\rho^{-1}(u),j}\vecb_j'$.
    \item $\forall u\in [N]:\matU_{u}\leftarrow\matB\matR_u$.
\item $\{\vecd_u'\}_{u\in [N]}\rand\Z_q^n$.
    \item $\forall u\in [N]$: $\matD_{u}'\leftarrow \matB\matR_u''+\matC'_u$.
    \item $\forall u\in [N]$, $\matD_u:=[\vecd_u'\mid \matD_u']\in\Z_q^{n\times (m+1)}$.
    \item $\forall u\in \rho([\ell]): \matQ_{u}\leftarrow \matU_{u}-(M_{\rho^{-1}(u),1}(\vecy\mid \veczero\mid\cdots\mid\veczero)+\sum_{2\leq j\leq s_{\max}}M_{\rho^{-1}(u),j}\matB_j)$.
    \item $\forall u\in [N]\setminus\rho([\ell]):\matQ_u\leftarrow \matU_{u}-\matD_{u}$. 
    \item $\matU\leftarrow[\matU_1\mid\cdots\mid\matU_{N}]$.
\item $\matC\leftarrow \Com(\pp,\matU)$.
    \item $\matA\leftarrow \matA'-\matC\in \Z_q^{n\times m}$.
    \item $\begin{aligned}&\pk=(\pp,n,m,q,\sigma,\chi,\chi_1,\chi_{s},\\  &\matA,\{\matB_{i}\}_{i\in\{2,\ldots,s_{\max}\}},\{\matD_u\}_{u\in [N]},\{\matQ_u\}_{u\in [N]},\vecy).\end{aligned}$
\end{enumerate}
\end{minipage}

\begin{minipage}[t]{0.5\textwidth}
\gameitem{Secret-key query for attribute set $U$:}
\begin{enumerate}[label=\arabic*., itemsep=0.5em]
\raggedright
\item $\matV=[\matV_1\mid\cdots\mid \matV_{N}]\leftarrow \ver(\pp,1^{(m+1)N})$.
\item $\matZ=[\matZ_1\mid\cdots\mid \matZ_{N}]\leftarrow \Open(\pp,\matU)$.
    \item $\hat{\vect}\leftarrow D_{\Z,\chi}^{m}$. 
    \item Denote $U\cap\rho([\ell])=\{u_1,\ldots,u_g\}$ and $U\setminus \rho([\ell])=\{u_1',\ldots,u_h'\}$.
    \item $\matW'\leftarrow [\matN_{u_1}^\top\mid \cdots\mid\matN_{u_g}^\top\mid\matD_{u_1'}'^\top\mid\cdots\mid\matD_{u_h'}'^\top]^\top$.
    \item $\{\hat{\veck}_u\}_{u\in U}\leftarrow D_{\Z,\chi_{s}}^m$.
     \item $\begin{aligned}\left[\begin{smallmatrix}
        \tilde{\veck}_{u_1}\\
        \vdots\\
        \tilde{\veck}_{u_g}\\
        \tilde{\veck}_{u_1'}\\
        \vdots\\
        \tilde{\veck}_{u_h'}\\
        \tilde{\vect}
    \end{smallmatrix}\right]\leftarrow\samplepre(\left[\begin{matrix}
        \matI_{g+h}\otimes \matB\mid\matW'
    \end{matrix}\right],\td,\\
    \left[\begin{smallmatrix}
        -\vecn_{u_1}-\matN_{u_1}\hat{\vect}-\matB\hat{\veck}_{u_1}\\
        \vdots\\
        -\vecn_{u_g}
        -\matN_{u_g}\hat{\vect}-\matB\hat{\veck}_{u_g}\\
        -\vecd_{u_1'}'-\matD_{u_1'}'\hat{\vect}-\matB\hat{\veck}_{u_1'}\\
        \vdots\\
        -\vecd_{u_h'}'-\matD_{u_h'}'\hat{\vect}-\matB\hat{\veck}_{u_h'}
    \end{smallmatrix}\right],\chi_1)\end{aligned}$.
    \item $\vect=(1,\hat{\vect}^\top+\tilde{\vect}^\top)^\top\in \Z_q^{m+1}$.
    \item $\forall u\in U$: $\veck_{u}\leftarrow\hat{\veck}_u+\tilde{\veck}_{u}+\matR\matV_{u}\vect+\matZ_u\vect$.

    \item $\sk\leftarrow (\{\veck_u\}_{u\in U},\vect)$.
\end{enumerate}

\gameitem{Challenge phase:}
\begin{enumerate}[label=\arabic*., itemsep=0.5em]
\raggedright
    \item $\msg\rand\{0,1\}$.
    \item $\vece_{2}\leftarrow D_{\Z,\chi_{s}}^{m}$.
    \item $e_3\leftarrow D_{\Z,\chi_{s}}$.
    \item $\boxed{\vecc_{1}\rand\Z_q^m}$.
    \item $\vecc_{2}^\top\leftarrow \vecc_1^\top\matR+\vece_2^\top$.
    \item $c_3\leftarrow \vecc_1^\top\vecr+\msg\cdot\lceil q/2\rfloor+e_3$.
\end{enumerate}
\end{minipage}
}
\end{center}

\newpage
\iitem {Game $\H_9$}
The game is defined identically to Game $\H_8$ except for the process for generating the ciphertext component $c_3$. In this game, the element $c_3$ is generated uniformly and independently at random instead. The advantage of the adversary $\mathcal{A}$ in this game is $0$,
since the message $\msg$ is information-theoretically hidden. The indistinguishability between Game $\H_8$ and $\H_9$ (Lemma \ref{lem8}) follows from the leftover hash lemma (Lemma \ref{left}).
\begin{center}
   \small \framebox{%
\begin{minipage}[t]{0.5\linewidth}
\raggedright
\gameitem{Setup phase:}
\begin{enumerate}[label=\arabic*., itemsep=0.5em]
    \item $(\matB,\td_{\matB})\leftarrow\trapgen(1^n,1^m,q)$.
    \item $\matW\rand\Z_q^{2m^2n\times m}$.

    \item $\matT\leftarrow \samplepre([\matI_{2m^2}\otimes \matB\mid \matW],\begin{bmatrix} \matI_{2m^{2}}\otimes \td_{\matB} \\ \mathbf{0} \end{bmatrix}, \matI_{2m^2}\otimes \matG,\sigma)$. \item $\pp:=(\matB,\matW,\matT)$.
        \item $\matR\rand\{-1,1\}^{m\times m},\{\matR_u\}_{u\in [N]}\rand\{-1,1\}^{m\times (m+1)},
        \vecr\rand\{-1,1\}^{m}$, $\{\matR_i'\}_{i\in \{2,\ldots,s_{\max}\}}\rand\{-1,1\}^{m\times m}$, $\{\matR_u''\}_{u\in [N]}\rand\{-1,1\}^{m\times m}$.
    \item $\matA'\leftarrow\matB\matR$.
    \item $\vecy\leftarrow\matB\vecr$.
    \item $\forall i\in\{2,\ldots,s_{\max}\}$, $\matC_i\leftarrow \Com(\pp,\vecu_{i-1}\otimes \matG)$.
    \item $\forall u\in [N]$, $\matC_u'\leftarrow \Com(\pp,\vecu_{u+s_{\max}-1}\otimes \matG)$.
   \item $\{\vecb_{i}'\}_{i\in \{2,\ldots,s_{\max}\}}\rand\Z_q^{n}$.
    \item $\forall i\in \{2,\ldots,s_{\max}\}$: $\matB_i'\leftarrow \matB\matR_i'+\matC_i$.
    \item $\forall i\in \{2,\ldots,s_{\max}\}$, $\matB_i:=[\vecb_i'\mid \matB_i']\in\Z_q^{n\times (m+1)}$.
    \item $\forall u\in \rho([\ell])$: $\matN_u\leftarrow\sum_{2\leq j\leq s_{\max}}M_{\rho^{-1}(u),j}\matB'_j$.
    \item $\forall u\in \rho([\ell])$: $\vecn_u\leftarrow M_{\rho^{-1}(u),1}\vecy+\sum_{2\leq j\leq s_{\max}}M_{\rho^{-1}(u),j}\vecb_j'$.
    \item $\forall u\in [N]:\matU_{u}\leftarrow\matB\matR_u$.
\item $\{\vecd_u'\}_{u\in [N]}\rand\Z_q^n$.
    \item $\forall u\in [N]$: $\matD_{u}'\leftarrow \matB\matR_u''+\matC'_u$.
    \item $\forall u\in [N]$, $\matD_u:=[\vecd_u'\mid \matD_u']\in\Z_q^{n\times (m+1)}$.
    \item $\forall u\in \rho([\ell]): \matQ_{u}\leftarrow \matU_{u}-(M_{\rho^{-1}(u),1}(\vecy\mid \veczero\mid\cdots\mid\veczero)+\sum_{2\leq j\leq s_{\max}}M_{\rho^{-1}(u),j}\matB_j)$.
    \item $\forall u\in [N]\setminus\rho([\ell]):\matQ_u\leftarrow \matU_{u}-\matD_{u}$. 
    \item $\matU\leftarrow[\matU_1\mid\cdots\mid\matU_{N}]$.
\item $\matC\leftarrow \Com(\pp,\matU)$.
    \item $\matA\leftarrow \matA'-\matC\in \Z_q^{n\times m}$.
    \item $\begin{aligned}&\pk=(\pp,n,m,q,\sigma,\chi,\chi_1,\chi_{s},\\  &\matA,\{\matB_{i}\}_{i\in\{2,\ldots,s_{\max}\}},\{\matD_u\}_{u\in [N]},\{\matQ_u\}_{u\in [N]},\vecy).\end{aligned}$
\end{enumerate}
\end{minipage}

\begin{minipage}[t]{0.5\textwidth}
\gameitem{Secret-key query for attribute set $U$:}
\begin{enumerate}[label=\arabic*., itemsep=0.5em]
\raggedright
\item $\matV=[\matV_1\mid\cdots\mid \matV_{N}]\leftarrow \ver(\pp,1^{(m+1)N})$.
\item $\matZ=[\matZ_1\mid\cdots\mid \matZ_{N}]\leftarrow \Open(\pp,\matU)$.
    \item $\hat{\vect}\leftarrow D_{\Z,\chi}^{m}$. 
    \item Denote $U\cap\rho([\ell])=\{u_1,\ldots,u_g\}$ and $U\setminus \rho([\ell])=\{u_1',\ldots,u_h'\}$.
    \item $\matW'\leftarrow [\matN_{u_1}^\top\mid \cdots\mid\matN_{u_g}^\top\mid\matD_{u_1'}'^\top\mid\cdots\mid\matD_{u_h'}'^\top]^\top$.
    \item $\{\hat{\veck}_u\}_{u\in U}\leftarrow D_{\Z,\chi_{s}}^m$.
     \item $\begin{aligned}\left[\begin{smallmatrix}
        \tilde{\veck}_{u_1}\\
        \vdots\\
        \tilde{\veck}_{u_g}\\
        \tilde{\veck}_{u_1'}\\
        \vdots\\
        \tilde{\veck}_{u_h'}\\
        \tilde{\vect}
    \end{smallmatrix}\right]\leftarrow\samplepre(\left[\begin{matrix}
        \matI_{g+h}\otimes \matB\mid\matW'
    \end{matrix}\right],\td,\\
    \left[\begin{smallmatrix}
        -\vecn_{u_1}-\matN_{u_1}\hat{\vect}-\matB\hat{\veck}_{u_1}\\
        \vdots\\
        -\vecn_{u_g}
        -\matN_{u_g}\hat{\vect}-\matB\hat{\veck}_{u_g}\\
        -\vecd_{u_1'}'-\matD_{u_1'}'\hat{\vect}-\matB\hat{\veck}_{u_1'}\\
        \vdots\\
        -\vecd_{u_h'}'-\matD_{u_h'}'\hat{\vect}-\matB\hat{\veck}_{u_h'}
    \end{smallmatrix}\right],\chi_1)\end{aligned}$.
    \item $\vect=(1,\hat{\vect}^\top+\tilde{\vect}^\top)^\top\in \Z_q^{m+1}$.
    \item $\forall u\in U$: $\veck_{u}\leftarrow\hat{\veck}_u+\tilde{\veck}_{u}+\matR\matV_{u}\vect+\matZ_u\vect$.

    \item $\sk\leftarrow (\{\veck_u\}_{u\in U},\vect)$.
\end{enumerate}

\gameitem{Challenge phase:}
\begin{enumerate}[label=\arabic*., itemsep=0.5em]
\raggedright
    \item $\vece_{2}\leftarrow D_{\Z,\chi_{s}}^{m}$.
    \item $\vecc_{1}\rand\Z_q^m$.
    \item $\vecc_{2}^\top\leftarrow \vecc_1^\top\matR+\vece_2^\top.$
    \item $\boxed{c_3\rand\Z_q.}$
\end{enumerate}
\end{minipage}
}
\end{center}

\begin{lemma}\label{lem0}
    We have $\H_0\equiv\H_1$.
\end{lemma}

\begin{proof}
    The only difference between Games $\H_0$ and $\H_1$ is the way the
    matrices $\matA$, $\{\matU_u\}_{u\in[N]}$ and
    $\{\matQ_u\}_{u\in [N]}$ are generated. In Game $\H_0$, the matrices $\matA,\{\matQ_{u}\}_{u\in [N]}$ are generated uniformly and independently at random. Then, for all $i\in [\ell]$, it sets $\matU_{\rho(i)}\leftarrow M_{i,1}(\vecy\mid \veczero\mid\cdots\mid\veczero)+\sum_{2\leq j\leq s_{\max}}M_{i,j}\matB_j+\matQ_{\rho(i)}$, and for all $u\in [N]\setminus\rho([\ell])$, it sets $\matU_u\leftarrow\matQ_{u}+\matD_{u}$. In Game $\H_1$, the matrix $\matA$ is obtained by first sampling
$\matA'\rand\Z_q^{n\times m}$ and then setting
$\matA\leftarrow \matA'-\matC$, where $\matC$ is a matrix independent of $\matA'$ ($\matC$ is not sent to the adversary, and in both games it is computed as $\Com(\pp,\matU)$ from the same distribution of $\matU$). Then the challenger instead samples $\{\matU_u\}_{u\in [N]}\rand\Z_q^{n\times (m+1)}$ and sets $\matQ_{u}\leftarrow \matU_{u}-(M_{\rho^{-1}(u),1}(\vecy\mid \veczero\mid\cdots\mid\veczero)+\sum_{2\leq j\leq s_{\max}}M_{\rho^{-1}(u),j}\matB_j)$ for all $u\in \rho([\ell])$ and $\matQ_{u}\leftarrow \matU_{u}-\matD_{u}$ for all $u\in [N]\setminus\rho([\ell])$.

In Game~$\H_1$
the matrix $\matA$ is obtained as a fixed shift of the uniform matrix
$\matA'$, so $\matA$ itself is uniform and independent of
$\{\matU_u\}$ and $\{\matQ_u\}$, exactly as in Game~$\H_0$.
For each $u\in\rho([\ell])$, the pair
$(\matQ_u,\matU_{u})$ in Game~$\H_0$ is uniformly
distributed over $\Z_q^{n\times (m+1)}\times\Z_q^{n\times (m+1)}$ subject to
\[
  \matU_{u}-\matQ_u
  = M_{\rho^{-1}(u),1}(\vecy\mid\veczero\mid\cdots\mid\veczero)
  + \sum_{2\le j\le s_{\max}} M_{\rho^{-1}(u),j}\matB_j,
\]
and exactly the same relation holds in Game~$\H_1$.  Similarly, for
each $u\in [N]\setminus\rho([\ell])$, the triple
$(\matQ_u,\matD_u,\matU_u)$ is uniformly distributed subject to
$\matU_u=\matQ_u+\matD_u$ in both games. Hence the joint distribution of
    $(\matA,\{\matQ_u\},\{\matU_u\})$ is exactly the same in
    Games~$\H_0$ and~$\H_1$.
    Observe that all the changes between Game $\H_0$ and $\H_1$ are merely syntactic. Since all subsequent uses of these matrices in the query phase and challenge phase are identical in both games, we conclude
    that $\H_0 \equiv \H_1$.
\end{proof}

\begin{lemma}\label{lem1}
    Let $q$ be a prime, and let $n,m$ be such that $m>2n\log q +\omega(\log n)$. Let $m\geq m_0(n,q)$ and $\sigma,\chi_1\geq \chi_0(n,q)$, where $m_0$ and $\chi_0$ are polynomials given in Lemma \ref{trapdoor}. We have $\H_1\overset{s}{\approx}\H_2$.
\end{lemma}

\begin{proof}
    This lemma follows from the leftover hash lemma with trapdoor (Lemma \ref{lefttrap}). Suppose there exists an adversary $\mathcal{A}$ that can distinguish between Game~$\H_1$ and Game~$\H_2$ with non-negligible advantage. Then we can construct an adversary $\mathcal{B}$ that can win the game $\EXP_{\mathcal{B}}^{\lhltr,q,\sigma,\chi_1}$ with non-negligible probability, which leads to a contradiction. The algorithm $\mathcal{B}$ proceeds as follows:
    
    \iitem{Setup Phase} The algorithm $\mathcal{B}$ receives $1^\lambda,q,\sigma,\chi_1$ from its challenger. Then it invokes $\mathcal{A}$ and receives an access policy $(\matM,\rho)$, where $\matM\in\{-1,0,1\}^{\ell\times s_{\max}}$ and $\rho: [\ell]\rightarrow [N]$ is an injective function. Then it proceeds as follows:
    \begin{enumerate}
        \item It sends $1^n,1^m,1^{s_{\max}}$ to its challenger and receives a challenge $\{\pp=(\matB,\matW,\matT),\matS\}$. It then parses $\matS:=[\matS_1\mid\cdots\mid\matS_{N}\mid \matS_0\mid \vecs']$ where $\matS_i\in \Z_q^{n\times (m+1)}$ for each $i\in [N]$, $\matS_0\in \Z_q^{n\times m}$, and $\vecs'\in \Z_q^n$.
        \item It sets $\matA'\leftarrow \matS_0$ and $\vecy\leftarrow\vecs'$.
        \item It samples $\{\matB_i\}_{i\in \{2,\ldots,s_{\max}\}}\rand\Z_q^{n\times (m+1)}$.
        \item It sets $\matU_u\leftarrow\matS_u$ for all $u\in [N]$.
        \item It samples $\{\matD_u\}_{u\in [N]}\rand\Z_q^{n\times (m+1)}$.
        \item For all $u\in \rho([\ell])$, it sets $\matQ_u\leftarrow \matU_{u}-(M_{\rho^{-1}(u),1}(\vecy\mid \veczero\mid\cdots\mid\veczero)
        +\sum_{2\leq j\leq s_{\max}}M_{\rho^{-1}(u),j}\matB_j)$. 
        \item For all $u\in [N]\setminus\rho([\ell])$, it sets $\matQ_{u}\leftarrow\matU_{u}-\matD_{u}$.
        \item It sets $\matU\leftarrow[\matU_1\mid\cdots\mid\matU_{N}]$ and computes $\matC\leftarrow\Com(\pp,\matU)$.
        \item It sets $\matA\leftarrow \matA'-\matC$.
        \item It sends the public parameters $$\pk=(\pp,n,m,q,\sigma,\chi,\chi_1,\chi_{s},\matA,\{\matB_{i}\}_{i\in\{2,\ldots,s_{\max}\}},\{\matD_u\}_{u\in [N]},\{\matQ_u\}_{u\in [N]},\vecy)$$ to the adversary $\mathcal{A}$.
    \end{enumerate}
    \iitem{Query Phase} To respond to a secret-key query for an attribute set $U\subseteq\mathbb{U}$, the algorithm $\mathcal{B}$ proceeds as follows:
    \begin{enumerate}
        \item It first computes $\matV=[\matV_1\mid\cdots\mid \matV_{N}]\leftarrow \ver(\pp,1^{(m+1)N})$.
    \item It samples $\hat{\vect}\leftarrow D_{\Z,\chi}^{m}$ and sets $\vect=(1,\hat{\vect}^\top)^\top\in \Z_q^{m+1}$.
    \item It samples $\{\hat{\veck}_u\}_{u\in U}\leftarrow D_{\Z,\chi_{s}}^m$.
    \item For all $u\in U$, $\mathcal{B}$ sends $(\matA\matV_{u}+\matQ_u)\vect-\matB\hat{\veck}_u$ to its challenger in $\EXP_{\mathcal{B}}^{\lhltr,q,\sigma,\chi_1}$, and receives $\tilde{\veck}_u$.
    \item For all $u\in U$, it sets $\veck_{u}\leftarrow\hat{\veck}_u+\tilde{\veck}_{u}$.
    \item Then it provides $\sk\leftarrow (\{\veck_u\}_{u\in U},\vect)$ to the adversary $\mathcal{A}$.
    \end{enumerate}

    \iitem{Challenge Phase} In this phase, $\mathcal{B}$ proceeds exactly as the challenger in Games $\H_1$ and $\H_2$. Precisely,
    \begin{enumerate}
    \item It samples $\msg\rand\{0,1\}$.
    \item It samples $\vecs\rand\Z_q^n$, $\vece_{1}\leftarrow D_{\Z,\chi}^m$, $\vece_{2}\leftarrow D_{\Z,\chi_{s}}^{m}$, and $e_3\leftarrow D_{\Z,\chi_{s}}$.
    \item It computes $\vecc_{1}^\top\leftarrow\vecs^\top\matB+\vece_{1}^\top$,  $\vecc_{2}^\top\leftarrow \vecs^\top(\matA+\matC)+\vece_{2}^\top$, and $c_3\leftarrow \vecs^\top\vecy+\msg\cdot\lceil q/2\rfloor+e_3$.
    \item Finally, it provides the challenge ciphertext  $\ct\leftarrow(\vecc_{1},\vecc_{2},c_3)$ to the adversary $\mathcal{A}$.
    \end{enumerate}

    \iitem{Guess Phase} The algorithm $\mathcal{B}$ outputs whatever the adversary $\mathcal{A}$ outputs.

    It is straightforward that the algorithm $\mathcal{B}$ simulates either the game $\H_1$ or $\H_2$ perfectly depending on whether $\matS\rand \Z_q^{n\times (m+1)(N+1)}$ or $\matS\leftarrow \matB\matR$ for some $\matR\rand\{-1,1\}^{m\times (m+1)(N+1)}$. Therefore, the advantage of $\mathcal{B}$ in $\EXP_{\mathcal{B}}^{\lhltr,q,\sigma,\chi_1}$ is at least the advantage of $\mathcal{A}$ in distinguishing between the game $\H_1$ and $\H_2$. This completes the proof.
\end{proof}

\begin{lemma}\label{lem2}
 Suppose that $\chi_{s}\geq \chi\cdot(\chi_1+\sigma)\cdot\log q\cdot\log N\cdot\poly(m)\cdot\lambda^{\omega(1)}$. Then we have $\H_2\overset{s}{\approx}\H_3$.  
\end{lemma}

\begin{proof}
    The only difference between the two games lies in the responses to the secret-key queries. In Game $\H_2$, the secret-key responses $\{\veck_u\}_{u\in U}$ are computed as follows: the challenger sets $\veck_u\leftarrow\hat{\veck}_u+\tilde{\veck}_u$, where $\hat{\veck}_u\leftarrow D_{\Z,\chi_{s}}^m$ and samples $\tilde{\veck}_u\leftarrow \samplepre(\matB,\td_{\matB},(\matA\matV_{u}+\matQ_u)\vect-\matB\hat{\veck}_u,\chi_1),$ where $\vect=(1,\hat{\vect}^\top)^\top$ with $\hat{\vect}\leftarrow D_{\Z,\chi}^{m}$. By the construction above, we have \begin{align}\label{keyeq} \matB\veck_u=\matB\hat{\veck}_u+\matB\tilde{\veck}_u=(\matA\matV_{u}+\matQ_u)\vect, \quad\forall u\in U. \end{align} Moreover, $\tilde{\veck}_u$ is sampled from a discrete Gaussian
$D_{\Z,\chi_1}^m$ conditioned on equation \ref{keyeq}. Since
$\chi_{s}\geq \sqrt{m}\cdot\chi_1\cdot\lambda^{\omega(1)}$ by our choice
of parameters, the noise-smudging lemma (Lemma~\ref{smudge}) implies
that in Game~$\H_2$ each $\veck_u$ is statistically
indistinguishable from a fresh sample from $D_{\Z,\chi_{s}}^m$ conditioned on \eqref{keyeq}.

In Game $\H_3$, the secret-key responses $\{\veck_u\}_{u\in U}$ are generated as follows. First the challenger samples $\hat{\vect}\leftarrow D_{\Z,\chi}^{m}$ and sets $\vect\leftarrow (1,\hat{\vect}^\top)^\top$. For each $u\in U\cap\rho([\ell])$, the challenger sets $\veck_u\leftarrow \hat{\veck}_u+\tilde{\veck}_u+\matR\matV_{u}\vect+\matZ_u\vect$, where $\hat{\veck}_u\leftarrow D_{\Z,\chi_{s}}^m$ and $$\tilde{\veck}_u\leftarrow \samplepre(\matB,\td_{\matB},-M_{\rho^{-1}(u),1}\vecy-\sum_{2\leq j\leq s_{\max}}M_{\rho^{-1}(u),j}\vecb_j'
        -\matN_u\hat{\vect}-\matB\hat{\veck}_u,\chi_1).$$ By bounding the norm of $\tilde{\veck}_u+\matR\matV_{u}\vect+\matZ_u\vect$, we obtain $$\|\tilde{\veck}_u+\matR\matV_{u}\vect+\matZ_{u}\vect\|\leq \chi\cdot(\chi_1+\sigma)\log q\cdot\log N\cdot\poly(m).$$ Hence, by the noise-smudging lemma (Lemma~\ref{smudge}), the distribution of
    $\veck_u$ is again statistically indistinguishable from that of
    $\hat{\veck}_u$. Moreover, in Game $\H_3$, we have \begin{align*}
            \matB\veck_u&=\matB(\hat{\veck}_u+\tilde{\veck}_u+\matR\matV_{u}\vect+\matZ_u\vect)\\
            &=\matB\hat{\veck}_u-M_{\rho^{-1}(u),1}\vecy-\sum_{2\leq j\leq s_{\max}}M_{\rho^{-1}(u),j}\vecb_j'
        -\matN_u\hat{\vect}-\matB\hat{\veck}_u+\matB\matR\matV_{u}\vect+\matB\matZ_u\vect\\
        &=(-M_{\rho^{-1}(u),1}(\vecy\mid \veczero\mid\cdots\mid\veczero)
        -\sum_{2\leq j\leq s_{\max}}M_{\rho^{-1}(u),j}\matB_j+\matA'\matV_{u}+\matB\matZ_u)\vect\\
         &=(-M_{\rho^{-1}(u),1}(\vecy\mid \veczero\mid\cdots\mid\veczero)
        -\sum_{2\leq j\leq s_{\max}}M_{\rho^{-1}(u),j}\matB_j+\matA\matV_{u}+\matC\matV_u+\matB\matZ_u)\vect\\
        &=(-M_{\rho^{-1}(u),1}(\vecy\mid \veczero\mid\cdots\mid\veczero)
        -\sum_{2\leq j\leq s_{\max}}M_{\rho^{-1}(u),j}\matB_j+\matA\matV_{u}+\matU_u)\vect\\
        &=(\matA\matV_{u}+\matQ_u)\vect,
        \end{align*} which matches the equation \eqref{keyeq}. For each $u\in U\setminus\rho([\ell])$, the challenger sets $\veck_{u}\leftarrow \hat{\veck}_u+\tilde{\veck}_u+\matR\matV_{u}\vect+\matZ_{u}\vect$, where $\hat{\veck}_u\leftarrow D_{\Z,\chi_{s}}^m$ and $$\tilde{\veck}_u\leftarrow\samplepre(\matB,\td_{\matB},-\vecd_u'-\matD_u'\hat{\vect}-\matB\hat{\veck}_u,\chi_1).$$ By bounding the norm of $\matR\matV_{u}\vect+\matZ_{u}\vect$ in the same way as above, we obtain that the distribution of $\veck_u$ is statistically indistinguishable from that of $\hat{\veck}_u$. Furthermore, we have \begin{align*} \matB\veck_u&=\matB(\hat{\veck}_u+\tilde{\veck}_u+\matR\matV_{u}\vect+\matZ_{u}\vect)\\
        &=\matB\hat{\veck}_u-\vecd_u'-\matD_u'\hat{\vect}-\matB\hat{\veck}_u+\matB\matR\matV_u\vect+\matB\matZ_u\vect\\
            &=\matB\hat{\veck}_u-\matD_{u}\vect-\matB\hat{\veck}_u+\matA\matV_{u}\vect+\matU_{u}\vect\\
            &=(\matA\matV_{u}+\matQ_{u})\vect,
        \end{align*} which matches the key equation \eqref{keyeq} as well.
        In summary, for every attribute $u$ and every secret-key query, the
distribution of $\veck_u$ in Game~$\H_3$ is statistically
indistinguishable from a sample from $D_{\Z,\chi_s}^m$ (and hence from that of
$\veck_u$ in Game~$\H_2$), and the key equation
$\matB\veck_u=(\matA\matV_u+\matQ_u)\vect$ holds in both games.

Since the adversary makes only polynomially many secret-key queries,
a union bound shows that the joint distribution of all secret-key
answers in $\H_2$ and $\H_3$ is statistically indistinguishable.
All other parts of the two games are identical. Therefore,
    $\H_2\overset{s}{\approx}\H_3$.
\end{proof}

\begin{lemma}\label{lem3}
    Suppose that $\chi>\sqrt{m}\chi_1\cdot\lambda^{\omega(1)}$. We have $\H_3\overset{s}{\approx}\H_4$.
\end{lemma}

\begin{proof}
    This lemma follows directly from the noise-smudging lemma (Lemma \ref{smudge}). The only difference between Games~$\H_3$ and~$\H_4$ lies in the way that the challenger generates the vector component $\vect$ of each secret-key queries. In Game~$\H_3$, for each secret-key query, the challenger samples $\hat{\vect}\leftarrow D_{\Z,\chi}^m$ and sets $\vect=(1,\hat{\vect}^\top)^\top$. In Game~$\H_4$, for each secret-key query, the challenger samples $\hat{\vect}\leftarrow D_{\Z,\chi}^m, \tilde{\vect}\leftarrow D_{\Z,\chi_1}^m$, and sets $\vect\leftarrow (1,\hat{\vect}^\top+\tilde{\vect}^\top)^\top$ instead. Since $\chi>\sqrt{m}\chi_1\cdot\lambda^{\omega(1)}$, the noise-smudging lemma
    (Lemma~\ref{smudge}) implies that in Game~$\H_4$ we have
    $\hat{\vect}+\tilde{\vect}\overset{s}{\approx}\hat{\vect}$.
    Therefore, the distributions of the secret keys in Games~$\H_3$ and~$\H_4$
    are statistically indistinguishable, and hence
    $\H_3\overset{s}{\approx}\H_4$.
\end{proof}

\begin{lemma}\label{lem4}
    Let $q$ be a prime, and let $n,m$ be such that $m>2n\log q +\omega(\log n)$. Let $m\geq m_0(n,q)$ and $\sigma,\chi_1\geq \chi_0(n,q)$, where $m_0$ and $\chi_0$ are polynomials given in Lemma \ref{trapdoor}. We have $\H_4\overset{s}{\approx}\H_5$.
\end{lemma}

\begin{proof}
    This lemma follows from the leftover hash lemma with trapdoor (Lemma \ref{lefttrap}). Suppose there exists an adversary $\mathcal{A}$ that can distinguish between Game $\H_4$ and Game $\H_5$ with non-negligible advantage, we can construct an adversary $\mathcal{B}$ that can win the game $\EXP_{\mathcal{B}}^{\lhltr,q,\sigma,\chi_1}$ with non-negligible advantage, which leads to a contradiction. The algorithm $\mathcal{B}$ proceeds as follows:
    
    \iitem{Setup Phase} The algorithm $\mathcal{B}$ receives $1^\lambda,q,\sigma,\chi_1$ from its challenger. Then it invokes $\mathcal{A}$ and receives an access policy $(\matM,\rho)$, where $\matM\in\{-1,0,1\}^{\ell\times s_{\max}}$ and $\rho: [\ell]\rightarrow [N]$ is an injective function. Then it proceeds as follows:
    \begin{enumerate}
        \item It sends $1^n,1^m,1^{s_{\max}}$ to its challenger and receives a challenge $\{\pp=(\matB,\matW,\matT),\matS\}$. It then parses $\matS:=[\matS_2'\mid\cdots\mid\matS_{s_{\max}}'\mid \matS_{1}''\mid\cdots\mid\matS_{N}'']$ where $\matS_i'\in \Z_q^{n\times m}$ for each $i\in \{2,\ldots,s_{\max}\}$ and $\matS_{i}''\in\Z_q^{n\times m}$ for each $i\in [N]$.
        \item It samples $\matR\rand\{-1,1\}^{m\times m},\{\matR_u\}_{u\in [N]}\rand\{-1,1\}^{m\times (m+1)}$, and $
        \vecr\rand\{-1,1\}^{m}$.
        \item It sets $\matA'\leftarrow \matB\matR$ and $\vecy\leftarrow\matB\vecr$.
        \item It computes $\matC_i\leftarrow \Com(\pp,\vecu_{i-1}\otimes \matG)$ for all $i\in\{2,\ldots,s_{\max}\}$, and $\matC_u'\leftarrow \Com(\pp,\vecu_{u+s_{\max}-1}\otimes \matG)$ for all $u\in [N]$.
   
        \item It samples $\{\vecb_i'\}_{i\in \{2,\ldots,s_{\max}\}}\rand\Z_q^{n}$.
        \item For each $i\in \{2,\ldots,s_{\max}\}$, it sets $\matB_i'\leftarrow \matS_i'+\matC_i$ and sets $\matB_i\leftarrow[\vecb_i'\mid\matB_i']$.
        \item It computes $\matN_u\leftarrow\sum_{2\leq j\leq s_{\max}}M_{\rho^{-1}(u),j}\matB'_j$ for each $u\in \rho([\ell])$.
        \item It sets $\matU_u\leftarrow\matB\matR_u$ for all $u\in [N]$.
        \item It samples $\{\vecd_u'\}_{u\in [N]}\rand\Z_q^{n}$, computes $\matD_u'\leftarrow\matS_u''+\matC_u'$ and sets $\matD_u:=[\vecd_u'\mid\matD_u']$ for all $u\in [N]$.
        \item For all $u\in \rho([\ell])$, it sets $\matQ_u\leftarrow \matU_{u}-(M_{\rho^{-1}(u),1}(\vecy\mid \veczero\mid\cdots\mid\veczero)
        +\sum_{2\leq j\leq s_{\max}}M_{\rho^{-1}(u),j}\matB_j)$. 
        \item For all $u\in [N]\setminus\rho([\ell])$, it sets $\matQ_{u}\leftarrow\matU_{u}-\matD_{u}$.
        \item It sets $\matU\leftarrow[\matU_1\mid\cdots\mid\matU_{N}]$ and computes $\matC\leftarrow\Com(\pp,\matU)$.
        \item It sets $\matA\leftarrow \matA'-\matC$.
        \item It sends the public parameters $\begin{aligned}&\pk=(\pp,n,m,q,\sigma,\chi,\chi_1,\chi_{s},\\  &\matA,\{\matB_{i}\}_{i\in\{2,\ldots,s_{\max}\}},\{\matD_u\}_{u\in [N]},\{\matQ_u\}_{u\in [N]},\vecy)\end{aligned}$ to the adversary $\mathcal{A}$.
    \end{enumerate}
    \iitem{Query Phase} To respond to a secret-key query for an attribute set $U\subseteq\mathbb{U}$, the algorithm $\mathcal{B}$ proceeds as follows:
    \begin{enumerate}
        \item It first computes $\matV=[\matV_1\mid\cdots\mid \matV_{N}]\leftarrow \ver(\pp,1^{(m+1)N})$ and $\matZ=[\matZ_1\mid\cdots\mid\matZ_N]\leftarrow \Open(\pp,\matU)$.
    \item It samples $\hat{\vect}\leftarrow D_{\Z,\chi}^{m}$, $\tilde{\vect}\leftarrow D_{\Z,\chi_1}^m$ and sets $\vect=(1,\hat{\vect}^\top+\tilde{\vect}^\top)^\top\in \Z_q^{m+1}$.
    \item It samples $\{\hat{\veck}_u\}_{u\in U}\leftarrow D_{\Z,\chi_{s}}^m$.
    \item For all $u\in U\cap \rho([\ell])$, $\mathcal{B}$ sends $-M_{\rho^{-1}(u),1}\vecy-\sum_{2\leq j\leq s_{\max}}M_{\rho^{-1}(u),j}\vecb_j'
        -\matN_u\hat{\vect}-\matN_u\tilde{\vect}-\matB\hat{\veck}_u$ to its challenger in $\EXP_{\mathcal{B}}^{\lhltr,q,\sigma,\chi_1}$, and receives $\tilde{\veck}_u$.
        \item For all $u\in U\setminus \rho([\ell])$, $\mathcal{B}$ sends $-\vecd_u'-\matD_{u}'\hat{\vect}-\matD_u'\tilde{\vect}-\matB\hat{\veck}_u$ to its challenger in $\EXP_{\mathcal{B}}^{\lhltr,q,\sigma,\chi_1}$, and receives $\tilde{\veck}_u$.
    \item For all $u\in U$, it sets $\veck_{u}\leftarrow\hat{\veck}_u+\tilde{\veck}_u+\matR\matV_{u}\vect+\matZ_{u}\vect$.
    \item Then it provides $\sk\leftarrow (\{\veck_u\}_{u\in U},\vect)$ to the adversary $\mathcal{A}$.
    \end{enumerate}

    \iitem{Challenge Phase} In this phase, $\mathcal{B}$ proceeds exactly as the challenger in Game $\H_4$ and Game $\H_5$ does. Precisely,
    \begin{enumerate}
    \item It samples $\msg\rand\{0,1\}$, $\vecs\rand\Z_q^n$, $\vece_{1}\leftarrow D_{\Z,\chi}^m$, $\vece_{2}\leftarrow D_{\Z,\chi_{s}}^{m}$, and $e_3\leftarrow D_{\Z,\chi_{s}}$.
    \item It computes $\vecc_{1}^\top\leftarrow\vecs^\top\matB+\vece_{1}^\top$,  $\vecc_{2}^\top\leftarrow \vecs^\top(\matA+\matC)+\vece_{2}^\top$, and $c_3\leftarrow \vecs^\top\vecy+\msg\cdot\lceil q/2\rfloor+e_3$.
    \item Finally, it provides the challenge ciphertext  $\ct\leftarrow(\vecc_{1},\vecc_{2},c_3)$ to the adversary $\mathcal{A}$.
    \end{enumerate}

    \iitem{Guess Phase} The algorithm $\mathcal{B}$ outputs whatever the adversary $\mathcal{A}$ outputs.

    It is straightforward that the algorithm $\mathcal{B}$ simulates either the game $\H_4$ or $\H_5$ perfectly depending on whether $\matS\rand \Z_q^{n\times m(N+ s_{\max}-1)}$ or $\matS\leftarrow \matB\matR$ for some $\matR\rand\{-1,1\}^{m\times  m(N+ s_{\max}-1)}$. Therefore, the advantage of $\mathcal{B}$ in $\EXP_{\mathcal{B}}^{\lhltr,q,\sigma,\chi_1}$ is no less than the advantage of $\mathcal{A}$ in distinguishing the game $\H_4$ and $\H_5$. This completes the proof.
\end{proof}

\begin{lemma}\label{lem5}
     Suppose that $\chi_{s}\geq \chi\cdot(\chi_1+\sigma)\log q\cdot\log N\cdot\poly(m)$ and $\chi_1 > \poly(m,N,\sigma,\log q)\cdot \omega(\sqrt{\log n})$. We have $\H_5\overset{s}{\approx}\H_6$.
\end{lemma}

\begin{proof}
The only difference between Games~$\H_5$ and~$\H_6$ lies in the way the challenger
responds to secret-key queries. In Game $\H_5$, the challenger first samples $\hat{\vect}\leftarrow D_{\Z,\chi}^m$ and $\tilde{\vect}\leftarrow D_{\Z,\chi_1}^m$, and sets $\vect\leftarrow (1\mid \hat{\vect}^\top+\tilde{\vect}^\top)^\top$. For each $u\in U\cap\rho([\ell])$, it samples $$\tilde{\veck}_{u}\leftarrow \samplepre(\matB,\td_{\matB},-M_{\rho^{-1}(u),1}\vecy-\sum_{2\leq j\leq s_{\max}}M_{\rho^{-1}(u),j}\vecb_j'
        -\matN_u\hat{\vect}-\matN_u\tilde{\vect}-\matB\hat{\veck}_u,\chi_1).$$ 
For each $u\in U\setminus\rho([\ell])$, it samples $$\tilde{\veck}_{u}\leftarrow \samplepre(\matB,\td_{\matB},-\vecd_u'-\matD_{u}'\hat{\vect}-\matD_u'\tilde{\vect}-\matB\hat{\veck}_u,\chi_1).$$ For each $u\in U$, it sets $\veck_u\leftarrow \hat{\veck}_u+\tilde{\veck}_u+\matR\matV_u\vect+\matZ_u\vect$. In Game $\H_6$, the challenger first samples $\hat{\vect}\leftarrow D_{\Z,\chi}^{m}$ and samples $$\begin{aligned}\left(\begin{matrix}
        \tilde{\veck}_{u_1}\\
        \vdots\\
        \tilde{\veck}_{u_g}\\
        \tilde{\veck}_{u_1'}\\
        \vdots\\
        \tilde{\veck}_{u_h'}\\
        \tilde{\vect}
    \end{matrix}\right)=\samplepre(\left[\begin{matrix}
        \matI_{g+h}\otimes \matB\mid\matW'
    \end{matrix}\right],\td,\left[\begin{matrix}
        -\vecn_{u_1}-\matN_{u_1}\hat{\vect}-\matB\hat{\veck}_{u_1}\\
        \vdots\\
        -\vecn_{u_g}
        -\matN_{u_g}\hat{\vect}-\matB\hat{\veck}_{u_g}\\
        -\vecd_{u_1'}'-\matD_{u_1'}'\hat{\vect}-\matB\hat{\veck}_{u_1'}\\
        \vdots\\
        -\vecd_{u_h'}'-\matD_{u_h'}'\hat{\vect}-\matB\hat{\veck}_{u_h'}
    \end{matrix}\right],\chi_1)\end{aligned}$$ and sets $\vect\leftarrow (1, \hat{\vect}^\top+\tilde{\vect}^\top)^\top$. 

    In Game $\H_6$, for each $u_i\in U\cap\rho([\ell])$, from the preimage sampling process, we obtain that $$\matB\tilde{\veck}_{u_i}+\matN_{u_i}\tilde{\vect}=-\vecn_{u_i}-\matN_{u_i}\hat{\vect}-\matB\hat{\veck}_{u_i},$$ which matches the secret-key equation in Game $\H_5$. For each $u_j'\in U\setminus \rho([\ell])$, we have $$\matB\tilde{\veck}_{u_j'}+\matD_{u_j'}'\tilde{\vect}=-\vecd_{u_j'}'-\matD_{u_j'}'\hat{\vect}-\matB\hat{\veck}_{u_j'},$$ which also matches the corresponding secret-key equation in Game $\H_5$.

    By the noise-smudging lemma, in either game, since $\chi>\sqrt{m}\chi_1\cdot\lambda^{\omega(1)}$, the distribution of $\hat{\vect}+\tilde{\vect}$ is statistically indistinguishable from a vector sampled from $D_{\Z,\chi}^m$. By bounding the norm of $\tilde{\veck}_u+\matR\matV_{u}\vect+\matZ_u\vect$, we have $$\|\tilde{\veck}_{u}+\matR\matV_u\vect+\matZ_{u}\vect\|\leq \chi\cdot(\chi_1+\sigma)\log q\cdot\log N\cdot\poly(m).$$ According to the selection of the parameters, we know each $\veck_u$ in either game is statistically indistinguishable from a sample from $D_{\Z,\chi_s}^m$ conditioned on $\matB\veck_u=(\matA\matV_u+\matQ_u)\vect$.
    
    It suffices to clarify that the challenger can construct the trapdoor for the matrix $[\matI_{g+h}\otimes \matB\mid \matW']$ efficiently. Denote $s=N+s_{\max}-1$. From the correctness of the commitment scheme, we have \begin{align*}
       & \matC_i\matV_{ns}=\vecu_{i-1}\otimes \matG-\matB\matZ_i,\quad \forall i\in \{2,\ldots,s_{\max}\},\\
        &\matC_i'\matV_{ns}=\vecu_{i+s_{\max}-1}\otimes \matG-\matB\matZ_i',\quad \forall i\in [N].
    \end{align*}
    If we write $\matC'=[\matC_2^\top\mid\cdots\mid\matC_{s_{\max}}^\top\mid\matC_1'^\top\mid\cdots\mid\matC_{N}'^\top]^\top$ and $\matZ'=[\matZ_2^\top\mid\cdots\mid\matZ_{s_{\max}}^\top\mid\matZ_1'^\top\mid\cdots\mid\matZ_{N}'^\top]^\top$, we have \begin{align}\label{succincttrap}
        [\matI_s\otimes \matB\mid\matC']\left[\begin{matrix}
           \matZ'\\
           \matV_{ns}
\end{matrix}\right]=\matI_{s}\otimes\matG
    \end{align}
    Then by the norm bound of the matrices $\matZ'$ and $\matV_{ns}$, the matrix $\left[\begin{matrix}
           \matZ'\\
           \matV_{ns}
\end{matrix}\right]$ serves as a trapdoor for $[\matI_s\otimes \matB\mid\matC']$. Denote the set $I=\rho^{-1}(U)\cap [\ell]$ and $\matM_{I}\in\{-1,0,1\}^{|I|\times (s_{\max}-1)}$ the submatrix of $\matM$ containing all the rows with indices in $I$ and removing the first column. Let $\vece_i\in\{0,1\}^N$ be the unit vector whose $i$-th entry is 1. 

Let $\matM_{U}=\left[\begin{array}{cc}
    \matM_I &  \\
     & \begin{array}{c}
         \vece_{u_1'}^\top\\
      \vece_{u_2'}^\top\\
     \vdots\\
     \vece_{u_h'}^\top
     \end{array}
\end{array}\right]\in\{-1,0,1\}^{(g+h)\times s}$. Now we prove the matrix $\matM_U$ has full row rank. By the restriction on secret-key queries submitted by the adversary, the rows
of the access matrix in $\rho^{-1}(U)$ are required to be linearly independent
and unauthorized. This means that no non-zero linear combination over $\Z_q$
of the vectors obtained by removing the first entries of the rows of $\matM$
with indices in $\rho^{-1}(U)$ can span the zero vector in dimension
$s_{\max}-1$, i.e., $\matM_I$ has full row rank. Since $u_1',\ldots, u_h'$ are different from each other, it also follows that $\vece_{u_1'},\ldots,\vece_{u_h'}$ are linearly independent. Therefore, we can conclude $\matM_U$ has full row rank.

Let $\matR'=[\matR_2'^\top\mid\cdots\mid\matR_{s_{\max}}'^\top\mid\matR_1''^\top\mid\cdots\mid\matR_{N}''^\top]^\top$. We prove $$\matW'=(\matM_U\otimes\matI_{n})((\matI_s\otimes \matB)\matR'+\matC').$$ It follows from that \begin{align*}
    (\matM_U\otimes\matI_{n})((\matI_s\otimes \matB)\matR'+\matC')&=(\matM_U\otimes\matI_n)\left[\begin{array}{c}
        \matB\matR_2'+\matC_2\\
        \vdots\\
    \matB\matR_{s_{\max}}'+\matC_{s_{\max}}\\
        \matB\matR_1''+\matC_1'\\
        \vdots\\
        \matB\matR_{N}''+\matC_N'
    \end{array}\right]=(\matM_U\otimes\matI_n)\left[\begin{array}{c}
        \matB_2'\\
        \vdots\\
    \matB_{s_{\max}}'\\
        \matD_1'\\
        \vdots\\
        \matD_{N}'
    \end{array}\right]\\&=\left[\begin{array}{c}
        \sum_{2\leq j\leq s_{\max}}M_{\rho^{-1}(u_1),j}\matB_j'\\
        \vdots\\
    \sum_{2\leq j\leq s_{\max}}M_{\rho^{-1}(u_g),j}\matB'_j\\
        \matD_{u_1'}'\\
        \vdots\\
        \matD_{u_h'}'
    \end{array}\right]=\left[\begin{array}{c}
       \matN_{u_1}\\
        \vdots\\
    \matN_{u_g}\\
        \matD_{u_1'}'\\
        \vdots\\
        \matD_{u_h'}'
    \end{array}\right]=\matW'
\end{align*}

We claim that $\td:=\left[\begin{array}{c}
     (\matM_U\otimes \matI_m)(\matZ'-\matR'\matV_{ns})\\
     \matV_{ns}
\end{array}\right]$ can serve as a trapdoor for the matrix $[\matI_{g+h}\otimes \matB\mid\matW']$. 
\begin{align*}
    &[\matI_{g+h}\otimes \matB\mid\matW']\left[\begin{array}{c}
     (\matM_U\otimes \matI_m)(\matZ'-\matR'\matV_{ns})\\
     \matV_{ns}
\end{array}\right]\\
&=(\matI_{g+h}\otimes \matB) (\matM_U\otimes \matI_m)(\matZ'-\matR'\matV_{ns})+(\matM_U\otimes\matI_{n})((\matI_s\otimes \matB)\matR'+\matC')\matV_{ns}\\
& =(\matM_U\otimes \matI_n)(\matI_{s}\otimes \matB)(\matZ'-\matR'\matV_{ns})+(\matM_U\otimes\matI_{n})((\matI_s\otimes \matB)\matR'+\matC')\matV_{ns}\\
&=(\matM_U\otimes \matI_n)[\matI_s\otimes \matB\mid\matC']\left[\begin{array}{c}
     \matZ'  \\
     \matV_{ns}
\end{array}\right]=(\matM_U\otimes \matI_n)(\matI_s\otimes \matG)=\matM_U\otimes \matG.
\end{align*}
Since $\matM_U$ has full row rank, we know $\td$ serves as a trapdoor for $[\matI_{g+h}\otimes \matB\mid\matW']$ by Lemma \ref{tensortrap}.
\end{proof}

\begin{lemma}\label{lem6}
    Suppose that $\chi_{s}>m^2\chi\cdot\lambda^{\omega(1)}$. We have $\H_6\overset{s}{\approx}\H_7$.
\end{lemma}

\begin{proof}
    The only difference between Games~$\H_6$ and~$\H_7$ lies in the way the
    ciphertext components $\vecc_2$ and $c_3$ are sampled. In Game $\H_6$, the challenger first samples $\vecs\rand\Z_q^n,\vece_1\leftarrow D_{\Z,\chi}^m,\vece_2\leftarrow D_{\Z,\chi_{s}}^m, e_3\leftarrow D_{\Z,\chi_{s}}$, and computes $\vecc_1^\top\leftarrow\vecs^\top\matB+\vece_1^\top$, $\vecc_2^\top\leftarrow\vecs^\top(\matA+\matC)+\vece_2^\top$ and $c_3\leftarrow\vecs^\top\vecy+\msg\cdot\lceil q/2\rfloor+e_3$. In Game $\H_7$, the challenger generates $\vecc_1$ exactly the same as in Game $\H_6$, samples $\vece_2\leftarrow D_{\Z,\chi_{s}}^m, e_3\leftarrow D_{\Z,\chi_{s}}$, and computes $\vecc_2^\top\leftarrow\vecc_1^\top\matR+\vece_2^\top$ and $c_3\leftarrow\vecc_1^\top\vecr+\msg\cdot\lceil q/2\rfloor+e_3$. In Game $\H_7$, using the relations $\matA'=\matB\matR$ and $\matA=\matA'-\matC$, we have that $$\vecc_2^\top\leftarrow\vecc_1^\top\matR+\vece_2^\top=\vecs^\top\matB\matR+\vece_1^\top\matR+\vece_2^\top=\vecs^\top\matA'+\vece_1^\top\matR+\vece_2^\top=\vecs^\top(\matA+\matC)+\vece_1^\top\matR+\vece_2^\top.$$
    By bounding the error term, we have $\|\vece_1^\top\matR\|\leq m^{3/2}\cdot\chi$ with overwhelming probabiliry. By the noise-smudging lemma, we obtain that $\vece_1^\top\matR+\vece_2^\top$ is statistically indistinguishable from $\vece_2^\top$. Therefore, the ciphertexts $\vecc_2$ generated in two games are statistically indistinguishable. An analogous argument applies to $c_3$. Since $(\vecc_2,c_3)$ are the only ciphertext components that differ
    between the two games, the full ciphertext distributions in
    Games~$\H_6$ and~$\H_7$ are statistically indistinguishable. Therefore,
    $\H_6\overset{s}{\approx}\H_7$.
\end{proof}

\begin{lemma}\label{lem7}
    Suppose that $(2m^2,\sigma)$-succinct $\LWE$ assumption holds. We have $\H_7\overset{c}{\approx}\H_8$.
\end{lemma}
\begin{proof}
 Suppose that there exists an \ppt adversary $\mathcal{A}$ that can distinguish between Game $\H_7$ and $\H_8$ with non-negligible advantage. We construct an \ppt adversary $\mathcal{B}$ that can break the $(2m^2,\sigma)$-succinct $\LWE$ assumption with non-negligible advantage. The adversary $\mathcal{B}$ proceeds as follows:
 
\iitem{Setup Phase} The algorithm $\mathcal{B}$ receives from its challenger a succinct $\LWE$ challenge $(\matB,\matW,\matT,\vecc)$, where $\matB\in\Z_q^{n\times m},\matW\in\Z_q^{2m^2n\times m},\matT\in\Z_q^{(2m^2+1)m\times 2m^3}$, and $\vecc$ is sampled as $\vecc^\top\leftarrow \vecs^\top\matB+\vece^\top$ for some $\vecs\in\Z_q^n,\vece\leftarrow D_{\Z,\chi}^m$ or $\vecc\rand\Z_q^m$. It sets $\pp := (\matB,\matW,\matT)$.
\begin{enumerate}
    \item It samples $\vecr\rand\{-1,1\}^{m},\{\matR_u\}_{u\in [N]}\rand\{-1,1\}^{m\times (m+1)}$, and $\matR,\{\matR_i'\}_{i\in \{2,\ldots,s_{\max}\}},\{\matR_u''\}_{u\in [N]}\rand\{-1,1\}^{m\times m}.$
    \item It sets $\matA'\leftarrow\matB\matR$ and $\vecy\leftarrow\matB\vecr$.
    \item For all $i\in\{2,\ldots,s_{\max}\}$, it computes $\matC_i\leftarrow\Com(\pp,\vecu_{i-1}\otimes \matG)$. For all $u\in [N]$, it computes $\matC_u'\leftarrow \Com(\pp,\vecu_{u+s_{\max}-1}\otimes\matG)$.
    \item It samples $\{\vecb_i'\}_{2,\ldots,s_{\max}}\rand \Z_q^n$ and $\{\vecd_u'\}_{u\in [N]}\rand\Z_q^n$.
    \item For all $i\in \{2,\ldots,s_{\max}\}$, it sets $\matB_i'\leftarrow \matB\matR_i'+\matC_i$ and $\matB_i\leftarrow[\vecb_i'\mid\matB_i']$.
    \item  For all $u\in [N]$, it sets $\matD_u'\leftarrow \matB\matR_u''+\matC_u'$ and $\matD_u\leftarrow [\vecd_u'\mid\matD_u']$.
    \item For all $u\in \rho([\ell])$, it sets $\matN_u\leftarrow\sum_{2\leq j\leq s_{\max}}M_{\rho^{-1}(u),j}\matB'_j$ and $\vecn_u\leftarrow M_{\rho^{-1}(u),1}\vecy+\sum_{2\leq j\leq s_{\max}}M_{\rho^{-1}(u),j}\vecb_j'$.
    
    \item For all $u\in [N]$ set $\matU_{u}\leftarrow\matB\matR_u$, and set $\matU\leftarrow[\matU_1\mid\cdots\mid\matU_{N}]$.
    \item For all $u\in \rho([\ell])$, it sets $\matQ_{u}\leftarrow \matU_{u}-(M_{\rho^{-1}(u),1}(\vecy\mid \veczero\mid\cdots\mid\veczero)+\sum_{2\leq j\leq s_{\max}}M_{\rho^{-1}(u),j}\matB_j)$.
    \item For all $u\in [N]\setminus\rho([\ell])$, it sets $\matQ_u\leftarrow \matU_{u}-\matD_{u}$. 
\item It computes $\matC\leftarrow \Com(\pp,\matU)$ and sets $\matA\leftarrow \matA'-\matC\in \Z_q^{n\times m}$.
   \item It sets the public key $$\begin{aligned}\pk=(\pp,n,m,q,\sigma,\chi,\chi_1,\chi_{s},\matA,\{\matB_{i}\}_{i\in\{2,\ldots,s_{\max}\}},\{\matD_u\}_{u\in [N]},\{\matQ_u\}_{u\in [N]},\vecy)\end{aligned}$$ and sends it to $\mathcal{A}$.
\end{enumerate}

\iitem{Query Phase} To respond to a secret-key query for an attribute set
$U\subseteq\mathbb{U}=[N]$, the adversary $\mathcal{B}$ proceeds
exactly as in Games~$\H_7$ and~$\H_8$:
    \begin{enumerate}
    \item It first computes $\matV=[\matV_1\mid\cdots\mid \matV_{N}]\leftarrow \ver(\pp,1^{(m+1)N})$ and $\matZ=[\matZ_1\mid\cdots\mid\matZ_N]\leftarrow\Open(\pp,\matU)$.
    \item It samples $\hat{\vect}\leftarrow D_{\Z,\chi}^m$.
    \item Denote $U\cap\rho([\ell])=\{u_1,\ldots,u_g\}$ and $U\setminus \rho([\ell])=\{u_1',\ldots,u_h'\}$ and set $\matW'\leftarrow [\matN_{u_1}^\top\mid \cdots\mid\matN_{u_g}^\top\mid\matD_{u_1'}'^\top\mid\cdots\mid\matD_{u_h'}'^\top]^\top$.
    \item It samples $\{\hat{\veck}_u\}_{u\in U}\leftarrow D_{\Z,\chi_{s}}^m$.
     \item It computes $\begin{aligned}\left(\begin{matrix}
        \tilde{\veck}_{u_1}\\
        \vdots\\
        \tilde{\veck}_{u_g}\\
        \tilde{\veck}_{u_1'}\\
        \vdots\\
        \tilde{\veck}_{u_h'}\\
        \tilde{\vect}
    \end{matrix}\right)=\samplepre(\left[\begin{matrix}
        \matI_{g+h}\otimes \matB\mid\matW'
    \end{matrix}\right],\td,\left[\begin{matrix}
        -\vecn_{u_1}-\matN_{u_1}\hat{\vect}-\matB\hat{\veck}_{u_1}\\
        \vdots\\
        -\vecn_{u_g}
        -\matN_{u_g}\hat{\vect}-\matB\hat{\veck}_{u_g}\\
        -\vecd_{u_1'}'-\matD_{u_1'}'\hat{\vect}-\matB\hat{\veck}_{u_1'}\\
        \vdots\\
        -\vecd_{u_h'}'-\matD_{u_h'}'\hat{\vect}-\matB\hat{\veck}_{u_h'}
    \end{matrix}\right],\chi_1)\end{aligned}$ using the trapdoor $\td$ for $[\matI_{g+h}\otimes\matB\mid\matW']$.
The construction of $\td$ is given in the proof of Lemma~\ref{lem5}.
    \item It sets $\vect=(1,\hat{\vect}^\top+\tilde{\vect}^\top)^\top\in \Z_q^{m+1}$.
    
    \item For all $u\in U$, it computes  $\veck_{u}\leftarrow\hat{\veck}_u+\tilde{\veck}_{u}+\matR\matV_{u}\vect+\matZ_u\vect$.
     
    \item Then it provides the secret-key response $\sk\leftarrow (\{\veck_u\}_{u\in U},\vect)$ to the adversary $\mathcal{A}$.
    \end{enumerate}

    \iitem{Challenge Phase} 
    \begin{enumerate}
    \item It sets $\vecc_1\leftarrow\vecc$.
    \item It samples $\msg\rand\{0,1\}$, $\vece_{2}\leftarrow D_{\Z,\chi_{s}}^{m}$, and $e_3\leftarrow D_{\Z,\chi_{s}}$.
    \item It computes $\vecc_{2}^\top\leftarrow \vecc_1^\top\matR+\vece_2^\top$, and $c_3\leftarrow \vecc_1^\top\vecr+\msg\cdot\lceil q/2\rfloor+e_3$.
    \item Finally, it provides the challenge ciphertext  $\ct\leftarrow(\vecc_{1},\vecc_{2},c_3)$ to the adversary $\mathcal{A}$.
    \end{enumerate}

    \iitem{Guess Phase} The algorithm $\mathcal{B}$ outputs whatever the adversary $\mathcal{A}$ outputs.

  It is straightforward that the algorithm $\mathcal{B}$ simulates either
Game~$\H_7$ or Game~$\H_8$ perfectly, depending on whether
$\vecc^\top=\vecs^\top\matB+\vece_1^\top$ for some
$\vecs\in\Z_q^n$ and $\vece_1\leftarrow D_{\Z,\chi}^m$, or
$\vecc\rand\Z_q^m$. Therefore, the advantage of $\mathcal{B}$ in breaking the succinct $\LWE$ assumption is no less than the advantage of $\mathcal{A}$ in distinguishing Games~$\H_7$ and $\H_8$, which leads to a contradiction. Hence, we complete the proof.
\end{proof}

\begin{lemma}\label{lem8}
    Suppose that $m>2n\log q+\omega(\log n)$. We have $\H_8\overset{s}{\approx}\H_9$.
\end{lemma}

\begin{proof}
    This lemma follows from the leftover hash lemma (Lemma \ref{left}). Suppose there exists an adversary $\mathcal{A}$ that can distinguish between Game $\H_8$ and Game $\H_9$ with non-negligible advantage, we can construct an adversary $\mathcal{B}$ that can break leftover hash lemma with non-negligible advantage, which leads to a contradiction. The algorithm $\mathcal{B}$ proceeds as follows:

    \iitem{Setup Phase} The algorithm $\mathcal{B}$ first invokes $\mathcal{A}$ and receives an access policy $(\matM,\rho)$, where $\matM\in\{-1,0,1\}^{\ell\times s_{\max}}$ and $\rho: [\ell]\rightarrow \mathbb{U}$ is an injective function. Then it proceeds as follows:
    \begin{enumerate}
        \item It receives a challenge $(\vecb_0^\top,s_0)$, where $\vecb_0\in\Z_q^m, s_0\in \Z_q$.
        \item It samples $(\matB,\td_{\matB})\leftarrow\trapgen(1^n,1^m,q)$ and $\matW\rand\Z_q^{2m^2n\times m}$.
        \item Then it samples $\matT\leftarrow \samplepre([\matI_{2m^2}\otimes \matB\mid \matW],\begin{bmatrix} \matI_{2m^{2}}\otimes \td_{\matB} \\ \mathbf{0} \end{bmatrix}, \matI_{2m^2}\otimes \matG,\sigma)$.
\item It sets $\pp:=(\matB,\matW,\matT)$.        
         \item It samples $\vecr\rand\{-1,1\}^{m},\{\matR_u\}_{u\in [N]}\rand\{-1,1\}^{m\times (m+1)}$, and $\matR,\{\matR_i'\}_{i\in \{2,\ldots,s_{\max}\}},\{\matR_u''\}_{u\in [N]}\rand\{-1,1\}^{m\times m}.$
    \item It sets $\matA'\leftarrow\matB\matR$ and $\vecy\leftarrow\matB\vecr$.
    \item For all $i\in\{2,\ldots,s_{\max}\}$, it computes $\matC_i\leftarrow\Com(\pp,\vecu_{i-1}\otimes \matG)$. For all $u\in [N]$, it computes $\matC_u'\leftarrow \Com(\pp,\vecu_{u+s_{\max}-1}\otimes\matG)$.
    \item It samples $\{\vecb_i'\}_{i=2}^{s_{\max}}\rand\Z_q^n$ and
$\{\vecd_u'\}_{u\in [N]}\rand\Z_q^n$.
    \item For all $i\in \{2,\ldots,s_{\max}\}$, it sets $\matB_i'\leftarrow \matB\matR_i'+\matC_i$ and $\matB_i\leftarrow[\vecb_i'\mid\matB_i']$.
    \item  For all $u\in [N]$, it sets $\matD_u'\leftarrow \matB\matR_u''+\matC_u'$ and $\matD_u\leftarrow [\vecd_u'\mid\matD_u']$.
    \item For all $u\in \rho([\ell])$, it sets $\matN_u\leftarrow\sum_{2\leq j\leq s_{\max}}M_{\rho^{-1}(u),j}\matB'_j$ 
    and $\vecn_u\leftarrow M_{\rho^{-1}(u),1}\vecy+\sum_{2\leq j\leq s_{\max}}M_{\rho^{-1}(u),j}\vecb_j'$.
    
    \item For all $u\in [N]:\matU_{u}\leftarrow\matB\matR_u$ and sets $\matU\leftarrow[\matU_1\mid\cdots\mid\matU_{N}]$.
    \item For all $u\in \rho([\ell])$, it sets $\matQ_{u}\leftarrow \matU_{u}-(M_{\rho^{-1}(u),1}(\vecy\mid \veczero\mid\cdots\mid\veczero)+\sum_{2\leq j\leq s_{\max}}M_{\rho^{-1}(u),j}\matB_j)$.
    \item For all $u\in [N]\setminus\rho([\ell])$, it sets $\matQ_u\leftarrow \matU_{u}-\matD_{u}$. 
\item It computes $\matC\leftarrow \Com(\pp,\matU)$ and sets $\matA\leftarrow \matA'-\matC\in \Z_q^{n\times m}$.
   \item The challenger sets the public key $$\begin{aligned}\pk=(\pp,n,m,q,\sigma,\chi,\chi_1,\chi_{s},\matA,\{\matB_{i}\}_{i\in\{2,\ldots,s_{\max}\}},\{\matD_u\}_{u\in [N]},\{\matQ_u\}_{u\in [N]},\vecy)\end{aligned}$$ and sends it to $\mathcal{A}$.
   \end{enumerate}

   \iitem{Query Phase} To respond to a secret-key query for an attribute set
$U\subseteq\mathbb{U}=[N]$, the adversary $\mathcal{B}$ proceeds
exactly as in Games~$\H_8$ and~$\H_9$.
   
    \iitem{Challenge Phase} Precisely,
    \begin{enumerate}
    \item It samples $\msg\rand\{0,1\}$,  $\vece_{2}\leftarrow D_{\Z,\chi_{s}}^{m},e_3\leftarrow D_{\Z,\chi_s}$.
    \item It samples $\vecc_1\leftarrow \vecb_0$ and computes   $\vecc_{2}^\top\leftarrow \vecc_1^\top\matR+\vece_2^\top$, and $c_3\leftarrow s_0+\msg\cdot\lceil q/2\rfloor+e_3$.
    \item Finally, it provides the challenge ciphertext  $\ct\leftarrow(\vecc_{1},\vecc_{2},c_3)$ to the adversary $\mathcal{A}$.
    \end{enumerate}

    \iitem{Guess Phase} The algorithm $\mathcal{B}$ outputs whatever the adversary $\mathcal{A}$ outputs.

    It is straightforward that the algorithm $\mathcal{B}$ simulates either the game $\H_8$ or $\H_9$ depending on whether $s_0=
        \vecb_0^\top\vecr$ for some $\vecr\rand\{-1,1\}^{m}$ or $s_0\rand\Z_q$. We know $\mathcal{B}$ simulates either Game $\H_8$ or Game $\H_9$ up to negligible statistical distance. Therefore, the advantage of the algorithm $\mathcal{B}$ of breaking leftover hash lemma has negligible difference from the advantage of the adversary $\mathcal{A}$ of distinguishing between Game $\H_8$ and $\H_9$.

    Hence $\H_8\overset{s}{\approx}\H_9$.
\end{proof}
\end{proof}

\subsection{Broadcast Encryption}
As discussed in \cite{BV22,Wee22,Wee24}, $\CPABE$ for circuits gives a broadcast encryption for $N=\ell$ users. In the broadcast encryption, we can use a circuit of size $O(N\log N)$ and depth $O(\log N)$ to check the membership of the users. More precisely, 
\begin{corollary}[Broadcast encryption]
Under the $\poly(\lambda)$-succinct $\LWE$ assumption, we have a broadcast encryption scheme for $N$ users with parameters 
$$|\pk|=N\cdot \poly(\lambda,\log N),\quad |\sk|=\poly (\lambda,\log N),\quad |\ct|=\poly(\lambda).$$
\end{corollary}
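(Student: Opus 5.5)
The plan is to instantiate Construction~\ref{con1} with an access policy that encodes set membership, and then read off the parameters from the analysis of Construction~\ref{con1}.

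\textbf{Encoding the policy.} Take attribute length $\ell=\lceil\log N\rceil$, so that each user $j\in[N]$ is identified with its binary index $\vecx_j\in\{0,1\}^{\ell}$. Following the non-monotone convention of the $\{0,1\}$-$\LSSS$ lemma of \cite{DKW21a}, the attribute universe is $\mathbb{U}=[2\ell]$, with one attribute for each literal $x_k$ and $\neg x_k$. User $j$ receives $\sk_{U_j}$ for the set $U_j=\hat{S}_j$ of the $\ell$ literals satisfied by $\vecx_j$. To broadcast to a set $S\subseteq[N]$, encrypt under the policy $f_S(\vecx)=[\vecx\in S]$, written as the formula $\bigvee_{j\in S}\bigwedge_{k}(\text{literal}_{j,k})$. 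This is an $\NC^1$ circuit of size $O(N\log N)$ and depth $O(\log N)$.

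\textbf{Obtaining the $\LSSS$.} Convert $f_S$ into $(\matM,\rho)$ via the $\{0,1\}$-$\LSSS$ lemma of \cite{DKW21a}, giving $s_{\max}=O(N\log N)$. Construction~\ref{con1} assumes $\rho$ is injective, but a formula reuses each literal many times. So first replicate attributes: give each leaf occurrence its own copy of the literal, and hand user $j$ keys for all copies of its satisfied literals.

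\textbf{Main obstacle.} Replication blows the secret key up to the number of leaves, $O(N\log N)$ copies, which would break the $\poly(\lambda,\log N)$ bound on $|\sk|$. I would try two remedies, in this order.

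\begin{itemize}
\item First, avoid replication altogether. Decryption uses only the rows $\rho^{-1}(U)$, and security only needs those rows to be linearly independent and unauthorized, which is part~2 of the lemma of \cite{DKW21a}. One could check whether the keygen equation $\matB\veck_u=(\matA\matV_u+\matQ_u)\vect$ and the $\H_5\to\H_6$ trapdoor argument go through when $\matU_u$ sums contributions from all rows labeled $u$.
\item Failing that, use a membership circuit whose literals have bounded fan-out. Compare $\vecx$ against a committed sorted list using a balanced tree, so that each leaf position is read once per level. The attribute count then becomes $\ell\cdot\poly(\log N)$.
\end{itemize}

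\textbf{Reading off the sizes.} Given either remedy, the parameter analysis of Construction~\ref{con1} yields:
\begin{itemize}
\item $|\ct|=\poly(\lambda)$, since $(\vecc_1,\vecc_2,c_3)$ consists of $2m+1$ elements of $\Z_q$;
\item $|\sk|=|U_j|\cdot\poly(\lambda)=\poly(\lambda,\log N)$;
\item $|\pk|=O(s_{\max}+N')\cdot\poly(\lambda)=N\cdot\poly(\lambda,\log N)$, where $N'$ denotes the size of the (possibly replicated) attribute universe.
\end{itemize}

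Finally, the modulus condition $q>N\cdot\poly(\lambda,\log N,m)\cdot\sigma(\chi+\chi_s)$ contributes only $\log q=\poly(\lambda)+O(\log N)$. This keeps the element size within the stated bounds, so the ciphertext remains $\poly(\lambda)$ provided $N\le 2^{\poly(\lambda)}$. Correctness follows from the correctness theorem, and security follows from the selective-security theorem under the $2m^2$-succinct $\LWE$ assumption with $m=\poly(\lambda)$.
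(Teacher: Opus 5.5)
You correctly spotted the real issue: Construction~\ref{con1} only handles injective $\rho$, while any formula for $[\vecx\in S]$ over $\lceil\log N\rceil$-bit identities reads each literal many times. But the proposal never resolves it. The size accounting is stated ``given either remedy'', and neither remedy works. (The paper's own justification is only a one-sentence appeal to the membership-circuit reduction of \cite{BV22,Wee22} and does not discuss injectivity, so you have to supply this step yourself.)

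\textbf{Remedy 1} fails already at correctness. There is one verification block $\matV_u$ and one key equation $\matB\veck_u=(\matA\matV_u+\matQ_u)\vect$ per attribute, so decryption for $u$ yields only $\vecs^\top(\matU_u-\matQ_u)\vect$. Summing all rows labeled $u$ into $\matU_u$ therefore just implements the $\LSSS$ whose $u$-row is $\sum_{i\in\rho^{-1}(u)}\matM_i$, which is a different access structure. For example, take $S=\{00,01\}$ and the standard $\{0,1\}$-$\LSSS$ for $(\neg x_1\wedge\neg x_2)\vee(\neg x_1\wedge x_2)$. The two $\neg x_1$ rows $(1,1,0)$ and $(1,0,1)$ merge into $(2,1,1)$. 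Then neither user $00$ (rows $(2,1,1),(0,-1,0)$) nor user $01$ (rows $(2,1,1),(0,0,-1)$) can reach $(1,0,0)$.

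\textbf{Remedy 2} is impossible. There are $2^N$ sets $S$, but only $2^{O(L\log\log N)}$ formulas with $L$ leaves over $2\lceil\log N\rceil$ literals. So most $S$ need $\Omega(N/\log\log N)$ leaves, i.e.\ rows, and injectivity forces at least that many attributes, not $\ell\cdot\poly(\log N)$.

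\textbf{The missing idea.} The key blow-up comes only from handing user $j$ \emph{every} copy of every literal it satisfies. What you need is an encoding in which every policy is read-once and each user owns only the attributes of its own clause. Indexing copies by the clause they occur in achieves this: attribute $(k,b,j)$ appears only in the clause of user $j$, who holds $(k,j_k,j)$ for $k\in[\lceil\log N\rceil]$.

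This collapses to the simplest encoding:
\begin{itemize}
\item take $\mathbb{U}=[N]$ and give user $j$ the key $\sk_{\{j\}}$;
\item encrypt to $S$ under $\bigvee_{j\in S}j$, i.e.\ $\matM$ has $|S|$ rows equal to $(1,0,\ldots,0)$ and $\rho$ is an injective enumeration of $S$.
\end{itemize}
A user $j\in S$ decrypts with the single coefficient $w=1$. For $j\notin S$ we have $\rho^{-1}(\{j\})=\emptyset$, so the correctness and selective-security theorems apply verbatim. Collusions of users outside $S$ are covered by the unbounded unauthorized key queries.

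Your read-off then goes through:
\begin{itemize}
\item $\sk$ is just $(\veck_j,\vect)$, so $|\sk|=O(m\log\chi_s)=\poly(\lambda,\log N)$;
\item $|\pk|=O(N+s_{\max})\cdot\poly(\lambda,\log N)$ with $s_{\max}=O(1)$;
\item $|\ct|=(2m+1)\lceil\log q\rceil$.
\end{itemize}
The $O(N\log N)$-size membership circuit on $\lceil\log N\rceil$-bit identities is not needed. As you observed, it does not fit the injective-$\rho$ construction without the key blow-up.
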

\bibliographystyle{alpha}
\bibliography{reference}
\end{document}